\def\be{\begin{equation}}
\def\ee{\end{equation}}
\def\bq{\begin{eqnarray}}
\def\eq{\end{eqnarray}}
\def\beq{\begin{eqnarray*}}
\def\eeq{\end{eqnarray*}}
\newtheorem {theorem} {Theorem}
\newtheorem {proposition} [theorem]{Proposition}
\newtheorem {corollary} [theorem]{Corollary}
\newtheorem {lemma}  [theorem]{Lemma}
\newtheorem {remark} [theorem]{Remark}
\newtheorem {conjecture} [theorem]{Conjecture}
\newcommand{\R}{\mathbb{R}}
\newcommand{\MM}{\mathcal{M}}
\begin{document}

\title[A new approach  to the vakonomic mechanics ]
{A new approach  to the vakonomic mechanics}

\date{}
\dedicatory{}
\author[J. Llibre, R. Ram\'{\i}rez and N. Sadovskaia]
{Jaume Llibre$^1$, Rafael Ram\'{\i}rez$^2$ and Natalia
Sadovskaia$^3$}

\address{$^1$ Departament de Matem\`{a}tiques,
Universitat Aut\`{o}noma de Barcelona, 08193 Bellaterra,
Barcelona, Catalonia, Spain.} \email{jllibre@mat.uab.cat}

\address{$^2$ Departament d'Enginyeria Inform\`{a}tica i Matem\`{a}tiques,
Universitat Rovira i Virgili, Avinguda dels Pa\"{\i}sos Catalans
26, 43007 Tarragona, Catalonia, Spain.}
\email{rafaelorlando.ramirez@urv.cat}

\address{$^3$ Departament de Matem\`{a}tica Aplicada II, Universitat
Polit\`{e}cnica de Catalunya, C. Pau Gargallo 5, 08028 Barcelona,
Catalonia, Spain.} \email{natalia.sadovskaia@upc.edu}

\subjclass[2010]{Primary 14P25, 34C05, 34A34.}

\keywords{variational principle, generalized Hamiltonian principle,
d'Alembert--Lagrange principle,  constrained Lagrangian system,
transpositional relations, vakonomic mechanic, equation of motion,
Vorones system, Chapligyn system, Newton model.}

\maketitle
\begin{abstract}
The aim of this paper is to show that the Lagrange--d'Alembert and
its equivalent the Gauss and Appel principle are not the only way to
deduce the equations of motion of the nonholonomic systems. Instead
of them, here we consider  the generalization of  the Hamiltonian
principle for nonholonomic systems with nonzero transpositional
relations.

 By applying this variational principle which takes into the account transpositional relations different
  from the classical ones we deduce the
equations of motion for the nonholonomic systems with constraints
that in general are nonlinear in the velocity. These equations of
motion coincide, except perhaps in a zero Lebesgue measure set, with
the classical differential equations deduced with
d'Alembert--Lagrange principle.

\smallskip

 We provide a new point of view on the transpositional relations for the constrained mechanical
systems: the  virtual variations
 can produce zero or non--zero transpositional relations. In
 particular the independent virtual variations can produce  non--zero transpositional
 relations.   For the unconstrained  mechanical systems the virtual
 variations always produce zero transpositional relations.

\smallskip

We conjecture that the existence of the nonlinear constraints in the
velocity must be sought outside of the Newtonian model.

\smallskip

 All our results are illustrated with precise examples.

\end{abstract}

\section{Introduction}

The history of nonholonomic mechanical systems is long and complex
and goes back to the 19 century, with important contribution by
Hertz \cite{Hertz} (1894) , Ferrers \cite{Ferrers} (1871),  Vierkandt \cite{Vierkandt} (1892) and
Chaplygin \cite{Chaplygin} (1897).

The nonholonomic mechanic is a remarkable generalization  of the
classical Lagrangian and Hamiltonian mechanic. The birth of the
theory of dynamics of nonholonomic systems occurred when
Lagrangian-Euler formalism was found  to be inapplicable for
studying the  simple mechanical problem of a rigid body rolling
without slipping on a plane.

\smallskip

 A long period of time has been needed for finding the
correct equations of motion of the nonholonomic mechanical systems
and the study of the deeper questions associated with the geometry
and the analysis of these equations. In particular the integration
theory of equations of motion for nonholonomic mechanical systems
is not so complete as in the case of holonomic systems. This is
due to several reasons. First, the equations of motion of
nonholonomic systems have more complex structure than the Lagrange
one, which describes the behavior of holonomic systems. Indeed, a
holonomic systems can be described by a unique function of its
state and time, the Lagrangian function. For the nonholonomic
systems this is not possible.  Second, the equations of motion of
nonholonomic systems in general have no invariant measure, as they
have the equations of motion of holonomic systems (see
\cite{Kozlov11,Leon,LlRS111,Vershik}).

\smallskip

 One of the most important
directions in the development of the nonholonomic mechanics  is
the research   connected with the general mathematical formalism
to describe the behavior of such systems which differs from the
Lagrangian and Hamiltonian formalism. The main problem with the
equations of motion of the nonholonomic mechanics has been
centered on whether or not these equations can be derived from the
Hamiltonian principle  in the usual sense, such as for the
holonomic systems (see for instance \cite{Marle}). But there is
not doubt that the correct equations of motion for nonholonomic
systems are given by the d'Alembert--Lagrange principle.

The general understanding of inapplicability of Lagrange equations
and variational Hamiltonian principles to the nonholonomic systems
is due to Hertz, who expressed it in his fundamental work {\it Die
Prinzipien der Mechanik in neuem Zusammenhaange dargestellt}
\cite{Hertz}.  Hertz's ideas were developed by Poincar{\'e} in
\cite{Poincare}. At the same time various aspects of nonholonomic
systems need to be studied such as

\smallskip

(a) The problem of the realization of nonholonomic constraints (see
for instance \cite{Kozlov10, Kozlov001}).

\smallskip

(b) The stability of nonholonomic systems (see for instance \cite{
NF,Ruban}).

\smallskip

(c)  The role of the so called {\it transpositional relations} (see
\cite{Kirgetov,Marushin,NF,Ramirez})
 \begin{equation}\label{13}
\delta\dfrac{d\textbf{x}}{dt}-\dfrac{d}{dt}\delta{\textbf{x}}=\left(\delta\dfrac{dx_1}{dt}-
\dfrac{d}{dt}\delta{x_1},\ldots,\delta\dfrac{dx_N}{dt}-\dfrac{d}{dt}\delta{x_N}\right),
\end{equation}
 {where
$\dfrac{d}{dt}$ denotes the differentiation with respect to the
time, $\delta$ is the virtual variation, and
$\textbf{x}=\left(x_1,\ldots,x_N\right)$ is the vector of the
generalized coordinates}.

{Indeed the most general formulation of the Hamiltonian}
principle is the {\it{Hamilton--Suslov principle}}
\begin{equation}\label{Su11}
\displaystyle\int_{t_0}^{t_1}\left(\delta\,{\tilde{L}}-
\displaystyle\sum_{j=1}^N\dfrac{\partial{{\tilde{L}}}}{\partial\dot{x}_j}
\left(\delta\dfrac{dx_j}{dt}-\dfrac{d}{dt}\delta{x_j}\right)\right)dt=0,
\end{equation}
suitable for
constrained and unscontrained Lagrangian systems, where $\tilde{L}$
is the Lagrangian of the mechanical system.. Clearly the
equations of motion obtained from the Hamilton--Suslov principle
depend on the point of view on the transpositional relations. This
fact shows the importance of these relations.

\smallskip

(d)  The relation  between nonholonomic
mechanical systems and {\it vakonomic mechanical systems}.

There was some confusion in the literature between nonholonomic
mechanical systems and variational nonholonomic mechanical systems
also called { vakonomic mechanical systems}. Both kinds of systems
have the same mathematical ``ingredients": a Lagrangian function
and a set of constraints. But the way in which the equations of
motion are derived differs. As we observe the equations of motion
in nonholonomic mechanic are deduced using d'Alembert--Lagrange's
principle. In the case of vakonomic mechanics the equations of
motion are obtained through the application of a constrained
variational principle. The term vakonomic (``variational axiomatic
kind") is due to Kozlov (see \cite{Kozlov1,Kozlov2,Kozlov3}), who
proposed this mechanics as an alternative set of equations of
motion for a constrained Lagrangian systems.

The distinction between the classical differential equations of
motion  and the equations of motion of variational nonholonomic
mechanical systems has a long history going back to the survey
article of Korteweg  (1899) \cite{Korteweg} and discussed in a
more modern context in \cite{Favretti,Kharlamov, Lewis,Zampiery}.
In these papers the authors have discussed the domain of the
vakonomic  and nonholonomic mechanics.
 In the paper {\it{Critics  of some mathematical model to describe the behavior of mechanical
 systems with differential constraints}}  \cite{Kharlamov},
 Kharlamov studied the Kozlov model and
 in a concrete example showed that the subset
 of solutions of the studied nonholonomic
 systems is not included in the set of vakonomic model and proved that the
 principle of determinacy is not valid in the Kozlov
 model. In \cite{Kupka} the authors put in evidence the main differences
 between the d'Alembertian and the vakonomic approaches.  From the results obtained in
 several papers it follows that in general the
 vakonomic model is not applicable to the nonholonomic constrained
 Lagrangian systems.

 The equations of motion for the constrained mechanical systems deduced
by  Kozlov (see for instance \cite{Arnold}) from the Hamiltonian
principle with the Lagrangian
  $L:\mathbb{R}\times{T}\textsc{Q}\times\mathbb{R}^M\longrightarrow\mathbb{R}$
  such that
 $L=L_0-\displaystyle\sum_{j=1}^M\lambda_jL_j,$
where $L_j=0$ for $j=1,\ldots,M<N$ are the given constraints, and
$L_0$ is the classical Lagrangian. These equations are
\begin{equation}
\label{K1}
E_kL={\dfrac{d}{dt}\dfrac{\partial
L}{\partial\dot{x}_k}-\dfrac{\partial
L}{\partial{x}_k}}=0\Longleftrightarrow
E_kL_0=\displaystyle\sum_{j=1}^M\left(\lambda_jE_k\,L_j+
\dfrac{d\lambda_j}{dt}\dfrac{\partial{L_j}}{\partial
\dot{x}_k}\right),
\end{equation}
for $k=1,\ldots,N,$ see for more details \cite{Arnold}. Clearly,
equations \eqref{K1} differ from the classical equations  by the
presence of the terms $\lambda_jE_k\,L_j.$ If the constraints are
integrable, i.e. $L_j=\dfrac{d}{dt}g_j(t,\textbf{x}),$  then the
vakonomic mechanics reduces to the holonomic one.

\smallskip

 In this paper we give a modification of the vakonomic
 mechanics. This modification is valid for the
holonomic and nonholonomic constrained Lagrangian systems. We
apply the generalized constrained Hamiltonian principle with
non--zero transpositional relations.  By applying this constrained variational principle we
deduce the equations of motion for the nonholonomic systems with
constraints which in general are nonlinear in the velocity. These
equations coincide, except perhaps in a zero Lebesgue measure set,
with the classical differential equations deduced from
d'Alembert--Lagrange principle.

\section{Statement of the main results}
 In this paper we solve the following {\it inverse problem of the constrained
Lagrangian systems} (see \cite{LlRS222})

\smallskip

We consider the constrained Lagrangian systems with configuration
space $\textsc{Q}$ and phase space $T\textsc{Q}.$

\smallskip

Let
$
L:\mathbb{R}\times{T\textsc{Q}}\times{\mathbb{R}^M}
\longrightarrow\mathbb{R}
 $
  be a smooth function such that
\begin{equation}\label{21}
{L}\left(t,\textbf{x},\dot{\textbf{x}},\Lambda
\right)=L_0\left(t,\textbf{x},\dot{\textbf{x}}\right)-
\displaystyle\sum_{j=1}^M\lambda_j\,L_j\left(t,\textbf{x},\dot{\textbf{x}}\right)-\displaystyle\sum_{j=M+1}^N
\lambda^0_j L_j\left(t,\textbf{x},\dot{\textbf{x}}
\right),
\end{equation}
where $\Lambda=\left(\lambda_1,\ldots,\lambda_M\right)$ are the
additional coordinates (Lagrange multipliers),
$
L_j:\mathbb{R}\times{T\textsc{Q}}\longrightarrow
\mathbb{R},\quad \left(t,\textbf{x},\dot{\textbf{x}}\right)\longmapsto
\,L_j\left(t,\textbf{x},\dot{\textbf{x}}\right),
$ be smooth functions for $j=0,\ldots,N,$ where $L_0$ is the nonsingular function i.e.
$\det\left(\dfrac{\partial^2
L_0}{\partial{\dot{x}_k}\partial{\dot{x}_j}}\right)\ne{0},$
and
$L_j=0,$ for
$j=1,\ldots,M,$ are the  constraints  satisfying
\begin{equation}\label{TT}
\mbox{rank}\left(\dfrac{\partial(L_1,\ldots,L_M)}
{\partial(\dot{x}_1,\ldots,\dot{x}_N)} \right)=M
\end{equation}
 in all the
points of $ \mathbb{R}\times T\textsc{Q},$ except perhaps in a zero
Lebesgue measure set, $L_j$ and $ \lambda^0_j$ are arbitrary
functions and  constants respectively, for $ j=M+1,\ldots,N$.

\smallskip

 We must determine
the smooth functions
$L_j,$  constants $\lambda^0_j$ for $j=M+1,\ldots,N$ and the matrix $A$ in
such a way that the differential equations describing the behavior
of the constrained Lagrangian systems and obtained from the
 the Hamiltonian principle
\begin{equation}
\label{023}
 \displaystyle
\displaystyle\int_{t_0}^{t_1}\delta\,{L}=
\displaystyle\int_{t_0}^{t_1}\left(\dfrac{\partial
L}{\partial x_j}\delta x_j+\dfrac{\partial L}{\partial
\dot{x}_j} \dfrac{d}{dt}{\delta x}_j+
\displaystyle\sum_{j=1}^N\dfrac{\partial{{L}}}{\partial\dot{x}_j}
\left(\delta\dfrac{dx_j}{dt}-\dfrac{d}{dt}\delta{x_j}\right)\right)dt=0,
\end{equation}
 with transpositional relation given by
 \begin{equation} \label{23}
\delta\dfrac{d\textbf{x}}{dt}-\dfrac{d}{dt}\delta{\textbf{x}}=
A\left(t,\textbf{x},\dot{\textbf{x}},\ddot{\textbf{x}}
\right)\delta{\textbf{x}},
\end{equation}
 where
$A=A\left(t,\textbf{x},\dot{\textbf{x}},\ddot{\textbf{x}}\right)=
\left(A_{\nu\,j}\left(t,\textbf{x},\dot{\textbf{x}},\ddot{\textbf{x}}\right)\right)$
is a $N\times N$  matrix,

\smallskip

We give the solutions of this problem in two steps. First we
obtain the differential equations along the solutions satisfying
\eqref{023}. Second we shall contrast the obtained equations and
classical differential equations which described the  behavior of
the constrained mechanical systems. The solution of this inverse
problem is presented in section 4.

\smallskip

Note that the function $L$ is singular, due to the absence of
$\dot{\lambda}.$

\smallskip
{We observe that} the arbitrariness of the functions $L_j,$ of the
constants $\lambda^0_j$ for $j=M+1,\ldots,N,$ and of the matrix
$A$ will play a fundamental role in the construction of the
mathematical model which we propose in this paper.

\smallskip

Our main results are the following

\begin{theorem}\label{A}
 We assume that $\delta{x_\nu(t)},\quad \nu=1,\ldots, N,$ are arbitrary
  functions defined in the interval $[t_0,\,t_1]$, smooth in the interior
  of $[t_0,\,t_1]$ and  vanishing at its
endpoints, i.e., $\delta{x_\nu}({t_0})=\delta{x_\nu}({t_1})=0.$ If
\eqref{23} holds
 then the path $
\gamma(t)=(x_1(t), \ldots, x_N(t))$ compatible with the constraints
$L_j\left(t,\textbf{x},\dot{\textbf{x}}\right)=0$, for
$j=1,\ldots,M$  satisfies \eqref{023}{ with $L$ given by the formula
}\eqref{21} if and only if it is a solution of the differential
equations
\begin{equation}\label{24}
 D_{\nu}L:= {E_\nu\,L-\displaystyle\sum_{j=1}^N{A_{{\nu}j}
\dfrac{\partial{L}}{\partial{\dot{x}_j}}}}=0,\quad \dfrac{\partial
L}{\partial\lambda_k}=-L_k=0,
\end{equation}
 for $\nu=1,\ldots,N,$ and $k=1,\ldots,M,$ where
 $E_\nu={\dfrac{d}{dt}\dfrac{\partial
}{\partial\dot{x}_\nu}-\dfrac{\partial }{\partial{x}_\nu}}.$ {
System \eqref{24} is  equivalent to the following two differential
 systems}
\begin{equation}
\label{D24}
\begin{array}{rl}
 D_{\nu}L_0=&\displaystyle\sum_{j=1}^M\left(\lambda_{j}D_{\nu}L_{j}+
\dfrac{d\lambda_j}{dt}\dfrac{\partial{L_j}}{\partial{\dot{x}_\nu}}\right)
+\displaystyle\sum_{j=M+1}^N \lambda^0_j D_\nu\,L_j,\quad
L_k=0\Longleftrightarrow\vspace{0.2cm}\\
E_{\nu}L_0=&\displaystyle\sum_{k=1}^NA_{jk}\dfrac{\partial
L_0}{\partial\dot{x}_k}+\sum_{j=1}^M\left(\lambda_{j}D_{\nu}L_{j}+
\dfrac{d\lambda_j}{dt}\dfrac{\partial{L_j}}{\partial{\dot{x}_\nu}}\right)
+\displaystyle\sum_{j=M+1}^N \lambda^0_j D_\nu\,L_j,\quad L_k=0.
\end{array}
\end{equation}
 for $\nu=1,\ldots,N$ and $k=1,\ldots,M.$
\end{theorem}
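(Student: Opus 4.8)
The plan is to derive \eqref{24} directly from \eqref{023} by substituting the transpositional relation \eqref{23} into the integrand and then invoking the fundamental lemma of the calculus of variations, using the boundary conditions $\delta x_\nu(t_0)=\delta x_\nu(t_1)=0$. First I would expand $\delta L$ in \eqref{023} in the usual way, writing $\delta L=\sum_\nu\bigl(\frac{\partial L}{\partial x_\nu}\delta x_\nu+\frac{\partial L}{\partial\dot x_\nu}\frac{d}{dt}\delta x_\nu\bigr)+\sum_k\frac{\partial L}{\partial\lambda_k}\delta\lambda_k$, noting that $\frac{\partial L}{\partial\lambda_k}=-L_k$ because of the form \eqref{21}. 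The term $\frac{\partial L}{\partial\dot x_\nu}\frac{d}{dt}\delta x_\nu$ is integrated by parts; the boundary term vanishes by hypothesis, producing $-\sum_\nu\bigl(\frac{d}{dt}\frac{\partial L}{\partial\dot x_\nu}\bigr)\delta x_\nu$. Combined with the explicit $\sum_\nu\frac{\partial L}{\partial\dot x_\nu}\bigl(\delta\frac{dx_\nu}{dt}-\frac{d}{dt}\delta x_\nu\bigr)$ term already present in \eqref{023}, and after replacing that last bracket by $\sum_j A_{\nu j}\delta x_j$ via \eqref{23}, the whole integrand becomes $\sum_\nu\bigl(-E_\nu L+\sum_j A_{\nu j}\frac{\partial L}{\partial\dot x_j}\bigr)\delta x_\nu-\sum_k L_k\,\delta\lambda_k$, i.e. $-\sum_\nu D_\nu L\,\delta x_\nu-\sum_k L_k\,\delta\lambda_k$.

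Next I would argue that, because the $\delta x_\nu$ and $\delta\lambda_k$ are independent and arbitrary (smooth, compactly supported in $(t_0,t_1)$), the vanishing of $\int_{t_0}^{t_1}\bigl(-\sum_\nu D_\nu L\,\delta x_\nu-\sum_k L_k\,\delta\lambda_k\bigr)dt$ for all such variations forces $D_\nu L=0$ for $\nu=1,\dots,N$ and $L_k=0$ for $k=1,\dots,M$, which is exactly \eqref{24}; conversely, if $\gamma(t)$ solves \eqref{24} then the integrand is identically zero, so \eqref{023} holds. This gives the ``if and only if''. One point to be careful about is the compatibility clause: the path is assumed compatible with $L_j=0$, and the variations $\delta x_\nu$ need not themselves preserve the constraints, so I must make sure the argument does not secretly use constrained variations — in the formulation here the $\delta x_\nu$ are genuinely free, which is the whole point of the ``new approach'', and the constraint equations reappear only through the $\delta\lambda_k$ terms.

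Finally, for the equivalence of \eqref{24} with the two systems in \eqref{D24}, I would simply insert $L=L_0-\sum_{j=1}^M\lambda_j L_j-\sum_{j=M+1}^N\lambda^0_j L_j$ into $D_\nu L=0$ and use the linearity of $E_\nu$ and of the operator $D_\nu=E_\nu-\sum_j A_{\nu j}\frac{\partial}{\partial\dot x_j}$. The only term needing attention is $D_\nu(\lambda_j L_j)$: since $\lambda_j$ depends on $t$ only, $E_\nu(\lambda_j L_j)=\lambda_j E_\nu L_j+\frac{d\lambda_j}{dt}\frac{\partial L_j}{\partial\dot x_\nu}$ (the extra term coming from $\frac{d}{dt}$ hitting $\lambda_j$ inside $\frac{d}{dt}\frac{\partial}{\partial\dot x_\nu}(\lambda_j L_j)$), while $\sum_k A_{\nu k}\frac{\partial}{\partial\dot x_k}(\lambda_j L_j)=\lambda_j\sum_k A_{\nu k}\frac{\partial L_j}{\partial\dot x_k}$, so altogether $D_\nu(\lambda_j L_j)=\lambda_j D_\nu L_j+\frac{d\lambda_j}{dt}\frac{\partial L_j}{\partial\dot x_\nu}$. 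Rearranging gives the first line of \eqref{D24}, and expanding $D_\nu L_0=E_\nu L_0-\sum_k A_{\nu k}\frac{\partial L_0}{\partial\dot x_k}$ gives the second (the index on $A$ should read $A_{\nu k}$). I do not expect a serious obstacle here; the main thing to get right is the bookkeeping of the $\frac{d\lambda_j}{dt}$ term and the sign conventions, and the conceptual care needed to keep the variations free rather than constrained.
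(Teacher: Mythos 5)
Your proposal is correct and follows essentially the same route as the paper's proof: expand $\delta L$ in \eqref{023}, use $\partial L/\partial\lambda_k=-L_k$, integrate by parts with vanishing boundary terms, substitute the transpositional relation \eqref{23}, and invoke the fundamental lemma for the converse, then obtain \eqref{D24} from the product-rule computation $D_\nu(\lambda_j L_j)=\lambda_j D_\nu L_j+\dot\lambda_j\,\partial L_j/\partial\dot x_\nu$. The only cosmetic differences are that the paper restricts to the constraint manifold $L_\nu=0$ directly rather than deriving $L_k=0$ from the $\delta\lambda_k$ variation, and proves the reciprocal with an explicit bump-function choice of $\delta x_k$; you also correctly flag the index misprint $A_{jk}$ for $A_{\nu k}$ in \eqref{D24}.
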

\begin{theorem}\label{A1}
Using the notation of Theorem \ref{A} let
\begin{equation}\label{b1}
L=L\left(t,\textbf{x},\dot{\textbf{x}},\Lambda\right)=L_0\left(t,\textbf{x},\dot{\textbf{x}}\right)-
\displaystyle\sum_{j=1}^M\lambda_j\,L_j\left(t,\textbf{x},
\dot{\textbf{x}}\right)-\displaystyle\sum_{j=M+1}^N \lambda^0_j
L_j\left(t,\textbf{x}, \dot{\textbf{x}}\right) \end{equation} be
the Lagrangian and let
$L_j\left(t,\textbf{x},\dot{\textbf{x}}\right)=0$ be the
independent constraints  for $j=1,\ldots,M<N,$ and let
$\lambda^0_k$ {be the arbitrary constants  for} $k=M+1,\ldots,N,$
$L_k:\mathbb{R}\times\,T\textsc{Q}\longrightarrow \mathbb{R}$ for
$k=M+1,\ldots,N$ arbitrary functions such that
\[
|W_1|=\det{W_1}=\det{\left(\dfrac{\partial(L_1,\ldots,L_N)}
{\partial(\dot{x}_1,\ldots,\dot{x}_N)} \right)}\ne 0,
\]
except perhaps in a zero Lebesgue measure set $|W_1|=0$. We
determine the matrix $A$ satisfying
 \begin{equation}\label{a1}
 W_1A=\Omega_1:=\left(\begin{array}{ccc}
E_1L_1&\hdots&E_NL_1\\
\vdots&\hdots&\vdots\\
\vdots&\hdots&\vdots\\
E_1L_{N}&\hdots&E_NL_{N}\\
\end{array}\right).
\end{equation}
Then the differential equations \eqref{D24} become
\begin{equation}
\label{210}
\begin{array}{rl}
D_{\nu}L_0=&\displaystyle\sum_{\alpha=1}^M\dot{\lambda}_\alpha\dfrac{\partial{L_\alpha}}
{\partial{\dot{x}_\nu}}\quad\mbox{for}\quad
\nu=1,\ldots,N \vspace{0.30cm}\\
\Longleftrightarrow &\dfrac{d}{dt}\dfrac{\partial L_0
}{\partial\dot{\textbf{x}}}-\dfrac{\partial L_0
}{\partial{\textbf{x}}}=\left(W^{-1}_1\Omega_1\right)^T
\dfrac{\partial{L_0}}{\partial{\dot{\textbf{x}}}}+W^T_1\dfrac{d\lambda}{dt},
\end{array}
\end{equation}
where  $\dfrac{\partial
}{\partial\dot{\textbf{x}}}=\left(\dfrac{\partial
}{\partial\dot{x_1}},\ldots,\dfrac{\partial
}{\partial\dot{x_N}}\right)^T,\,\,\dfrac{\partial
}{\partial{\textbf{x}}}=\left(\dfrac{\partial
}{\partial{x_1}},\ldots,\dfrac{\partial }{\partial{x_N}}\right)^T,$
$\lambda=\left(\lambda_1,\ldots,\lambda_M,0,\ldots,0\right)^T,$ and
the transpositional relation \eqref{23} becomes
\begin{equation}
\label{211}
\delta\dfrac{d\textbf{x}}{dt}-\dfrac{d}{dt}\delta{\textbf{x}}=
\left(W^{-1}_1\Omega_1\right)\delta{\textbf{x}}.
\end{equation}
\end{theorem}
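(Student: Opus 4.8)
The plan is to derive Theorem \ref{A1} as a specialization of Theorem \ref{A} by exploiting the specific choice of the constraint functions $L_{M+1},\dots,L_N$ and the matrix $A$ determined by \eqref{a1}. First I would observe that since $|W_1|\neq 0$ (outside a zero-measure set), equation \eqref{a1} uniquely determines $A = W_1^{-1}\Omega_1$, and the transpositional relation \eqref{23} immediately becomes \eqref{211}. So the only real content is showing that the first line of \eqref{D24}, together with this choice of $A$, collapses to the left-hand equation of \eqref{210}, namely $D_\nu L_0 = \sum_{\alpha=1}^M \dot\lambda_\alpha \,\partial L_\alpha/\partial\dot x_\nu$.

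The key step is to compute $D_\nu L_j$ for each constraint index $j$. By definition $D_\nu L_j = E_\nu L_j - \sum_{k=1}^N A_{\nu k}\,\partial L_j/\partial\dot x_k$. Now the crucial point is that the $j$-th row of the matrix product $W_1 A = \Omega_1$ reads precisely $\sum_{k=1}^N (W_1)_{jk} A_{k\nu} = E_\nu L_j$, where $(W_1)_{jk} = \partial L_j/\partial\dot x_k$. I would need to be careful about the transpose conventions here: $\Omega_1$ as written has $(\Omega_1)_{j\nu} = E_\nu L_j$, and $W_1$ has entries $\partial L_j/\partial \dot x_k$, so $(W_1 A)_{j\nu} = \sum_k (\partial L_j/\partial\dot x_k) A_{k\nu}$. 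Matching this against $D_\nu L_j$ requires relating $\sum_k A_{\nu k}\,\partial L_j/\partial\dot x_k$ to $\sum_k (\partial L_j/\partial \dot x_k) A_{k\nu}$ — i.e. the statement is really about $A$ versus $A^T$, which is why \eqref{210} features $(W_1^{-1}\Omega_1)^T$ while \eqref{211} features $W_1^{-1}\Omega_1$ without transpose. Once the bookkeeping is straight, the identity $\sum_k (\partial L_j/\partial \dot x_k) A_{k\nu} = E_\nu L_j$ gives $D_\nu L_j = 0$ for \emph{all} $j=1,\dots,N$, in particular for $j=1,\dots,M$ and for $j=M+1,\dots,N$.

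With $D_\nu L_j = 0$ for $j=1,\dots,M$ and $j=M+1,\dots,N$, the first line of \eqref{D24} reduces: the terms $\lambda_j D_\nu L_j$ vanish, the terms $\lambda_j^0 D_\nu L_j$ vanish, and only $D_\nu L_0 = \sum_{j=1}^M \dot\lambda_j\,\partial L_j/\partial\dot x_\nu$ survives, which is exactly the first equation of \eqref{210}. To obtain the equivalent Euler--Lagrange form on the second line of \eqref{210}, I would expand $D_\nu L_0 = E_\nu L_0 - \sum_k A_{\nu k}\,\partial L_0/\partial\dot x_k$, substitute $A = W_1^{-1}\Omega_1$, move the $A$-term to the right side, and rewrite $\sum_{j=1}^M \dot\lambda_j\,\partial L_j/\partial\dot x_\nu$ in vector form as $W_1^T \,d\lambda/dt$ using $\lambda = (\lambda_1,\dots,\lambda_M,0,\dots,0)^T$ and the definition of $W_1$; the transpose on $(W_1^{-1}\Omega_1)^T$ appears because $E_\nu L_0$ indexed by $\nu$ corresponds to the $\nu$-th component of a column vector while the rows of $\Omega_1$ are indexed by the constraint label.

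The main obstacle I anticipate is purely the index/transpose bookkeeping: keeping straight whether a given matrix acts with its rows indexed by the coordinate label $\nu$ or by the constraint label $j$, and correctly tracking where a transpose is needed, so that \eqref{210} and \eqref{211} come out with transposes in exactly the places shown. There is no deep analytic difficulty — the result is essentially a linear-algebra reorganization of \eqref{D24} once one notices that the defining relation \eqref{a1} for $A$ is precisely what is needed to kill every $D_\nu L_j$ term.
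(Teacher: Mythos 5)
Your proposal is correct and follows essentially the same route as the paper: read $W_1A=\Omega_1$ componentwise as $D_\nu L_j=0$ for all $j=1,\ldots,N$, substitute into \eqref{D24} so that every $\lambda_j D_\nu L_j$ and $\lambda_j^0 D_\nu L_j$ term dies, and then solve $A=W_1^{-1}\Omega_1$ to get \eqref{211} and the Euler--Lagrange form of \eqref{210}. Your attention to the $A$ versus $A^T$ bookkeeping is in fact more careful than the paper's own proof, whose equation for $D_kL_\alpha=0$ silently uses the index convention of \eqref{ff1}--\eqref{ValF} rather than the one literally written in \eqref{24}, which is precisely the point you flag.
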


\begin{theorem}\label{A2}
Using the notation of Theorem \ref{A} let
\begin{equation}\label{b2}
L\left(t,\textbf{x},\dot{\textbf{x}},\Lambda\right)=
L_0\left(t,\textbf{x},\dot{\textbf{x}}\right)-
\displaystyle\sum_{j=1}^M\lambda_j\,L_j\left(t,\textbf{x},
\dot{\textbf{x}}\right)-\displaystyle\sum_{j=M+1}^{N-1}
\lambda^0_j L_j\left(t,\textbf{x}, \dot{\textbf{x}}\right)
\end{equation} be the Lagrangian and
$L_j\left(t,\textbf{x},\dot{\textbf{x}}\right)=0$ be the
independent constraints for $j=1,\ldots,M<N,$ and let
 $\lambda^0_j$ {be arbitrary constants,
for} $j=M+1,\ldots,N-1$ and $\lambda^0_N=0,$
$L_j:\mathbb{R}\times\,T\textsc{Q}\longrightarrow \mathbb{R}$ for
$j=M+1,\ldots,N-1$ arbitrary functions,   and $L_N=L_0$ such that
\[
|W_2|=\det{W_2}=\det{\left(\dfrac{\partial(L_1,\ldots,L_{N-1},L_0)}
{\partial(\dot{x}_1,\ldots,\dot{x}_N)} \right)}\ne 0,
\]
except perhaps in a zero Lebesgue measure set $|W_2|=0$. We
determine the matrix $A$ satisfying
 \begin{equation}\label{a2}
 W_2A=\Omega_2:=\left(\begin{array}{ccc}
E_1L_1&\hdots&E_NL_1\\
\vdots&\hdots&\vdots\\
E_1L_{N-1}&\hdots&E_NL_{N-1}\\
0&\hdots&0\\
\end{array}\right).
\end{equation}
Then the differential equations \eqref{D24} become
\begin{equation}
\label{0210}
 \dfrac{d}{dt}\dfrac{\partial L_0
}{\partial\dot{\textbf{x}}}-\dfrac{\partial L_0
}{\partial{\textbf{x}}}=W^T_2\dfrac{d}{dt}\tilde{\lambda},
\end{equation}
 where
$\lambda:=
\tilde{\lambda}=\left(\tilde{\lambda}_1,\ldots,\tilde{\lambda}_M,\,
0,\ldots,0\right)^T, $ and the transpositional relation \eqref{23}
becomes
\begin{equation}
\label{0211}
\delta\dfrac{d\textbf{x}}{dt}-\dfrac{d}{dt}\delta{\textbf{x}}=
\left(W^{-1}_2\Omega_2\right)\delta{\textbf{x}},
\end{equation}
\end{theorem}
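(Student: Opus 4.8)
The plan is to deduce Theorem \ref{A2} from the general equations of motion \eqref{D24} provided by Theorem \ref{A}, by specializing them to the data of the statement and exploiting the block structure of $W_2$ and $\Omega_2$. First I would observe that the Lagrangian \eqref{b2} is exactly \eqref{21} in the particular case $\lambda^0_N=0$ and $L_N=L_0$, and that $\det W_2\ne 0$ forces $\mathrm{rank}\bigl(\partial(L_1,\dots,L_M)/\partial(\dot x_1,\dots,\dot x_N)\bigr)=M$, so hypothesis \eqref{TT} is satisfied. Hence on the open full-measure set where $|W_2|\ne 0$ the matrix $A=W_2^{-1}\Omega_2$ is well defined, it prescribes the transpositional relation \eqref{23}, and Theorem \ref{A} applies and yields \eqref{D24}.

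Next I would compute $D_\nu L_j$ for $j=1,\dots,N-1$ and $D_\nu L_0$ for the choice $W_2A=\Omega_2$, by reading that matrix identity row by row (the computation is of the same type as the one in the proof of Theorem \ref{A1}). For $1\le j\le N-1$ the $j$-th row gives, for each $\nu$, the identity $\sum_k\frac{\partial L_j}{\partial\dot x_k}A_{k\nu}=E_\nu L_j$ (this being the $(j,\nu)$ entry of $W_2A=\Omega_2$); but the left-hand side is precisely the term subtracted from $E_\nu L_j$ in the definition of $D_\nu L_j$ (cf.\ \eqref{24}), so $D_\nu L_j=0$ for $j=1,\dots,N-1$. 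The last row of $W_2A=\Omega_2$ involves $L_N=L_0$ (so that the last row of $W_2$ is $\partial L_0/\partial\dot{\textbf{x}}$) together with the vanishing last row of $\Omega_2$; it reads $\sum_k\frac{\partial L_0}{\partial\dot x_k}A_{k\nu}=0$, whence $D_\nu L_0=E_\nu L_0$, with no correction term. This is exactly where the data of Theorem \ref{A2} depart from those of Theorem \ref{A1}: forcing the last row of $\Omega_2$ to vanish while taking $L_N=L_0$ is what kills the extra term $(W_1^{-1}\Omega_1)^T\partial L_0/\partial\dot{\textbf{x}}$ appearing in \eqref{210}. I expect this linear-algebra bookkeeping -- keeping the transposes straight and confirming that $D_\nu L_0=E_\nu L_0$ really carries no residual term -- to be the only real obstacle.

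Finally I would substitute these identities into \eqref{D24}. Because $M<N$, all indices $1\le j\le M$ and $M+1\le j\le N-1$ satisfy $j\le N-1$, so every summand $\lambda_j D_\nu L_j$ and $\lambda^0_j D_\nu L_j$ vanishes; the $j=N$ contribution vanishes because $\lambda^0_N=0$; and $D_\nu L_0=E_\nu L_0$ on the left. Thus \eqref{D24} collapses to
\[
E_\nu L_0=\sum_{j=1}^{M}\dot\lambda_j\,\frac{\partial L_j}{\partial\dot x_\nu},\qquad L_k=0,\qquad \nu=1,\dots,N,\ \ k=1,\dots,M.
\]
In vector form the right-hand side is $\sum_{j=1}^M\dot\lambda_j\,\partial L_j/\partial\dot{\textbf{x}}$, and since $\partial L_j/\partial\dot{\textbf{x}}$ is the $j$-th column of $W_2^T$ this equals $W_2^T\frac{d}{dt}\tilde\lambda$ with $\tilde\lambda=(\tilde\lambda_1,\dots,\tilde\lambda_M,0,\dots,0)^T$, the columns of $W_2^T$ with index $M+1,\dots,N$ being annihilated by the zero entries of $\tilde\lambda$. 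This is \eqref{0210}, and \eqref{0211} is then immediate, being \eqref{23} evaluated at $A=W_2^{-1}\Omega_2$. Everything above is valid outside the Lebesgue-null set where $|W_2|=0$ -- the only locus where $A$ is undefined -- which would complete the proof.
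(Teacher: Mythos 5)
Your proposal is correct and follows essentially the same route as the paper: read the matrix identity $W_2A=\Omega_2$ row by row to get $D_\nu L_j=0$ for $j=1,\dots,N-1$ and $\sum_k A_{k\nu}\,\partial L_0/\partial\dot x_k=0$ from the zero last row (so no residual term survives for $L_0$), then substitute into \eqref{D24} using $\lambda^0_N=0$, and finally invert $W_2$ to obtain \eqref{0211}. The only additions beyond the paper's argument are the (harmless) checks that \eqref{b2} is a special case of \eqref{21} and that $|W_2|\ne 0$ implies the rank condition \eqref{TT}.
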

The proofs  of Theorems \ref{A},\, \ref{A1} and \ref{A2} are given
in section 5.

\smallskip

\begin{theorem}\label{A0}
Under the assumptions of Theorem \ref{A1} and assuming that
 \[\begin{array}{rl}
 x_\alpha=&x_\alpha,\quad
x_{\beta}=y_\beta\quad \textbf{x}=\left(x_1,\ldots,x_{s_1}\right)\quad
\textbf{y}=\left(y_1,\ldots,y_{s_2}\right),\vspace{0.2cm}\\
L_\alpha=&\dot{x}_\alpha-\Phi_\alpha\left(\textbf{x},\textbf{y},\dot{\textbf{x}},\,\dot{\textbf{y}}\right)=0,\quad
L_\beta=\dot{y}_\beta,
\end{array}
\]
{for} $\alpha=1,\ldots,s_1=M$ and $\beta=s_1+1,\ldots,s_1+s_2=N.$

\smallskip

Then $|W_1|=1$ and the differential equations \eqref{210} take the form
\begin{equation}\label{48}
\begin{array}{rl}
 E_jL_0=&\displaystyle\sum_{\alpha=1}^{s_1}\left(E_j L_\alpha \dfrac{\partial
L_{0}}{\partial\dot{x}_\alpha}\right)+\dot{\lambda}_j\quad
j=1,\ldots,s_1,\\
 E_{k}L_0=&\displaystyle\sum_{\alpha=1}^{s_1}\left(E_k L_\alpha\, \dfrac{\partial
L_{0}}{\partial\dot{x}_\alpha}+\dot{\lambda}_\alpha
\dfrac{\partial L_\alpha}{\partial\dot{y}_k}\right)\quad
k=1,\ldots,s_2.
\end{array}
\end{equation}
or, equivalently (excluding the Lagrange multipliers)
\begin{equation}\label{Pa3}
 E_{k}L_0=\displaystyle\sum_{\alpha=1}^{s_1}\left(E_k L_\alpha\, \dfrac{\partial
L_{0}}{\partial\dot{x}_\alpha}+\left(
 E_\alpha L_0-\displaystyle\sum_{\beta=1}^{s_1}\left(E_\alpha L_\beta \dfrac{\partial
L_{0}}{\partial\dot{x}_\beta}\right)\right)
\dfrac{\partial\,L_\alpha}{\partial\dot{y}_k}\right),\quad
k=1,\ldots,s_2.
\end{equation}
In particular if we choose
$L_0=\tilde{L}\left(\textbf{x},\textbf{y},\dot{\textbf{x}},\dot{\textbf{y}}\right)-
\tilde{L}\left(\textbf{x},\textbf{y},\Phi,\dot{\textbf{y}}\right)
=\tilde{L}-L^*,$ where $
\Phi=\left(\Phi_1,\ldots,\Phi_{s_1}\right),$  then \eqref{Pa3} holds
if
\[ E_{k}\tilde{L}=\displaystyle\sum_{\alpha=1}^{s_1}E_\alpha\tilde{L}
\dfrac{\partial\,L_\alpha}{\partial\dot{y}_k},\quad
k=1,\ldots,s_2,
\]
and
\begin{equation}\label{C48}
E_{k}(L^*)=\displaystyle\sum_{\alpha=1}^{s_1}\left(\dfrac{d}{dt}
\left(\dfrac{\partial\Phi_\alpha}{\partial\dot{y}_k}\right)-
\left(\dfrac{\partial{\Phi_\alpha}}{\partial\,{y}_k}+
\displaystyle\sum_{\nu=1}^{s_1}\dfrac{\partial\,
\Phi_\alpha}{\partial\,x_\nu}
\dfrac{\partial\,\Phi_\nu}{\partial\dot{y}_k}\right)
\right)\Psi_\alpha +\displaystyle\sum_{\nu=1}^{s_1}\dfrac{\partial
L^* }{\partial\,x_\nu} \dfrac{\partial \Phi\nu}{\partial\dot{y}_k},
\end{equation}
where
$\Psi_\alpha=\left.\dfrac{\partial\,\tilde{L}}{\partial\dot{x}_\alpha}
\right|_{\dot{x}_1=\Phi_1,\ldots,\dot{x}_{s_1}=\Phi_{s_1}}.$
 The transpositional relations \eqref{211} in this
case are
\begin{equation}\label{49}
\begin{array}{rl}
\delta\dfrac{dx_\alpha}{dt}-\dfrac{d}{dt}\delta\,x_\alpha
=&\displaystyle\sum_{k=1}^{s_2}\left(\displaystyle\sum_{j=1}^{s_1}
E_j(L_\alpha )\frac{\partial{L_j}}{\partial{\dot{y_k}}}+
E_k(L_\alpha )\right)\delta y_k,\quad \alpha=1,\ldots,s_1,\vspace{0.2cm}\\
\delta\dfrac{dy_m}{dt}-\dfrac{d}{dt}\delta\,y_m=&0,\quad
m=1,\ldots,s_2.\end{array}
\end{equation}
\end{theorem}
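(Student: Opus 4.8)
The plan is to substitute the present choice of the $L_j$'s into the machinery of Theorem~\ref{A1} and simplify. First I would order the velocities as $(\dot x_1,\ldots,\dot x_{s_1},\dot y_1,\ldots,\dot y_{s_2})$ and compute $W_1$. Since $L_\alpha=\dot x_\alpha-\Phi_\alpha$ with $\Phi_\alpha$ free of $\dot{\textbf{x}}$ (this is what ``solving the constraint for $\dot x_\alpha$'' means, and is also what the later substitution $\tilde L(\textbf{x},\textbf{y},\Phi,\dot{\textbf{y}})$ presupposes), one has $\partial L_\alpha/\partial\dot x_\beta=\delta_{\alpha\beta}$ and $\partial L_\alpha/\partial\dot y_k=-\partial\Phi_\alpha/\partial\dot y_k$; together with $\partial L_{s_1+m}/\partial\dot x_\beta=0$, $\partial L_{s_1+m}/\partial\dot y_k=\delta_{mk}$ this makes $W_1$ block upper triangular with the two identity blocks on the diagonal, so $|W_1|=1$ and $W_1^{-1}$ is obtained from $W_1$ by changing the sign of its off-diagonal block. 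Next I would note that each $\dot y_m$ is affine in the velocities with constant coefficients and independent of the positions, so $E_\nu(\dot y_m)=0$ for every $\nu$; hence the last $s_2$ rows of $\Omega_1=(E_jL_i)$ vanish. Plugging these facts into $W_1A=\Omega_1$ (equation \eqref{a1}) first forces the last $s_2$ rows of $A$ to be zero, after which $W_1A=\Omega_1$ collapses to $A=\Omega_1$, i.e. $A_{\alpha j}=E_jL_\alpha$ for $\alpha\le s_1$ and $A_{\nu j}=0$ for $\nu>s_1$.

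With this $A$ I would substitute into \eqref{210}. Splitting the $N$ scalar equations into the $s_1$ equations carrying an $x$-index and the $s_2$ equations carrying a $y$-index, using $\partial L_\alpha/\partial\dot x_\beta=\delta_{\alpha\beta}$, and observing that $W_1^{T}\dot\lambda$ contributes $\dot\lambda_j$ in the $x$-block and $\sum_\alpha\dot\lambda_\alpha\,\partial L_\alpha/\partial\dot y_k$ in the $y$-block, one reads off \eqref{48}. Solving the $x$-block for $\dot\lambda_\alpha=E_\alpha L_0-\sum_{\beta=1}^{s_1}E_\alpha L_\beta\,\partial L_0/\partial\dot x_\beta$ and inserting it into the $y$-block eliminates the multipliers and gives \eqref{Pa3}.

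The longest step is the special case $L_0=\tilde L-L^*$ with $L^*=\tilde L(\textbf{x},\textbf{y},\Phi,\dot{\textbf{y}})$, which does not involve $\dot{\textbf{x}}$. Then $\partial L_0/\partial\dot x_\alpha=\partial\tilde L/\partial\dot x_\alpha$, which on the constraint $\dot x_\alpha=\Phi_\alpha$ equals $\Psi_\alpha$, and $E_\alpha L^*=-\partial L^*/\partial x_\alpha$. Writing $E_\nu L_0=E_\nu\tilde L-E_\nu L^*$ in \eqref{Pa3} and rearranging, it splits as
\begin{align*}
\Bigl(E_k\tilde L-\sum_{\alpha}E_\alpha\tilde L\,\frac{\partial L_\alpha}{\partial\dot y_k}\Bigr)-E_kL^*
&=\sum_{\alpha}E_kL_\alpha\,\Psi_\alpha-\sum_{\alpha}E_\alpha L^*\,\frac{\partial L_\alpha}{\partial\dot y_k}\\
&\quad-\sum_{\alpha,\beta}E_\alpha L_\beta\,\Psi_\beta\,\frac{\partial L_\alpha}{\partial\dot y_k}.
\end{align*}
Thus \eqref{Pa3} holds as soon as the first bracket vanishes — which is exactly the displayed condition $E_k\tilde L=\sum_\alpha E_\alpha\tilde L\,\partial L_\alpha/\partial\dot y_k$ — and, at the same time, the right-hand side equals $-E_kL^*$. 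Computing $E_kL_\alpha=-\frac{d}{dt}(\partial\Phi_\alpha/\partial\dot y_k)+\partial\Phi_\alpha/\partial y_k$, $E_\alpha L_\beta=\partial\Phi_\beta/\partial x_\alpha$ (for the affine constraints $L_\alpha=\dot x_\alpha-\Phi_\alpha$, $\alpha,\beta\le s_1$) and $E_\alpha L^*=-\partial L^*/\partial x_\alpha$, and regrouping the $\frac{d}{dt}$-terms, the $\Phi$-position-derivative terms, and the $\partial L^*/\partial x$-terms while keeping $\Psi_\alpha$ as a single symbol, turns that second requirement into exactly \eqref{C48}. I expect this regrouping — a careful but routine chain-rule accounting — to be the main obstacle; everything else is plain substitution.

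Finally, for the transpositional relations I would write out \eqref{211} componentwise with $A=W_1^{-1}\Omega_1=\Omega_1$. The vanishing lower block gives $\delta\dfrac{dy_m}{dt}-\dfrac{d}{dt}\delta y_m=0$ for $m=1,\ldots,s_2$, so the independent variations $\delta y_k$ carry no transpositional defect. The $\alpha$-th row ($\alpha\le s_1$) gives $\delta\dfrac{dx_\alpha}{dt}-\dfrac{d}{dt}\delta x_\alpha=\sum_{j=1}^{s_1}(E_jL_\alpha)\,\delta x_j+\sum_{k=1}^{s_2}(E_kL_\alpha)\,\delta y_k$, and substituting the relations $\sum_\nu(\partial L_j/\partial\dot x_\nu)\,\delta x_\nu=0$, $j=1,\ldots,s_1$, obeyed by the variations of the constrained coordinates — which express each $\delta x_j$ as a combination of the $\delta y_k$ — collapses the first sum and reproduces \eqref{49}, exhibiting a non-zero transpositional defect of the constrained coordinates $x_\alpha$ driven by the independent variations $\delta y_k$.
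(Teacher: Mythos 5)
Your proposal is correct and follows essentially the same route as the paper's own proof: block-triangular $W_1$ with $|W_1|=1$, vanishing lower rows of $\Omega_1$ forcing $A=\Omega_1$, splitting \eqref{210} into the $x$- and $y$-blocks to get \eqref{48}, eliminating $\dot\lambda$ to get \eqref{Pa3}, imposing the $\tilde L$-condition and reading off \eqref{C48} from the residual $L^*$-equation, and deriving \eqref{49} from $A=\Omega_1$ together with the Chetaev relations. Your sign bookkeeping in the $L^*$-step in fact lands exactly on \eqref{C48}, so no gap remains.
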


\begin{proposition}\label{aa1} Differential equations \eqref{C48}
describe the motion of the nonholonomic systems with  { the}
constraints \,
$L_\alpha=\dot{x}_\alpha-\Phi_\alpha(\textbf{x},\textbf{y},\dot{\textbf{y}})=0$
for $\alpha=1,\ldots,s_1.$ In particular if the constraints are
given by the formula \begin{equation}\label{Vor}
\dot{x}_j=\sum_{k=1}^{s_2}a_{jk}(t,\textbf{x},\textbf{y})\dot{y}_k+a_j(t,\textbf{x}),\quad
j=1,\ldots,s_1,
\end{equation}
then systems \eqref{C48} becomes
\[\begin{array}{rl}
E_{k}(L^*)=&\displaystyle\sum_{\alpha=1}^{s_1}\left(\dfrac{d
a_{\alpha\,k} }{dt}-
\left(\dfrac{\partial{a_{\alpha\,m}}}{\partial\,{y}_k}+
\displaystyle\sum_{\nu=1}^{s_1}\dfrac{\partial\,a_{\alpha\,m}}{\partial\,x_\nu}
a_{\nu\,k}\right)\dot{y}_m\,\right) \Psi_\alpha+
\displaystyle\sum_{\nu=1}^{s_1}\dfrac{\partial L^*
}{\partial\,x_\nu} a_{\nu\,k},
\end{array}
\] which are the
classical Voronets differential equations. Consequently equations
\eqref{C48} are an extension of the Voronets differential
equations for the case when the constraints are nonlinear in the
velocities.
\end{proposition}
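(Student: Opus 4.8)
The plan is to obtain both assertions from Theorem \ref{A0}. For the first one, recall that under the choices $x_\alpha=x_\alpha$, $x_\beta=y_\beta$, $L_\alpha=\dot x_\alpha-\Phi_\alpha(\textbf{x},\textbf{y},\dot{\textbf{y}})=0$, $L_\beta=\dot y_\beta$ and $L_0=\tilde L-L^{*}$ with $L^{*}=\tilde L(\textbf{x},\textbf{y},\Phi,\dot{\textbf{y}})$, equation \eqref{C48} is exactly the reduced form that Theorem \ref{A0} extracts from the equations of motion \eqref{210}; since by Theorems \ref{A} and \ref{A1} those equations coincide, outside a set of zero Lebesgue measure, with the classical d'Alembert--Lagrange equations for the constrained Lagrangian $L_0$ (equivalently, \eqref{48} is manifestly the d'Alembert--Lagrange system written in a dependent/independent splitting of the coordinates), \eqref{C48} does describe the nonholonomic motion subject to $L_\alpha=0$. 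It therefore remains only to specialise \eqref{C48} to the constraints \eqref{Vor}, which are affine in the velocities, and this is where the work lies.

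First I would substitute $\Phi_\alpha=\sum_{k=1}^{s_2}a_{\alpha k}(t,\textbf{x},\textbf{y})\dot y_k+a_\alpha(t,\textbf{x})$ into \eqref{C48} and evaluate the derivatives occurring in it. One has $\partial\Phi_\alpha/\partial\dot y_k=a_{\alpha k}$, hence $\frac{d}{dt}\big(\partial\Phi_\alpha/\partial\dot y_k\big)=d a_{\alpha k}/dt$; since $a_\alpha$ is independent of $\textbf{y}$, the derivative $\partial\Phi_\alpha/\partial y_k$ is linear in $\dot{\textbf{y}}$ with coefficients $\partial a_{\alpha m}/\partial y_k$; $\partial\Phi_\alpha/\partial x_\nu$ splits into the $\dot{\textbf{y}}$-linear piece with coefficients $\partial a_{\alpha m}/\partial x_\nu$ plus the term $\partial a_\alpha/\partial x_\nu$; $\partial\Phi_\nu/\partial\dot y_k=a_{\nu k}$; and $\Psi_\alpha=\partial\tilde L/\partial\dot x_\alpha$ evaluated on $\dot{\textbf{x}}=\Phi$ is the constrained generalized momentum. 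Substituting these expressions and regrouping the terms according to their dependence on $\dot{\textbf{y}}$ turns \eqref{C48} into the displayed equations, with the factor $\sum_{\nu}(\partial L^{*}/\partial x_\nu)a_{\nu k}$ coming directly from the last sum in \eqref{C48}.

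Finally I would identify the resulting system with the classical one. Writing $d a_{\alpha k}/dt=\partial_t a_{\alpha k}+\sum_\nu(\partial a_{\alpha k}/\partial x_\nu)\dot x_\nu+\sum_m(\partial a_{\alpha k}/\partial y_m)\dot y_m$ and eliminating $\dot x_\nu$ by the constraint $\dot x_\nu=\sum_m a_{\nu m}\dot y_m+a_\nu$, the coefficient of $\dot y_m$ acquires the antisymmetric ``Voronets bracket'' in the pair $(k,m)$, namely $\big(\partial a_{\alpha k}/\partial y_m-\partial a_{\alpha m}/\partial y_k\big)+\sum_\nu\big((\partial a_{\alpha k}/\partial x_\nu)a_{\nu m}-(\partial a_{\alpha m}/\partial x_\nu)a_{\nu k}\big)$, together with inhomogeneous terms built from $a_\alpha$ and $\partial_t a_{\alpha k}$; multiplied by $\Psi_\alpha$ and combined with the coupling term $\sum_\nu(\partial L^{*}/\partial x_\nu)a_{\nu k}$, this is exactly the Voronets system in the normalisation recorded, e.g., in \cite{NF}. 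Since \eqref{C48} was derived with $\Phi_\alpha$ an arbitrary function of $(\textbf{x},\textbf{y},\dot{\textbf{y}})$, it extends the Voronets equations to constraints nonlinear in the velocities, which is the final assertion.

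The main obstacle is the bookkeeping in the last two steps: one must track every term generated by $\partial\Phi_\alpha/\partial x_\nu$ (in particular the piece $\partial a_\alpha/\partial x_\nu$, which together with the part of $d a_{\alpha k}/dt$ coming from $\dot x_\nu=a_\nu+\cdots$ must be arranged consistently) and then match the sign and the placement of the coupling term $\sum_\nu(\partial L^{*}/\partial x_\nu)a_{\nu k}$, as well as the precise meaning of $\Psi_\alpha=\partial\tilde L/\partial\dot x_\alpha|_{\dot{\textbf{x}}=\Phi}$, against the classical normalisation of the Voronets equations. A secondary point to check is that restricting to $\Phi_\alpha=\Phi_\alpha(\textbf{x},\textbf{y},\dot{\textbf{y}})$, so that $\partial L^{*}/\partial\dot x_\alpha=0$, is compatible with the derivation of \eqref{C48} in Theorem \ref{A0}.
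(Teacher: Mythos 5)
Your proposal is correct and follows the same route as the paper: the first assertion is read off from Theorem \ref{A0} (equations \eqref{C48} are exactly the reduced equations of the generalized Voronets system with Lagrangian $L^*$ and constraints \eqref{Ch22}), and the second is obtained by substituting the affine constraints into \eqref{C48} and computing the derivatives of $\Phi_\alpha$. The paper's own proof is just this in two sentences, carried out only for the scleronomic case $\Phi_\alpha=\sum_{k}a_{\alpha k}(\textbf{x},\textbf{y})\dot{y}_k$ (so the extra terms you track coming from $a_\alpha(t,\textbf{x})$ do not appear in its displayed formula); your more detailed bookkeeping, including the expansion of $da_{\alpha k}/dt$ into the antisymmetric Voronets bracket, is consistent with and fills in what the paper leaves implicit.
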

\begin{proposition}\label{aa2}
Differential equations \eqref{C48} describe the motion of the
constrained Lagrangian systems with the constraints
$L_\alpha=\dot{x}_\alpha-\Phi_\alpha(\textbf{y},\dot{\textbf{y}})=0$
and Lagrangian $L^*=L^*(\textbf{y},\dot{\textbf{y}}).$  Under these assumptions equations \eqref{C48} take the form
\begin{equation}\label{50}
E_{k}(L^*)=\displaystyle\sum_{\alpha=1}^{s_1}\left(\dfrac{d}{dt}
\left(\dfrac{\partial\Phi_\alpha}{\partial\dot{y}_k}\right)-
\dfrac{\partial\Phi_\alpha}{\partial\,y_k}\right)\Psi_\alpha.
\end{equation}
 In
particular if the constraints are given by the formula
\begin{equation}\label{Ch2}
\dot{x}_\alpha=\sum_{k=1}^{s_2}a_{\alpha\,k}(\textbf{y})\dot{y}_k,\quad
\alpha=1,\ldots,s_1,
\end{equation}
 then systems \eqref{50} becomes
\begin{equation}\label{Chh2}
E_{k}L^*
=\sum_{j=1}^{s_1}\sum_{r=1}^{s_2}\left(\dfrac{\partial\alpha_{jk}}{\partial{y_r}}-
\dfrac{\partial\alpha_{jr}}{\partial{y_k}}\right)\dot{y}_r\Psi_j,
\end{equation}
 for $ k=1,\ldots,s_2,$ which are the
equations which  Chaplygin published in the Proceeding of the
Society of the Friends of Natural Science in 1897 .

Consequently equations \eqref{50} are an extension of the
classical Chaplygin equations for the case when the constraints
are nonlinear.
\end{proposition}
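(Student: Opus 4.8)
The plan is to obtain Proposition~\ref{aa2} as a direct specialization of the extended Voronets equations \eqref{C48} of Proposition~\ref{aa1}. First I would record the two Chaplygin hypotheses built into the statement: the constraint functions $\Phi_\alpha$ and the reduced Lagrangian $L^*$ do not depend on the coordinates $\textbf{x}=(x_1,\ldots,x_{s_1})$, so that
\[
\dfrac{\partial\Phi_\alpha}{\partial x_\nu}=0,\qquad \dfrac{\partial L^*}{\partial x_\nu}=0,\qquad \alpha,\nu=1,\ldots,s_1 .
\]
Inserting these vanishing derivatives into \eqref{C48} annihilates the sum $\sum_{\nu=1}^{s_1}\big(\partial\Phi_\alpha/\partial x_\nu\big)\big(\partial\Phi_\nu/\partial\dot{y}_k\big)\Psi_\alpha$ inside the bracket together with the final sum $\sum_{\nu=1}^{s_1}\big(\partial L^*/\partial x_\nu\big)\big(\partial\Phi_\nu/\partial\dot{y}_k\big)$, and what remains is exactly \eqref{50}. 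Since now the $\textbf{x}$--coordinates enter neither $L^*$ nor the constraints, they play the role of the cyclic variables of Chaplygin's reduction, which is what is meant by the assertion that \eqref{C48}, equivalently \eqref{50}, describes the motion of this constrained Lagrangian system.

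Next I would treat the linear subcase \eqref{Ch2}, i.e. $\Phi_\alpha=\sum_{k=1}^{s_2}a_{\alpha k}(\textbf{y})\dot{y}_k$. Here $\partial\Phi_\alpha/\partial\dot{y}_k=a_{\alpha k}(\textbf{y})$ depends on $\textbf{y}$ only, hence along any motion
\[
\dfrac{d}{dt}\left(\dfrac{\partial\Phi_\alpha}{\partial\dot{y}_k}\right)=\sum_{r=1}^{s_2}\dfrac{\partial a_{\alpha k}}{\partial y_r}\dot{y}_r,\qquad \dfrac{\partial\Phi_\alpha}{\partial y_k}=\sum_{r=1}^{s_2}\dfrac{\partial a_{\alpha r}}{\partial y_k}\dot{y}_r ,
\]
so that the bracket in \eqref{50} equals $\sum_{r=1}^{s_2}\big(\partial a_{\alpha k}/\partial y_r-\partial a_{\alpha r}/\partial y_k\big)\dot{y}_r$. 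Substituting this into \eqref{50} and relabelling the summation index $\alpha$ as $j$ yields precisely \eqref{Chh2}, which is the system Chaplygin published in 1897; this identification exhibits \eqref{50} as its extension to nonlinear constraints and closes the argument.

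The manipulation is essentially bookkeeping, so I do not expect a genuine obstacle. The single point that deserves care is consistency: one must check that the Chaplygin ansatz ($\Phi_\alpha$ and $L^*$ independent of $\textbf{x}$) is compatible with the hypotheses under which \eqref{C48} was obtained in Theorem~\ref{A0} and Proposition~\ref{aa1} — in particular that $|W_1|=1$ is preserved, and that the decomposition $L_0=\tilde{L}-L^*$ together with the auxiliary identity $E_k\tilde{L}=\sum_{\alpha=1}^{s_1}E_\alpha\tilde{L}\,(\partial L_\alpha/\partial\dot{y}_k)$ can still be arranged in this reduced setting. Once that compatibility is in place, the chain of reductions from \eqref{C48} to \eqref{50} and then to \eqref{Chh2} follows from the substitutions above.
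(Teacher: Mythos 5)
Your argument is correct, and it reaches \eqref{50} and \eqref{Chh2} by a shorter route than the paper does. You treat the generalized Chaplygin system as the special case of the generalized Voronets system in which $\partial\Phi_\alpha/\partial x_\nu=0$ and $\partial L^*/\partial x_\nu=0$, and you simply delete the corresponding terms from \eqref{C48}; the linear subcase then follows from the elementary identities $\frac{d}{dt}a_{\alpha k}=\sum_r(\partial a_{\alpha k}/\partial y_r)\dot y_r$ and $\partial\Phi_\alpha/\partial y_k=\sum_r(\partial a_{\alpha r}/\partial y_k)\dot y_r$. The paper instead re-runs the whole machinery of Theorem \ref{A1} for the Chaplygin data: it chooses $L_\beta=\dot y_\beta$, computes $W_1$, $\Omega_1$ and $A$ explicitly (noting that the first $s_1$ columns of $\Omega_1$ vanish because the constraints no longer depend on $\textbf{x}$), writes the equations of motion \eqref{4000}, eliminates the multipliers, introduces $\Theta$ and the splitting $L_0=\tilde L-L^*$, and only then lands on \eqref{001}, which is \eqref{50} rewritten with $-E_k(L_\alpha)=\frac{d}{dt}(\partial\Phi_\alpha/\partial\dot y_k)-\partial\Phi_\alpha/\partial y_k$. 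What the paper's longer route buys is the explicit form of the matrix $A$ and of the transpositional relations \eqref{449} in the Chaplygin case, which it uses elsewhere (e.g.\ in Example 4); what your route buys is brevity and a clean logical dependence on Proposition \ref{aa1}. The one caveat you yourself flag --- that the decomposition $L_0=\tilde L-L^*$ and the auxiliary identity $E_k\tilde L=\sum_\alpha E_\alpha\tilde L\,(\partial L_\alpha/\partial\dot y_k)$ must still be arrangeable when $\tilde L$ and $\Phi$ are $\textbf{x}$-independent --- is genuine but easy: since $\partial\tilde L/\partial x_\alpha=0$ one has $E_\alpha\tilde L=\frac{d}{dt}(\partial\tilde L/\partial\dot x_\alpha)$, and the identity reduces exactly to the condition the paper imposes in its own proof, so the compatibility holds.
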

{From}   \eqref{TT} and in view of the Implicit Function Theorem, we
can locally express the constraints (reordering  coordinates if is
necessary) as
\begin{equation}\label{qpr}
\dot{x}_\alpha=\Phi_\alpha\left(\textbf{x},\dot{x}_{M+1},\ldots,\dot{x}_N\right)
\end{equation}
for $\alpha=1,\ldots,M.$  We note that  Propositions \ref{aa1} and
\ref{aa2} are also valid for every constrained mechanical systems
with  constraints locally given by \eqref{qpr}, this follows from
Theorem 4 changing the notations, see Corollary 22.

\smallskip

The proofs of Theorem \ref{A0} and Propositions \ref{aa1} and
\ref{aa2} is given in section 8.

\smallskip

The next result is the {\it third point of view on the
transpositional relations.}
\begin{corollary}\label{CC}
For the constrained  mechanical systems the  virtual variations can
produce zero or non--zero transpositional relations.  For the
unconstrained  mechanical systems the virtual
 variations always produce zero transpositional relations.
\end{corollary}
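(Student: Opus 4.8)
The plan is to read Corollary~\ref{CC} as a direct structural consequence of Theorems~\ref{A}, \ref{A1}, \ref{A2}, Theorem~\ref{A0} and Propositions~\ref{aa1}--\ref{aa2}, rather than something requiring fresh computation. There are two assertions to establish: (i) for a constrained system the virtual variations may produce zero \emph{or} non-zero transpositional relations; and (ii) for an unconstrained system the virtual variations always produce zero transpositional relations. The whole point of the preceding development is that the matrix $A$ in \eqref{23} is not forced by the variational principle but is a free parameter that can be \emph{chosen}, subject only to the rank/determinant hypotheses; the corollary simply records the two extreme regimes of this freedom.

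First I would prove (ii). In the unconstrained case there are no constraint functions, so $M=0$ and there is no Lagrange-multiplier part in \eqref{21}: the Lagrangian is just $L_0$ with $\det\left(\dfrac{\partial^2 L_0}{\partial\dot{x}_k\partial\dot{x}_j}\right)\ne 0$. Then in the Hamilton-type identity \eqref{023} the path is a genuine critical point of $\int L_0\,dt$ over all variations $\delta x_\nu$ vanishing at the endpoints, and the admissible variations $\delta x_\nu(t)$ are themselves arbitrary smooth functions. For such independent, freely prescribed variations one may realize $\delta\mathbf{x}$ and $\dfrac{d}{dt}\delta\mathbf{x}$ as the variation of an actual one-parameter family of curves $\mathbf{x}(t,\varepsilon)$ with $\partial_\varepsilon\partial_t = \partial_t\partial_\varepsilon$, which forces $\delta\dfrac{d\mathbf{x}}{dt}-\dfrac{d}{dt}\delta\mathbf{x}=0$, i.e. $A\equiv 0$ is the only consistent choice; equivalently, taking $M=N$, $L_j=\dot x_j$ is \emph{not} available because there is nothing to constrain, and the argument of Theorem~\ref{A} with no constraint rows shows $D_\nu L = E_\nu L_0$, recovering the classical Euler--Lagrange equations with vanishing transpositional relations. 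This is the short half.

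For (i) I would exhibit the two possibilities concretely using the theorems already proved. The \emph{non-zero} option: Theorem~\ref{A1} (and in the nonholonomic setting Theorem~\ref{A0} with the explicit formula \eqref{49}, or Proposition~\ref{aa1}) constructs, for a constrained system with independent constraints, a matrix $A=W_1^{-1}\Omega_1$ whose rows $\Omega_1$ are built from $E_\nu L_j$; whenever the constraints are genuinely velocity-dependent and not all $E_\nu L_j$ vanish, $\Omega_1\neq 0$, hence \eqref{211} gives $\delta\dfrac{d\mathbf{x}}{dt}-\dfrac{d}{dt}\delta\mathbf{x}=(W_1^{-1}\Omega_1)\delta\mathbf{x}\not\equiv 0$, even though (as Theorem~\ref{A0} emphasizes) some of the variations $\delta x_k$ can be taken independent. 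The \emph{zero} option: for a constrained system one may instead retain the classical choice $A\equiv 0$; Theorem~\ref{A} with $A=0$ reduces $D_\nu L$ to $E_\nu L$ and reproduces the Kozlov/vakonomic equations \eqref{K1}, which is the standard constrained Hamilton principle with vanishing transpositional relations. So both values $0$ and $\neq 0$ are attainable for a constrained system, which is exactly assertion (i).

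The main obstacle is not computational but conceptual: one must justify the asymmetry, namely why the freedom to pick $A\neq 0$ genuinely exists for constrained systems but collapses for unconstrained ones. The key is that for unconstrained systems \emph{every} admissible variation is an honest infinitesimal deformation of a curve (commuting partials), whereas for constrained systems the variations are required to respect the constraints $L_j=0$, and this restriction is compatible with a family of nonzero $A$'s precisely because the rank conditions \eqref{TT}, $|W_1|\ne 0$ (resp. $|W_2|\ne 0$) make \eqref{a1} (resp. \eqref{a2}) solvable for $A$; I would spell this out by contrasting: in the constrained case the constraint rows are available to absorb a nonzero $\Omega$, in the unconstrained case there are no such rows, $\Omega$ is the zero matrix, and $W A = \Omega$ with $W$ the (here, empty or identity-type) Jacobian forces $A=0$. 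Once this dichotomy is stated, the corollary follows by simply citing the relevant theorems as the witnesses for each case.
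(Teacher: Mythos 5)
Your overall strategy is the same as the paper's: the paper's proof of Corollary \ref{CC} is itself only a citation of formulas \eqref{211} and \eqref{0211} together with the worked examples (which exhibit both zero and non--zero transpositional relations for constrained systems) and the unconstrained computation of subsection 7.1 (where $M=0$, $L_j=\dot x_j$, hence $\Omega_1=0$ by \eqref{An1} and $A=0$). Your treatment of the unconstrained half is fine and in fact more principled than the paper's: you invoke the classical commutation of $\partial_t$ and $\partial_\varepsilon$ for a genuine one--parameter family (this is exactly \eqref{1313}, which the paper takes from Lurie), whereas the paper only verifies $A=0$ for one particular choice of the auxiliary functions. (Your parenthetical ``taking $M=N$, $L_j=\dot x_j$ is not available'' is garbled --- in the paper's construction one takes $M=0$ and the $L_j=\dot x_j$ are the \emph{auxiliary} functions, not constraints --- but this does not affect the conclusion.)

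The genuine problem is your witness for the ``zero'' alternative in the constrained case. You take $A\equiv 0$ and observe that Theorem \ref{A} then reproduces the vakonomic equations \eqref{K1}. But the corollary is a statement about the transpositional relations that the virtual variations of a \emph{constrained mechanical system} can produce, i.e.\ about matrices $A$ that are admissible in the sense of Proposition \ref{QP} and condition \eqref{ff1} (equivalently, determined by \eqref{a1} or \eqref{a2}): the varied path must remain compatible with the constraints. For a nonintegrable constraint $E_\nu L_\alpha\not\equiv 0$, so $A\equiv 0$ violates \eqref{Nat1} and the resulting model is precisely the vakonomic one that Section 6 of the paper rejects (wrong dynamics, failure of determinacy in the skate example). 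So your witness establishes a formal possibility outside the framework in which the corollary is meant, not the claimed possibility inside it. The paper's actual zero witnesses live inside the MVM: for holonomic constraints $E_\nu L_\alpha=0$ by \eqref{An1}, so $A=0$ \emph{is} admissible and is realized by choosing the auxiliary $L_j$ as total time derivatives (Example 2(b), relations \eqref{m22}, versus the non--zero \eqref{m2} for a different choice); and for nonholonomic Voronets--Chaplygin systems the independent variations give zero transpositional relations in \eqref{49} while the dependent ones do not. Replacing your vakonomic witness by one of these (Corollary \ref{main11} plus Example 2 is the cleanest) closes the gap; the non--zero witness you give via Theorem \ref{A1} and $\Omega_1\ne 0$ is correct as stated.
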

The proof  of this corollary is given in section 9.

\smallskip

 We have the following conjecture.
\begin{conjecture} \label{CC11}
 {The existence of mechanical systems with
nonlinear  constraints in the velocity  must be sought outside of
the Newtonian model.}
\end{conjecture}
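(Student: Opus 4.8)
The plan is to recast Conjecture \ref{CC11} as a statement about which constraints can be \emph{physically realized} inside the Newtonian framework, and then to prove that a realizable velocity constraint is necessarily affine in $\dot{\textbf{x}}$. First I would fix the meaning of ``Newtonian model'': a finite system of point masses $m_i$ with positions $r_i\in\R^3$ acted upon by forces $F_i=F_i(t,r,\dot r)$ prescribed as functions of the state, where a constraint is admitted only if it is holonomic --- a restriction of the configuration manifold --- or if it arises as a limit of such systems when suitable potential and dissipative terms are sent to infinity, in the sense of the classical realization theorems of Carath\'{e}odory, Neimark--Fufaev and Kozlov (see \cite{Kozlov10,Kozlov001,Arnold}). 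With this reading the conjecture becomes the reduced claim: \emph{if a constraint $L_j(t,\textbf{x},\dot{\textbf{x}})=0$ is realizable in this sense, then $L_j$ is affine in the velocities.}

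Second, I would carry out the limiting analysis for the standard realization schemes. For a penalty realization one starts from a Newtonian Lagrangian, adds $\tfrac12 N\,L_j^2$, and lets $N\to\infty$; for a viscous (Neimark--Fufaev) realization one adds a strong dissipative force proportional to $N\,L_j\,\partial L_j/\partial\dot{\textbf{x}}$ and argues similarly. The expectation, which I would make rigorous, is that in both cases the constraint surviving in the limit is the affine one tangent to $\{L_j=0\}$ rather than $L_j=0$ itself, because the quadratic and higher order part of $L_j$ in the velocities enters the limit only through the linearization $\partial L_j/\partial\dot{\textbf{x}}$ restricted to the constraint set. If this holds for every admissible realization scheme, the reduced claim --- hence the conjecture --- follows.

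Third, I would record the structural evidence that points the same way. By Corollary \ref{CC}, Theorem \ref{A0} and Propositions \ref{aa1}--\ref{aa2}, a constraint that is genuinely nonlinear in the velocities is incorporated into a Hamiltonian variational principle via the non--classical transpositional relations \eqref{211}, in contrast with the unconstrained case, where the transpositional relations are always zero. This forced departure from the classical variational --- and hence d'Alembertian --- description of mechanics is exactly what Conjecture \ref{CC11} attributes to such constraints, and it is consistent both with Kharlamov's critique of the Kozlov model \cite{Kharlamov} and with the documented failure of the vakonomic model on nonholonomic systems.

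The hard part is the scope of the second step. The limiting computation is transparent for the two textbook realization mechanisms, but proving the conjecture requires that \emph{no} limit of Newtonian systems --- including more elaborate ones built from strongly gyroscopic forces, servoconstraints or feedback control --- can produce a genuinely nonlinear velocity constraint, and no classification of all such limits is presently available. A complete proof therefore needs either such a classification or a short axiomatic characterization of ``Newtonian'' from which affineness of every realizable constraint follows directly; in the absence of either, the statement can only be offered as a conjecture.
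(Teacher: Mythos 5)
Note first that the statement you are addressing is labelled a \emph{conjecture}: the paper does not prove it, and neither do you --- your own final paragraph concedes that the decisive step (no limit of Newtonian systems produces a genuinely nonlinear velocity constraint) cannot be carried out without a classification that does not exist. So both texts end in the same place, offering supporting evidence rather than a proof; the relevant comparison is between the two bodies of evidence, and they are genuinely different. The paper's case, laid out in item (I) of section 9, rests on three observations: (a) the only classical nonlinear example, Appell--Hamel, arises from a degenerate limit $\rho\to 0$ and is therefore not a correct Newtonian system (citing \cite{NF}); (b) forces depending on the acceleration contradict the Newtonian superposition postulate (citing \cite{Pars} and \cite{Birkhoff}); and (c) --- the argument internal to this paper --- in the equations of motion \eqref{210} the constraints generate the force $\textbf{F}_1$ of \eqref{F1}, whose components \eqref{Kk01} involve $\ddot{\textbf{x}}$ through the terms $\dfrac{\partial^2 L_\alpha}{\partial\dot{x}_k\partial\dot{x}_j}\ddot{x}_j$, i.e.\ precisely when the constraint is not affine in $\dot{\textbf{x}}$; by (b) such a force cannot exist in Newtonian mechanics. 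Your route instead reduces the conjecture to a statement about \emph{realizability} of constraints as penalty or viscous limits of Newtonian systems in the sense of \cite{Kozlov10,Kozlov001}, arguing that only the affine linearization of $L_j$ survives such limits. This is a distinct and arguably sharper formalization --- it converts the conjecture into a concrete open classification problem rather than an appeal to a postulate --- but it bypasses the one piece of evidence that is internal to the theorems actually proved here, namely the acceleration dependence of $\textbf{F}_1$; and your third step recovers only the softer structural evidence about transpositional relations, which the paper files under Corollary \ref{CC} rather than under the conjecture. If you want your programme to connect with the paper, add the observation that \eqref{Kk01} is acceleration-dependent exactly when the constraint fails to be affine in the velocities: that is the paper's central heuristic, and it complements rather than duplicates your realization-theoretic reduction.
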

{This conjecture is supported by several facts see  section 9}.

\smallskip

The results are illustrated with precise examples.
\section{Variational Principles. Transpositional relations}
\subsection{ Hamiltonian principle }

 We introduce the following results, notations and definitions
which we will use later on (see \cite{Arnold}).

 A {\it Lagrangian  system} is a pair  $(\textsc{Q},
\tilde{L})$ consisting of a smooth manifold $ \textsc{Q},$ and a
smooth function $\tilde{L}:\mathbb{R} \times
T\textsc{Q}\longrightarrow \R,$ where $T\textsc{Q}$ is the tangent
bundle of $ \textsc{Q}.$ The point ${\bf
x}=\left(x_1,\ldots,x_N\right)\in\textsc{Q}$ denotes the {\it
position} (usually its components are called {\it generalized
coordinates}) of the system and we call each tangent vector
$\dot{{\bf x}}=\left(\dot{x}_1,\ldots,\dot{x}_N\right)\in
T_{\textbf{x}}\textsc{Q}$ the {\it velocity} (usually called {\it
generalized velocity}) of the system at the point ${\bf x}.$ A
pair $({\bf x},\dot{{\bf x}})$ is called a {\it state} of the
system. In Lagrangian mechanics it is usual to call $
\textsc{Q},$ the {\it configuration space}, the tangent bundle
$T\textsc{Q}$ is called the {\it phase space}, $ \tilde{L}$ is the
{\it Lagrange function} or {\it Lagrangian} and the dimension $N$
of $\textsc{Q}$ is the number of {\it degrees of freedom}.

Let $a_0$ and $a_1$ be two points of $ \textsc{Q}.$ The map
\[\begin{array}{rl}
\gamma:[t_0,t_1]\subset\mathbb{R}&\longrightarrow\textsc{Q},\vspace{0.2cm}\\
  t&\longmapsto\gamma(t)=\left(x_1(t),\ldots,x_N(t)\right),
          \end{array}
          \]
such that $\gamma(t_0)=a_0,\,\gamma(t_1)=a_1$ is called a {\it
path}  from $a_0$ to $a_1.$ We denote the set of all these path by
$\Omega(\textsc{Q},a_0,a_1,t_0,t_1):=\Omega$.

\smallskip

{We shall } derive  one of the most simplest and general
 variational principles the  {\it Hamiltonian principle} (see \cite{Polak}).

 The functional
$F:\Omega\longrightarrow \mathbb{R}$ defined by
\[
F(\gamma (t))=\displaystyle\int_{\gamma (t)}\tilde{L}dt=
\displaystyle\int_{t_0}^{t_1}\tilde{L}(t,\textbf{x}(t),\dot{\textbf{x}}(t))dt
\]
is called the {\it action}.

We consider the path
$\gamma(t)=\textbf{x}(t)=\left(x_1(t),\ldots,x_N(t)\right)\in\Omega.$

Let the {\it variation} of the path $\gamma(t)$ be defined as a
smooth mapping
\[\begin{array}{rl}
\gamma^*:[t_0,t_1]\times(-\tau,\tau)&\longrightarrow\textsc{Q},\vspace{0.2cm}\\
  (t,\varepsilon)&\longmapsto\gamma^*(t,\varepsilon)=\textbf{x}^*(t,\varepsilon)=
  \left(x_1(t)+\varepsilon\delta{x}_1(t),\ldots,x_N(t)+\varepsilon\delta{x}_N(t)\right),
          \end{array}
          \]
          satisfying
\[
\textbf{x}^{*}(t_0,\varepsilon)=a_0,\quad\textbf{x}^{*}(t_1,\varepsilon)=a_1,\quad\textbf{x}^{*}(t,0)=\textbf{x}(t).
\]
By definition we have
\[\delta{\textbf{x}}(t)=\left.\dfrac{\partial\textbf{x}^{*}(t,\varepsilon)}{\partial\varepsilon}\right|_{\varepsilon=0}.\]
This function  is called the {\it virtual displacement} or {\it
virtual variation} corresponding to the variation of $\gamma(t)$
and it is a function of time, all  its components are functions of
$t$ of class $C^2(t_0,t_1)$ and vanish at $t_0$ and $t_1$
i.e. $\delta\textbf{x}(t_0)=\delta\textbf{x}(t_1)=0.$

\smallskip

A {\it varied path} is a path which can be obtained as a variation path.

\smallskip

 The {\it first variation} of the functional $F$ at $\gamma(t)$ is
\[\delta{F}:=\left.\dfrac{\partial
F\left(\textbf{x}^{*}(t,\varepsilon)\right)}{\partial\varepsilon}\right|_{\varepsilon=0},\]
and it is called the {\it differential } of the functional $F$ (see
\cite{Arnold}). The path $\gamma (t)\in\Omega$ is called the {\it
critical point} of $F$ if $\delta F(\gamma (t))=0.$

 Let $\mathbb{L}$ be
the space of all smooth functions $g:\mathbb{R} \times
T\textsc{Q}\longrightarrow \R.$ The operator
 \[\begin{array}{rl}
 E_\nu:\mathbb{L}&\longrightarrow \R,\\
g&\longmapsto E_\nu g={\dfrac{d}{dt}\dfrac{\partial
g}{\partial\dot{x}_\nu}-\dfrac{\partial g}{\partial{x}_\nu}},\quad
\mbox{for}\quad \nu=1,\ldots,N,
\end{array}
\]
  is known as the {\it Lagrangian
derivative.}

 It is easy to show the following property of the
Lagrangian derivative
\begin{equation}\label{An1}
E_\nu\dfrac{df}{dt}=0, \end{equation}
 for arbitrary smooth function $f=f(t,\textbf{x}).$
 We observe that in view of \eqref{An1} we obtain
that the Lagrangian derivative is unchanged if we replace the
function $g$ by $g+\dfrac{df}{dt},$ for any function
$f=f(t,\textbf{x}).$ This reflects the {\it gauge invariance.} We
shall say that the functions
$g=g\left(t,\textbf{x},\dot{\textbf{x}}\right)$
 and $\hat{g}=\hat{g}\left(t,\textbf{x},\dot{\textbf{x}}\right)$ are
 {\it equivalently} if
 $
 g-\hat{g}=\dfrac{df(t,\textbf{x})}{dt},
 $
 and we shall write $g\simeq \hat{g}.$

\begin{proposition} \label{LL}
The differential of the action can be calculated as follows
\begin{equation}\label{ttt}
\delta{F}=-\displaystyle\int_{t_0}^{t_1}\displaystyle\sum_{k=1}^N\left(
E_k\tilde{L}\delta{x}_k- \dfrac{\partial
\tilde{L}}{\partial\dot{{x_k}}}\left(\delta\dfrac{d{x_k}}{dt}
-\dfrac{d}{dt}\delta{x_k}\right)\right)dt,
\end{equation}
 where
${\bf{x}}={\bf{x}}(t),\,\dot{\bf{x}}=\dfrac{d{\bf{x}}}{dt},$ and
$\tilde{L}=\tilde{L}\left(t,{\bf{x}},\dfrac{d{{\bf{x}}}}{dt}\right).$
\end{proposition}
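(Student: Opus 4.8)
\emph{Sketch of the intended argument.} The plan is to compute $\delta F$ straight from its definition as the $\varepsilon$-derivative of $F(\textbf{x}^*(t,\varepsilon))$ at $\varepsilon=0$, differentiating under the integral sign and then performing a single integration by parts, while deliberately keeping $\delta\frac{dx_k}{dt}$ and $\frac{d}{dt}\delta x_k$ as distinct objects so that the transpositional terms appear explicitly in the final formula.

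First I would use the smoothness of $\tilde L$ and of the variation $\gamma^*$ together with the compactness of $[t_0,t_1]$ to justify interchanging $\partial_\varepsilon$ and $\int dt$, and apply the chain rule:
\[
\delta F=\left.\frac{\partial}{\partial\varepsilon}\right|_{\varepsilon=0}\int_{t_0}^{t_1}\tilde L\Bigl(t,\textbf{x}^*(t,\varepsilon),\tfrac{d}{dt}\textbf{x}^*(t,\varepsilon)\Bigr)\,dt
=\int_{t_0}^{t_1}\sum_{k=1}^N\left(\frac{\partial\tilde L}{\partial x_k}\,\delta x_k+\frac{\partial\tilde L}{\partial\dot x_k}\,\delta\frac{dx_k}{dt}\right)dt,
\]
where $\delta x_k=\left.\partial_\varepsilon x_k^*\right|_{\varepsilon=0}$ and, by definition, $\delta\frac{dx_k}{dt}=\left.\partial_\varepsilon\bigl(\tfrac{d}{dt}x_k^*\bigr)\right|_{\varepsilon=0}$, all derivatives of $\tilde L$ being evaluated along $\gamma(t)$.

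Next I would split the velocity term by the purely algebraic identity $\delta\frac{dx_k}{dt}=\bigl(\delta\frac{dx_k}{dt}-\frac{d}{dt}\delta x_k\bigr)+\frac{d}{dt}\delta x_k$ and integrate by parts only the last piece,
\[
\int_{t_0}^{t_1}\frac{\partial\tilde L}{\partial\dot x_k}\,\frac{d}{dt}\delta x_k\,dt=\left[\frac{\partial\tilde L}{\partial\dot x_k}\,\delta x_k\right]_{t_0}^{t_1}-\int_{t_0}^{t_1}\left(\frac{d}{dt}\frac{\partial\tilde L}{\partial\dot x_k}\right)\delta x_k\,dt ,
\]
the boundary term vanishing because $\delta\textbf{x}(t_0)=\delta\textbf{x}(t_1)=0$. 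Substituting back and regrouping the two coefficients of $\delta x_k$ via $\frac{\partial\tilde L}{\partial x_k}-\frac{d}{dt}\frac{\partial\tilde L}{\partial\dot x_k}=-E_k\tilde L$ gives exactly \eqref{ttt}.

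I do not expect a genuine obstacle here; the argument is essentially the classical derivation of the first variation. The two points deserving care are (i) the interchange of $\partial_\varepsilon$ with $\int dt$, which is routine given the smoothness hypotheses and the compact time interval, and (ii) refraining from using $\delta\frac{dx_k}{dt}=\frac{d}{dt}\delta x_k$: this holds for the affine variations $\textbf{x}^*=\textbf{x}+\varepsilon\,\delta\textbf{x}$ introduced above, but we keep the two terms separated so that the identity \eqref{ttt} remains valid verbatim once the generalized virtual variations with non-zero transpositional relations \eqref{23} are considered in the subsequent sections.
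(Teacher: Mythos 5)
Your proposal is correct and follows essentially the same route as the paper's own proof: differentiate under the integral, split $\delta\dfrac{dx_k}{dt}$ into $\dfrac{d}{dt}\delta x_k$ plus the transpositional remainder, integrate by parts the former, and use the vanishing of $\delta\textbf{x}$ at the endpoints. Your added remarks on justifying the interchange of $\partial_\varepsilon$ with the integral and on deliberately not collapsing the transpositional term are sensible but do not change the argument.
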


\begin{proof} We have that
\[
\begin{array}{rl}
\delta{F}=&\left.\dfrac{\partial
F\left(\textbf{x}^{*}(t,\varepsilon)\right)}{\partial\varepsilon}\right|_{\varepsilon=0}
\vspace{0.2cm}\\
=&\displaystyle\int_{t_0}^{t_1}\left.\dfrac{\partial}{\partial\varepsilon}\right|_{\varepsilon=0}
L\left(t,\textbf{x}^{*}(t,
\varepsilon),\dfrac{d}{dt}\left(\textbf{x}^{*}(t,\varepsilon)\right)\right)
\,dt=
\displaystyle\int_{t_0}^{t_1}\displaystyle\sum_{k=1}^N\left(\dfrac{\partial
L}{\partial{x_k}}\delta{x_k}+\dfrac{\partial
L}{\partial\dot{{x_k}}}\delta{\dot{x_k}}\right)dt\vspace{0.2cm}\\
=&
\displaystyle\int_{t_0}^{t_1}\displaystyle\sum_{k=1}^N\left(\dfrac{\partial
L}{\partial{{x_k}}}\delta{x_k}+\dfrac{\partial
L}{\partial\dot{{x_k}}}\dfrac{d}{dt}\delta x_k+ \dfrac{\partial
L}{\partial\dot{{x_k}}}\left(\delta\dfrac{d{x_k}}{dt}-
\dfrac{d}{dt}\delta{x_k}\right)\right)dt\vspace{0.3cm}\\
=&\left.\displaystyle\sum_{k=1}^N\dfrac{\partial
L}{\partial\dot{{x_k}}}\delta{x_k}\right|_{t=t_0}^{t=t_1}+
\displaystyle\int_{t_0}^{t_1}\displaystyle\sum_{k=1}^N\left(\left(\dfrac{\partial
L}{\partial{{x_k}}}-\dfrac{d}{dt}\dfrac{\partial
L}{\partial\dot{{x_k}}}\right)\delta{{x_k}}+
\dfrac{\partial
L}{\partial\dot{{x_k}}}\left(\delta\dfrac{d{x_k}}{dt}
-\dfrac{d}{dt}\delta{x_k}\right)\right)dt.
\end{array}
\]
Hence, by considering that the virtual variation vanishes at the
points $t=t_0$ and $t=t_1$ we obtain the proof of the proposition.
\end{proof}
\begin{corollary}\label{ll}
The differential of the action for a Lagrangian system $\left(\textsc{Q},\,\tilde{L}\right)$ can be
calculated as follows
\[
\delta{F}=-\displaystyle\int_{t_0}^{t_1}\displaystyle\sum_{k=1}^N
E_k\tilde{L}\left(t,{\bf{x}},\dfrac{d{\bf{x}}}{dt}\right)\,\delta{x}_k\,dt.
\]
\end{corollary}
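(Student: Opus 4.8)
The plan is to deduce this from Proposition \ref{LL} by checking that, for an (unconstrained) Lagrangian system $(\textsc{Q},\tilde L)$, the transpositional relations vanish identically, i.e. $\delta\frac{dx_k}{dt}-\frac{d}{dt}\delta x_k=0$ for every $k$. Once this is established, the second group of terms in the integrand of \eqref{ttt} disappears and formula \eqref{ttt} collapses to exactly the stated identity.

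To verify the vanishing of the transpositional relations I would work directly from the definition of the varied path given above: $\textbf{x}^{*}(t,\varepsilon)=\textbf{x}(t)+\varepsilon\,\delta\textbf{x}(t)$, where $\delta\textbf{x}(t)$ is a fixed function of $t$ alone, independent of $\varepsilon$. Differentiating in $t$ gives $\dfrac{d}{dt}\textbf{x}^{*}(t,\varepsilon)=\dot{\textbf{x}}(t)+\varepsilon\,\dfrac{d}{dt}\delta\textbf{x}(t)$, and then differentiating this expression with respect to $\varepsilon$ and evaluating at $\varepsilon=0$ yields
\[
\delta\dfrac{d\textbf{x}}{dt}=\left.\dfrac{\partial}{\partial\varepsilon}\right|_{\varepsilon=0}\dfrac{d}{dt}\textbf{x}^{*}(t,\varepsilon)=\dfrac{d}{dt}\delta\textbf{x}(t),
\]
so component-wise $\delta\dfrac{dx_k}{dt}-\dfrac{d}{dt}\delta x_k=0$ for $k=1,\ldots,N$. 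Substituting into \eqref{ttt} gives the claimed expression for $\delta F$.

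There is essentially no obstacle here; the only point deserving care is the interchange of $\partial/\partial\varepsilon$ and $d/dt$, which is legitimate precisely because the variation is affine in $\varepsilon$ with a $t$-dependent but $\varepsilon$-independent coefficient $\delta\textbf{x}(t)$ — this is exactly the feature of the unconstrained case that forces the transpositional relations to be zero, in contrast with the constrained situation treated later in the paper.
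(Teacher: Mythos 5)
Your proposal is correct and follows the same route as the paper: both deduce the corollary from Proposition \ref{LL} by invoking the vanishing of the transpositional relations $\delta\frac{d\textbf{x}}{dt}-\frac{d}{dt}\delta\textbf{x}=0$ for unconstrained Lagrangian systems. The only difference is that the paper simply cites this fact (referring to Lurie), whereas you verify it explicitly from the affine form of the varied path $\textbf{x}^{*}(t,\varepsilon)=\textbf{x}(t)+\varepsilon\,\delta\textbf{x}(t)$ --- a harmless and indeed welcome addition.
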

\begin{proof}
Indeed,  for the Lagrangian system the transpositional relation is
equal to zero (see for instance \cite{Lurie} page 29), i.e.
\begin{equation}\label{1313}
\delta\dfrac{d\textbf{x}}{dt}-\dfrac{d}{dt}\delta\textbf{x}=0.
\end{equation}
Thus, from Proposition \ref{LL}, it follows the proof of the corollary.
\end{proof}
  The path $\gamma (t)\in\Omega$ is called a {\it motion} of the
Lagrangian systems $\left(\textsc{Q},\,\tilde{L}\right)$ if
$\gamma (t)$ is a { critical point} of the action $F,$ i.e.
\[
\delta{F}\left(\gamma
(t)\right)=0\Longleftrightarrow\displaystyle\int_{t_0}^{t_1}\delta{\tilde{L}}\,dt=0.
\]

 This definition is known as the {\it Hamiltonian variational principle}
 or {\it Hamiltonian variational
principle of least action} or simple {\it Hamiltonian principle}.

\smallskip

Now we need the {\it Lagrange lemma} or {\it fundamental lemma of
calculus of variations} (see for instance \cite{Alekciev})
\begin{lemma}\label{Lagrange}
Let $f$ be a continuous function of the interval $[t_0,\,t_1]$
 satisfying the equation
\[\displaystyle\int_{t_0}^{t_1}f(t)\zeta{(t)}dt=0,\] for arbitrary continuous function
$\zeta(t)$  such that $\zeta (t_0)=\zeta (t_1)=0.$ Then $f(t)\equiv 0.$
\end{lemma}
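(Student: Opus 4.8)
The plan is the classical ``localization plus positive bump'' argument, carried out by contradiction. First I would suppose that $f$ is not identically zero, so there is a point $t_{*}\in[t_0,t_1]$ with $f(t_{*})\neq 0$; since replacing $f$ by $-f$ leaves the hypothesis intact, I may assume $f(t_{*})>0$. By continuity of $f$ on the closed interval $[t_0,t_1]$ there exist a subinterval $[a,b]\subset[t_0,t_1]$ with $a<b$ and a constant $c>0$ such that $f(t)\ge c$ for every $t\in[a,b]$ (if $t_{*}$ happens to be an endpoint of $[t_0,t_1]$ one simply takes $[a,b]$ to be a one-sided neighbourhood of $t_{*}$).

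Next I would produce an admissible test function supported on $[a,b]$. Take $\zeta$ equal to $(t-a)^2(b-t)^2$ on $[a,b]$ and equal to $0$ on $[t_0,t_1]\setminus[a,b]$. One checks that this $\zeta$ is continuous on $[t_0,t_1]$ (the one-sided values at the gluing points $a$ and $b$ agree with the value $0$), that it is nonnegative and strictly positive on the open interval $(a,b)$, and that $\zeta(t_0)=\zeta(t_1)=0$; hence it is an allowed choice in the hypothesis of the lemma. Note that even in the endpoint case $a=t_0$ the factor $(t-a)^2$ forces $\zeta(t_0)=0$, so that situation needs no separate treatment.

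Then I would evaluate the integral against this $\zeta$: the integrand $f\zeta$ vanishes outside $[a,b]$ and on $[a,b]$ it satisfies $f(t)\zeta(t)\ge c\,(t-a)^2(b-t)^2\ge 0$, with strict inequality on $(a,b)$, so
\[
\int_{t_0}^{t_1} f(t)\zeta(t)\,dt \;=\; \int_a^b f(t)\zeta(t)\,dt \;\ge\; c\int_a^b (t-a)^2(b-t)^2\,dt \;>\;0 .
\]
This contradicts the assumption that $\int_{t_0}^{t_1} f(t)\zeta(t)\,dt=0$ for every admissible $\zeta$, and therefore $f\equiv 0$ on $[t_0,t_1]$. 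There is no genuine obstacle in this argument; the only points that deserve an explicit line of justification are the continuity of the piecewise-defined $\zeta$ at $a$ and $b$ and the strict positivity of the elementary integral of $(t-a)^2(b-t)^2$ over $[a,b]$, both of which are immediate.
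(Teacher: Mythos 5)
Your proof is correct and complete: the contradiction set-up, the localization to a subinterval where $|f|\ge c>0$, and the continuous bump $\zeta(t)=(t-a)^2(b-t)^2$ extended by zero together give a strictly positive integral, which is exactly the standard argument for the fundamental lemma of the calculus of variations. The paper itself offers no proof of this lemma, citing only a reference, so there is nothing to contrast with; your write-up would serve as a self-contained justification.
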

\begin{corollary}
 The Hamiltonian
principle for Lagrangian systems is equivalent to the Lagrangian
equations
\begin{equation}\label{Lag}
E_\nu\tilde{L}=\displaystyle\dfrac{d}{dt}\left(\dfrac{\partial
\tilde{L}}{\partial\dot{{x}_\nu}}\right)-\dfrac{\partial\tilde{L}
}{\partial{{x}_\nu}}=0,
\end{equation}
for $\nu=1,\ldots,N.$
\end{corollary}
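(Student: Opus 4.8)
The plan is to read off the expression for the differential of the action from Corollary~\ref{ll} and then feed it into the fundamental lemma of the calculus of variations (Lemma~\ref{Lagrange}), testing against one generalized coordinate at a time. Concretely, for a Lagrangian system $(\textsc{Q},\tilde{L})$ Corollary~\ref{ll} gives
\[
\delta F(\gamma(t))=-\int_{t_0}^{t_1}\sum_{k=1}^N E_k\tilde{L}\left(t,\mathbf{x},\dfrac{d\mathbf{x}}{dt}\right)\delta x_k\,dt,
\]
and by definition the Hamiltonian principle asserts that $\gamma(t)$ is a motion precisely when $\delta F(\gamma(t))=0$ for every admissible family of virtual variations $\delta x_k$, that is, for all functions of class $C^2(t_0,t_1)$ vanishing at $t_0$ and $t_1$.

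For the implication ``Lagrange equations $\Rightarrow$ Hamiltonian principle'' there is nothing to do: if $E_\nu\tilde{L}=0$ for $\nu=1,\dots,N$ along $\gamma(t)$, the integrand above vanishes identically, hence $\delta F(\gamma(t))=0$. For the converse, I would fix an index $\nu\in\{1,\dots,N\}$ and choose the particular variation with $\delta x_k\equiv 0$ for $k\neq\nu$ and $\delta x_\nu=\zeta(t)$, where $\zeta$ is an arbitrary smooth function vanishing at the endpoints. Then $\delta F(\gamma(t))=0$ reduces to
\[
\int_{t_0}^{t_1} E_\nu\tilde{L}\left(t,\mathbf{x},\dfrac{d\mathbf{x}}{dt}\right)\zeta(t)\,dt=0
\]
for every such $\zeta$. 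Since $\tilde{L}$ is smooth and $\mathbf{x}(t)$ is (by assumption on the path) of class $C^2$, the function $t\mapsto E_\nu\tilde{L}$ is continuous on $[t_0,t_1]$, so Lemma~\ref{Lagrange} applies and yields $E_\nu\tilde{L}\equiv 0$. Letting $\nu$ range over $1,\dots,N$ gives the full system \eqref{Lag}.

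I do not expect a genuine obstacle here; the only point requiring a moment's care is the regularity needed to invoke Lemma~\ref{Lagrange}, namely continuity of $E_\nu\tilde{L}$ along the path, which is guaranteed by the standing smoothness hypotheses on $\tilde{L}$ together with the $C^2$ regularity of the paths in $\Omega$ introduced in the definition of the virtual variation. Everything else is the standard ``localization in each coordinate'' argument.
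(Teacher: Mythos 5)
Your argument is correct and follows exactly the paper's own route: the forward implication is immediate from Corollary \ref{ll}, and the converse is obtained by localizing the variation to one coordinate at a time and invoking Lemma \ref{Lagrange}. The only difference is that you spell out the coordinate-by-coordinate choice of $\delta x_k$ and the continuity check, which the paper leaves implicit.
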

\begin{proof}
Clearly, if \eqref{Lag} holds, by Corollary \ref{ll},  $\delta{F}=0.$ The reciprocal
result follows from Lemma \ref{Lagrange}.
\end{proof}
 {F}rom the formal point of view, the Hamiltonian principle in the
form \eqref{LLag} is equivalent to the problem of {\it variational
calculus} \cite{Gelfand,Polak}. However, despite the superficial
similarity, they differ essentially. Namely, in mechanics the symbol
$\delta$ stands for the {its virtual variation}, i.e., it is not an
arbitrary  variation but a displacement compatible with the
constraints imposed on the systems. Thus only in the case of the
holonomic systems, for which the number of degrees of freedom is
equal to the number of generalized coordinates, the virtual
variations are arbitrary and the Hamiltonian principle \eqref{LLag}
is completely equivalent to the corresponding problem of the
variational calculus. An important difference arises for the systems
with nonholonomic constraints, when the variations of the
generalized coordinates are connected by the additional relations
usually called Chetaev conditions which we give later on.

\subsection{ D'Alembert--Lagrange principle}

 Let $L_j:\mathbb{R}\times{T\textsc{Q}}\longrightarrow\mathbb{R}$ be smooth functions for $j=1,\ldots,M.$ The equations
\[
L_j=L_j\left(t,\bf{x},\dot{\bf{x}}\right)=0,\quad \mbox{for}\quad
j=1,\ldots,M< N, \]
with\,$\mbox{rank}\left(\dfrac{\partial(L_1,\ldots,L_M)}
{\partial(\dot{x}_1,\ldots,\dot{x}_N)} \right)=M$ in all the points
of $ \mathbb{R}\times T\textsc{Q},$ except perhaps in a zero
Lebesgue measure set, define $M$ {\it independent constraints} for
the Lagrangian systems $ (\textsc{Q},\tilde{L}).$

\smallskip

Let $\MM^*$ be the submanifold of $\mathbb{R}\times T\textsc{Q}$
defined by the equations \eqref{01111}, i.e.
\[
\MM^*=\{\left({t,\bf x},\,\dot{{\bf x}}\right)\in \mathbb{R}\times
T\textsc{Q}: L_j({t,\bf x},\dot{\bf{ x}})=0,\quad \mbox{for}\quad
j=1,\ldots,M\}.
\]
A {\it constrained Lagrangian system} is a triplet
$(\textsc{Q},\tilde{L},\MM^*).$  The number of degree of freedom
is $\kappa=dim{\textsc{Q}}-M=N-M.$

The constraint is  called  {\it integrable } if it
 can be written in the form
$L_j=\dfrac{d}{dt}\left(G_j(t,\textbf{x})\right)=0,$ for
a convenient function $G_j.$ Otherwise the constraint is called {\it
nonintegrable}. According to Hertz \cite{Hertz} the nonintegrable
constraints are also called {\it nonholonomic.}

The Lagrangian systems with nonintegrable constraints are usually
called (also following to Hertz) the {\it nonholonomic mechanical
systems,} or {\it nonholonomic constrained mechanical systems,} and
with integrable constraints are called the {\it holonomic
constrained mechanical systems} or {\it holonomic constrained
Lagrangian systems}. The systems free of constraints are called
{\it Lagrangian systems or holonomic systems.}

\smallskip

 Sometimes it is also useful to distinguish between constraints
that are dependent on or independent of time.  Those that are
independent of time are called  {\it scleronomic,}  and those that
depend on time are called  {\it rheonomic}. This therminology can
also be applied to the mechanical systems themselves. Thus we say that
the constrained Lagrangian systems is scleronomic (reonomic) if
the constraints and Lagrangian are time independent (dependent).

 The constraints
\begin{equation}
\label{1010}
L_k=\displaystyle\sum_{j=1}^Na_{kj}\dot{x}_j+a_k=0,\quad
\mbox{for}\quad k=1,\ldots,M,
\end{equation}
 where
$a_{kj}=a_{kj}(t,\textbf{x}),\,\,a_k=a_k(t,\textbf{x}),$ are called
{\it linear constraints with respect to the velocity}. For
simplicity we shall call {\it linear constraints.}

\smallskip

We observe that \eqref{1010} admits an equivalent representation
as a Pfaffian equations (for more details see \cite{Pars})
\[\omega_k:=\displaystyle\sum_{j=1}^Na_{kj}dx_j+a_k\,dt=0.\]
We shall consider only two classes of systems of equations, the equations
of constraints linear with respect to the velocity
$(\dot{x}_1,\ldots,\dot{x}_N)$, or linear with respect to the
differential $(dx_1,\ldots,dx_N,dt).$  In order to study the integrability
or nonintegrability problem of the constraints the last representation,
 a Pfaffian system is the  more useful.
 This is related with the fact that for the given 1-forms we have
 the Frobenius theorem which provides the necessary and
 sufficient conditions under which the 1-forms are
closed and consequently the given set of constraints is
integrable.

 The constrains  $L_j({t,\bf
x},\dot{\bf{ x}})=0$  are called {\it perfect constraints} or {\it
ideal} if they satisfy the {\it Chetaev conditions} (see \cite{Chetaev1})
\begin{equation}
\label{103}
 \displaystyle\sum_{k=1}^N\dfrac{\partial
L_\alpha}{\partial\dot{x}_k}\,\delta x_k=0,
\end{equation}
for $\alpha=1,\ldots,M.$

\smallskip

{\it In what follows, we shall consider only perfect constraints.}

If the constraints admit the representation \eqref{qpr} then the
Chetaev conditions takes the form
\[\delta x_\alpha=\displaystyle\sum_{k=M+1}^N\dfrac{\partial
\Phi_\alpha}{\partial\dot{x}_k}\delta x_k.\] The virtual variations
of the variables $x_\alpha$ for $\alpha=1,\ldots,M$ are called {\it
dependent variations} and for the variable $x_\beta$ for
$\beta=M+1,\ldots,N$ are called {\it independent variations}.

\smallskip

We say that the path $\gamma (t)=\textbf{x}(t)$ is {\it admissible}
 with the perfect constraint if
$L_j({t,\bf x}(t),\dot{\bf{ x}}(t))\,\,=0.$

\smallskip

The admissible path is called the {\it motion} of the constrained
Lagrangian systems $(\textsc{Q},\tilde{L},\MM^*)$ if for all
$t\in[t_0,t_1]$
\[\displaystyle\sum_{\nu=1}^N
E_\nu\tilde{L}\left({t,\bf x}(t),\dot{\bf{ x}}(t)\right)\,\delta{x}_\nu(t)=0,
\]
 for all virtual displacement
$\delta{\textbf{x}}(t)$ of the path $\gamma (t).$  This definition
is known as {\it d'Alembert--Lagrange principle}.

It is well known the following result (see for instance
\cite{Arnold,Bloch,Griffiths, NF}).
\begin{proposition}
The d'Alembert--Lagrange principle for constrained Lagrangian
systems is equivalent to the {\it Lagrangian differential
equations with multipliers}
\begin{equation}\label{35}
 \begin{array}{rl}
 E_j\tilde{L}=&{\dfrac{d}{dt}\dfrac{\partial
\tilde{L}}{\partial{\dot{x}_j}}-\dfrac{\partial
\tilde{L}}{\partial{{x}_j}}}=\displaystyle\sum_{\alpha=1}^M{\mu_\alpha\dfrac{\partial
L_\alpha}{\partial{\dot{x_j}}}}, \quad \mbox{for}\quad
j=1,\ldots,N,\vspace{0.2cm}\\
L_j({t,\bf x},\dot{\bf{ x}})=&0,\quad \mbox{for}\quad
j=1,\ldots,M,
\end{array}
\end{equation}
where  $\mu_\alpha$ for $\alpha=1,\ldots,M$ are the Lagrangian
multipliers.
\end{proposition}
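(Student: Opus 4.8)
The plan is to exploit the pointwise character of the d'Alembert--Lagrange principle and reduce it, at each instant, to a statement of elementary linear algebra: a covector that annihilates the kernel of a full--rank linear map lies in the row space of that map. Throughout I fix an admissible path $\gamma(t)=\textbf{x}(t)$, so the constraints $L_j=0$ in \eqref{35} hold automatically, and I only have to identify the first line of \eqref{35}.

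The preliminary (and, I expect, the delicate) step is the following claim: at each $t_*\in(t_0,t_1)$ the set of values $\{\delta\textbf{x}(t_*):\delta\textbf{x}\text{ a virtual displacement of }\gamma\}$ coincides with the subspace $V_{t_*}\subset\mathbb{R}^N$ cut out by the Chetaev conditions \eqref{103} (which, by the rank hypothesis on $\partial(L_1,\ldots,L_M)/\partial(\dot x_1,\ldots,\dot x_N)$, has dimension $N-M$). The inclusion $\subseteq$ is immediate from \eqref{103}. For the reverse inclusion I would reorder coordinates so that the first $M$ columns of $\big(\partial L_\alpha/\partial\dot x_k\big)$ form an invertible block in a neighbourhood of $t_*$, as in \eqref{qpr}; then, given $v\in V_{t_*}$, take the independent variations $\delta x_{M+1},\ldots,\delta x_N$ to be smooth bump functions supported near $t_*$ with the prescribed values $v_{M+1},\ldots,v_N$ at $t_*$, solve the Chetaev conditions for the dependent variations $\delta x_1,\ldots,\delta x_M$, and extend by zero. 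Since $v$ itself satisfies \eqref{103}, this produces a genuine virtual displacement (of class $C^2$, vanishing at $t_0,t_1$, satisfying \eqref{103} for all $t$) whose value at $t_*$ is $v$.

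Granting the claim, I apply the d'Alembert--Lagrange principle to these virtual displacements and evaluate the identity at $t=t_*$: the covector $\big(E_1\tilde L,\ldots,E_N\tilde L\big)$ along $\gamma$ annihilates $V_{t_*}=\ker\big(\partial L_\alpha/\partial\dot x_k\big)$ for every $t_*$. Hence it lies in the span of the rows $\big(\partial L_\alpha/\partial\dot x_1,\ldots,\partial L_\alpha/\partial\dot x_N\big)$, $\alpha=1,\ldots,M$, so there exist scalars $\mu_\alpha(t_*)$ with $E_j\tilde L=\sum_{\alpha=1}^M\mu_\alpha\,\partial L_\alpha/\partial\dot x_j$; the rank condition makes the $\mu_\alpha$ unique, and Cramer's rule exhibits them as smooth functions of $t$ along the smooth path. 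Together with the admissibility equations $L_j=0$ this is exactly \eqref{35}. (Alternatively, working with $\delta x_{M+1},\ldots,\delta x_N$ as free $C^2$ functions and invoking Lemma~\ref{Lagrange} gives the same pointwise identity without fixing $t_*$.)

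The converse is a one--line computation. If \eqref{35} holds, then for every virtual displacement of $\gamma$ one has
\[
\sum_{j=1}^N E_j\tilde L\,\delta x_j=\sum_{\alpha=1}^M\mu_\alpha\Big(\sum_{j=1}^N\frac{\partial L_\alpha}{\partial\dot x_j}\,\delta x_j\Big)=0
\]
by \eqref{103}, and $\gamma$ is admissible because $L_j=0$; hence the d'Alembert--Lagrange principle is satisfied. So the only real work is the construction of virtual displacements realising an arbitrary prescribed vector of $V_{t_*}$ while remaining admissible for all $t$; once that is settled both implications follow routinely.
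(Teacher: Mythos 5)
Your argument is correct, and in fact the paper offers no proof of this proposition at all: it is stated as a well-known result with a pointer to \cite{Arnold,Bloch,Griffiths,NF}, so there is nothing in the text to compare your route against. What you give is the standard argument, and you correctly identify the only nontrivial step, namely that every vector of the Chetaev subspace $V_{t_*}$ is realized as the value at $t_*$ of an admissible virtual displacement; your construction (bump functions for the $N-M$ independent components, the dependent components recovered by inverting the $M\times M$ block as in \eqref{qpr}, uniqueness of that solution forcing the value at $t_*$ to be $v$) settles it, after which the annihilator-of-the-kernel argument and Cramer's rule deliver existence, uniqueness and smoothness of the $\mu_\alpha$. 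Two minor caveats worth recording: the conclusion is only guaranteed where the rank condition \eqref{TT} actually holds, i.e.\ off the zero Lebesgue measure exceptional set that the paper itself allows throughout; and the dependent components of your displacement inherit only the regularity of $\partial L_\alpha/\partial\dot x_k$ along the path, which is harmless for the smooth trajectories considered here but should be said if one insists on $C^2$ virtual variations. Neither point affects the validity of the proof.
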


\subsection{The varied path}

The varied path produced in Hamiltonian's principle is not in
general an admissible path if the perfect constraints are nonholonomic,
i.e. the mechanical systems cannot travel along the
varied path without violating the constraints. We prove the
following result, which shall play an important role in the all
assertions below.
\begin{proposition}\label{QP}
If the varied path is an admissible path then, the following
relations hold
\begin{equation}\label{Nat1}
\displaystyle\sum_{k=1}^N\dfrac{\partial
L_\alpha}{\partial\dot{x}_k}\left(\delta \dfrac{d
x_k}{dt}-\dfrac{d}{dt}\delta
x_k\right)=\displaystyle\sum_{k=1}^NE_kL_\alpha\,\delta{x_k},
\end{equation}
for $\alpha=1,\ldots,M.$
\end{proposition}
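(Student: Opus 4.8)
The plan is to start from the defining property of the varied path and differentiate the constraint identity along it. Since the varied path $\textbf{x}^*(t,\varepsilon)$ is assumed admissible, we have $L_\alpha\bigl(t,\textbf{x}^*(t,\varepsilon),\tfrac{d}{dt}\textbf{x}^*(t,\varepsilon)\bigr)=0$ identically in $(t,\varepsilon)$. First I would apply $\partial/\partial\varepsilon$ at $\varepsilon=0$ to this identity; by the chain rule this produces
\[
\displaystyle\sum_{k=1}^N\left(\dfrac{\partial L_\alpha}{\partial x_k}\,\delta x_k+\dfrac{\partial L_\alpha}{\partial\dot{x}_k}\,\delta\dfrac{dx_k}{dt}\right)=0,
\]
where $\delta\tfrac{dx_k}{dt}$ denotes $\partial_\varepsilon\bigl(\tfrac{d}{dt}x_k^*\bigr)\big|_{\varepsilon=0}$, consistently with the notation in \eqref{13}. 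This is the raw admissibility condition at first order.

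Next I would rewrite the term $\tfrac{\partial L_\alpha}{\partial\dot{x}_k}\,\delta\tfrac{dx_k}{dt}$ by inserting and subtracting $\tfrac{d}{dt}\delta x_k$:
\[
\dfrac{\partial L_\alpha}{\partial\dot{x}_k}\,\delta\dfrac{dx_k}{dt}=\dfrac{\partial L_\alpha}{\partial\dot{x}_k}\left(\delta\dfrac{dx_k}{dt}-\dfrac{d}{dt}\delta x_k\right)+\dfrac{\partial L_\alpha}{\partial\dot{x}_k}\,\dfrac{d}{dt}\delta x_k.
\]
Summing over $k$ and substituting into the first-order admissibility identity, the claim \eqref{Nat1} is equivalent to showing that the remaining part equals $\sum_k E_k L_\alpha\,\delta x_k$ with the opposite sign, i.e.
\[
\displaystyle\sum_{k=1}^N\left(\dfrac{\partial L_\alpha}{\partial x_k}\,\delta x_k+\dfrac{\partial L_\alpha}{\partial\dot{x}_k}\,\dfrac{d}{dt}\delta x_k\right)=-\displaystyle\sum_{k=1}^N E_k L_\alpha\,\delta x_k+\displaystyle\sum_{k=1}^N\dfrac{\partial L_\alpha}{\partial\dot{x}_k}\left(\delta\dfrac{dx_k}{dt}-\dfrac{d}{dt}\delta x_k\right).
\]
The key algebraic step is the identity
\[
\displaystyle\sum_{k=1}^N\dfrac{\partial L_\alpha}{\partial\dot{x}_k}\,\dfrac{d}{dt}\delta x_k=\dfrac{d}{dt}\left(\displaystyle\sum_{k=1}^N\dfrac{\partial L_\alpha}{\partial\dot{x}_k}\,\delta x_k\right)-\displaystyle\sum_{k=1}^N\left(\dfrac{d}{dt}\dfrac{\partial L_\alpha}{\partial\dot{x}_k}\right)\delta x_k,
\]
which is just the Leibniz rule; using the Chetaev condition \eqref{103}, namely $\sum_k\tfrac{\partial L_\alpha}{\partial\dot{x}_k}\delta x_k=0$, the total-derivative term vanishes, leaving $-\sum_k\bigl(\tfrac{d}{dt}\tfrac{\partial L_\alpha}{\partial\dot{x}_k}\bigr)\delta x_k$. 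Combining this with the $\tfrac{\partial L_\alpha}{\partial x_k}\delta x_k$ term reconstitutes exactly $-\sum_k E_k L_\alpha\,\delta x_k$ by the definition of $E_k$, and rearranging yields \eqref{Nat1}.

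The only delicate point is the legitimacy of differentiating the admissibility identity under the integral/composition and the interchange $\partial_\varepsilon\bigl(\tfrac{d}{dt}x_k^*\bigr)=\tfrac{d}{dt}\bigl(\partial_\varepsilon x_k^*\bigr)$ is \emph{not} assumed (that interchange is precisely the transpositional relation, which is allowed to be nonzero here), so I would be careful to keep $\delta\tfrac{dx_k}{dt}$ and $\tfrac{d}{dt}\delta x_k$ as distinct objects throughout and never silently identify them; the smoothness of $\gamma^*$ in both variables justifies the chain-rule differentiation and the Leibniz step. Everything else is routine manipulation with the definition of $E_k$ and the Chetaev condition, so the main obstacle is essentially bookkeeping: making sure the transpositional term $\delta\tfrac{dx_k}{dt}-\tfrac{d}{dt}\delta x_k$ is isolated correctly and that the Chetaev condition is applied to the right combination.
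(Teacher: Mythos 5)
Your proposal is correct and follows essentially the same route as the paper's proof: first-order variation of the constraint along the admissible varied path, combined with the time-invariance of the Chetaev condition $\sum_k\frac{\partial L_\alpha}{\partial\dot{x}_k}\delta x_k=0$ (you phrase it via the Leibniz rule, the paper by differentiating the Chetaev identity in $t$ and subtracting, which is the same manipulation). Your explicit remark that $\delta\frac{dx_k}{dt}$ and $\frac{d}{dt}\delta x_k$ must be kept distinct throughout is exactly the point the paper relies on implicitly.
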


\begin{proof}
Indeed, the original path $\gamma(t)=\textbf{x}(t)$ by definition
satisfies the Chetaev conditions, and constraints, i.e.
$L_j\left(t,\textbf{x}(t),\dot{\textbf{x}}(t)\right)=0.$ If we
suppose that the variation path
$\gamma^*(t)=\textbf{x}(t)+\varepsilon\delta{\textbf{x}}(t),$ also
satisfies the constraints i.e.
\[
L_j\left(t,\textbf{x}+\varepsilon\delta\textbf{x},
\dot{\textbf{x}}+\varepsilon\delta\dot{\textbf{x}}\right)=
L_j\left(t,\textbf{x}(t),\dot{\textbf{x}}(t)\right)+
\varepsilon\delta\,L_\alpha\left(t,\textbf{x}(t),
\dot{\textbf{x}}(t)\right)+\ldots=0.
\]
Thus restricting only to the
 terms of first order with respect
 to $\varepsilon$ and by the Chetaev conditions we have (for simplicity we omitted the argument)
\begin{equation}\label{Nat}
\begin{array}{rl}
0=&\delta\,L_\alpha=\displaystyle\sum_{k=1}^N\left(\dfrac{\partial
L_\alpha}
{\partial\,{x}_k}\delta{{x}_k}+\dfrac{\partial
L_\alpha}
{\partial\dot{{x}_k}}\delta{\dot{{x}_k}}\right),\vspace{0.2cm}\\
0=&\displaystyle\sum_{k=1}^N\dfrac{\partial
L_\alpha}{\partial\dot{x}_k}\,\delta
x_k,
\end{array}
\end{equation}
for $ \alpha=1,\ldots,M.$ The Chetaev conditions are
satisfied at each instant, so
\[
\dfrac{d}{dt}\left(\displaystyle\sum_{k=1}^N\dfrac{\partial
L_\alpha}{\partial\dot{x}_k}\,\delta
x_k\right)=\displaystyle\sum_{k=1}^N\dfrac{d}{dt}\left(\dfrac{\partial
L_\alpha}{\partial\dot{x}_k}\right)\,\delta
x_k+\displaystyle\sum_{k=1}^N\dfrac{\partial
L_\alpha}{\partial\dot{x}_k}\dfrac{d}{dt}\delta x_k=0.
\]
 Subtracting these relations from \eqref{Nat} we
obtain \eqref{Nat1}. Consequently if the varied path is an
admissible path, then relations \eqref{Nat1} must hold.
\end{proof}

\smallskip

From \eqref{Nat1} and \eqref{23} it follows that the elements of the matrix $A$ satisfy
\begin{equation}\label{ff1}
\displaystyle\sum_{m=1}^N\delta{x_m}\left(E_mL_\alpha-
\displaystyle\sum_{k=1}^NA_{k\,m}\dfrac{\partial
L_\alpha}{\partial\dot{x}_k}\right)=\displaystyle\sum_{m=1}^N\delta{x_m}D_mL_\alpha= 0,\quad\mbox{for}\quad
\alpha=1,\ldots,M.
\end{equation}
  This property will be used below.

\smallskip

\begin{corollary}\label{main11}
For the holonomic constrained Lagrangian systems the relations
\eqref{Nat1} hold if and only if
\begin{equation}\label{Val}
\displaystyle\sum_{k=1}^N\dfrac{\partial
L_\alpha}{\partial\dot{x}_k}\left(\delta \dfrac{d
x_k}{dt}-\dfrac{d}{dt}\delta x_k\right)=0,\quad\mbox{for}\quad
\alpha=1,\ldots,M.
\end{equation}
\end{corollary}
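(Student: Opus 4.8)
The plan is to reduce the asserted equivalence to a single observation: for a holonomic constrained system the Lagrangian derivative kills each constraint function. I would start from Proposition \ref{QP}, which supplies the identity \eqref{Nat1} whenever the varied path is admissible, and compare it term-by-term with \eqref{Val}. The two relations differ exactly by the quantity $\displaystyle\sum_{k=1}^N E_kL_\alpha\,\delta x_k$, the right-hand side of \eqref{Nat1}. Hence the whole corollary amounts to showing that this sum vanishes identically for a holonomic constrained Lagrangian system, irrespective of the choice of virtual variations.

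To establish that, I would invoke the definition of integrable (holonomic) constraint given above: each $L_\alpha$ can be written as $L_\alpha=\dfrac{d}{dt}G_\alpha(t,\textbf{x})$ for a suitable function $G_\alpha$ depending only on $t$ and $\textbf{x}$. Thus $L_\alpha$ is of the form $\dfrac{df}{dt}$ with $f=G_\alpha$, and the gauge-invariance identity \eqref{An1} gives $E_kL_\alpha=E_k\dfrac{dG_\alpha}{dt}=0$ for every $k=1,\ldots,N$ and every $\alpha=1,\ldots,M$. Consequently $\displaystyle\sum_{k=1}^N E_kL_\alpha\,\delta x_k=0$, so the right-hand side of \eqref{Nat1} is zero and \eqref{Nat1} collapses to \eqref{Val}.

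The two implications then follow immediately. If \eqref{Nat1} holds, its right-hand side being $0$ forces its left-hand side to vanish, which is precisely \eqref{Val}; conversely, if \eqref{Val} holds, the left-hand side of \eqref{Nat1} equals $0$, which coincides with its (already vanishing) right-hand side, so \eqref{Nat1} holds. There is no genuine obstacle in this argument; the only point that deserves care is that the notion of integrability in force must be exactly $L_\alpha=\dfrac{d}{dt}G_\alpha(t,\textbf{x})$, so that the gauge-invariance identity \eqref{An1} is applicable — and this is guaranteed by the definition of integrable constraint adopted earlier.
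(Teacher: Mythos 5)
Your argument is correct and coincides with the paper's own proof: both reduce the equivalence to the observation that integrability of the constraints, $L_\alpha=\dfrac{d}{dt}G_\alpha(t,\textbf{x})$, together with the gauge-invariance identity \eqref{An1} forces $E_kL_\alpha=0$, so the right-hand side of \eqref{Nat1} vanishes and \eqref{Nat1} collapses to \eqref{Val}. Your write-up merely spells out the two trivial implications more explicitly than the paper does.
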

\begin{proof}
Indeed, for holonomic constrained Lagrangian systems the
constraints are integrable, consequently in view of \eqref{An1}
we have $E_kL_\alpha=0$ for $k=1,\ldots,N$ and $\alpha=1,\ldots,M.$  Thus,
from \eqref{Nat1}, we obtain \eqref{Val}.
\end{proof}
Clearly the equalities  \eqref{Val} are satisfied  if \eqref{1313}
holds. We observe that in general for holonomic constrained
Lagrangian systems relation  \eqref{1313} cannot hold (see example
2).

\subsection{Transpositional relations}

 As we observe in the previous subsection for nonholonomic constrained Lagrangian
 systems the curves, obtained doing a  virtual
variation in the motion of the systems, in general are not
kinematical possible trajectories when  \eqref{1313} {is not
fulfilled}. This leads to the conclusion that the Hamiltonian
principle cannot be applied to nonholonomic systems, as it is
usually employed for holonomic systems. The essence of the problem
of the applicability of this principle for nonholonomic systems
remains unclarified (see \cite{NF}). In order to clarify this
situation, it is sufficient to note that the question of the
applicability of the principle of stationary action to nonholonomic
systems is intimately related to the question of transpositional
relation.

 The key point is that the Hamiltonian
principle assumes that the operation of differentiation with
respect to the time $\dfrac{d }{dt}$ and the virtual variation
$\delta $ commute  in  all the generalized coordinate systems.

For the holonomic constrained Lagrangian systems relations
\eqref{1313} cannot hold (see Corollary \ref{main11}). For
 a nonholonomic systems the form of the
Hamiltonian principle will depend on the  point of view
adopted with respect to the transpositional relations.

 What are then the correct transpositional relations? Until
now, does not exist a common point of view concerning to
 the commutativity of the operation of differentiation with respect to the time
and the virtual variation when there are nonintegrable
constraints. Two points of view have been maintained. According to
one (supported, for example, by Volterra,
Hamel, H{\"o}lder,  Lurie,\,Pars,\ldots), the operations $\dfrac{d}{dt}$ and
$\delta$ commute for all the generalized coordinates, independently if
 the systems are holonomic or nonholonomic, i.e.
\[\delta \dfrac{d
x_k}{dt}-\dfrac{d}{dt}\delta x_k=0,\quad \mbox{for}\quad
k=1,\ldots,N.\]  According to the other point of view (supported by
Suslov, Voronets, Levi-Civita, Amaldi,\ldots) the operations
$\dfrac{d}{dt}$ and $\delta$ commute always for holonomic systems,
and for nonholonomic systems with the constraints
\[\dot{x}_\alpha=\displaystyle\sum_{j=M+1}^Na_{\alpha
j}(t,\textbf{x})\dot{x}_j+a_\alpha(t,\textbf{x}),\quad\mbox{for}\quad
\alpha=1,\ldots,M.\]
 the transpositional relations are equal to zero only for the
generalized coordinates $x_{M+1},\ldots,x_N,$ ( for which their
virtual variations are independent). For the remaining coordinates
$x_1,\ldots,x_M,$ (for which their virtual variations are
dependent), the transpositional relations must be derived on the
basis of the equations of the nonholonomic constraints, and cannot
 be identically\,zero, i.e.
\[\begin{array}{rl}
\delta \dfrac{d x_k}{dt}-\dfrac{d}{dt}\delta x_k=&0,\quad
\mbox{for}\quad k=M+1,\ldots,N\\
\delta \dfrac{d x_k}{dt}-\dfrac{d}{dt}\delta x_k\ne&0,\quad
\mbox{for}\quad k=1,\ldots,M.\end{array}\]
  The second point of view acquired general
acceptance and the first point of view was considered erroneous
(for more details see \cite{NF}).
 The meaning of the transpositional relations \eqref{13} can be found in
 \cite{Kirgetov,Lurie,Marushin,NF}.

\smallskip

In the results given in the following section play a key role the
equalities \eqref{Nat1}. From these equalities  and from the
examples it will be  possible to observe that the second point of
view is correct only for the so called Voronets--Chaplygin {systems,
and in general  for  locally  nonholonomic systems. There exist many
examples for which the independent virtual variations generated
non--zero transpositional relations. Thus we propose a third point
of view on the transpositional relations: the virtual variations can
generate the transpositional relations given by the formula
\eqref{23} where the elements of the matrix $A$ satisfies the
conditions (see formula \eqref{ff1})

\begin{equation}\label{ValF}
D_\nu L_\alpha=E_\nu L_\alpha-
\displaystyle\sum_{k=1}^NA_{k\,\nu}\dfrac{\partial
L_\alpha}{\partial\dot{x}_k}   = 0,\quad \mbox{for}\quad \nu
=1,\ldots,M,\quad \alpha =1,\ldots,M.\end{equation} we observe that
here the $L_\alpha=0$ are constraints which in general are nonlinear
in the velocity.

\smallskip

\subsection{Hamiltonian--Suslov principle}
 After the  introduction of  the nonholonomic mechanics by Hertz,
  it appeared the question  of extending to the nonholonomic
   mechanics the results of  the holonomic mechanics. Hertz
\cite{Hertz}  was the first in studying the problem of
applying  the Hamiltonian principle to  systems with
nonintegrable constraints. In \cite{Hertz} Hertz wrote:
``Application of Hamilton's principle to any material systems does
not exclude that between selected coordinates of the systems rigid
constraints exist, but it still requires that these relations
could be expressed by integrable constraints. The appearance of
nonintegrable constraints  is unacceptable. In this case  the
Hamilton's principle is not valid."
  Appell \cite{Appell} in correspondence with
Hertz's ideas affirmed that it is not possible to apply the
Hamiltonian principle for systems with nonintegrable constraints

Suslov  \cite{Suslov} claimed that ''Hamilton's principle is not
applied to systems with  nonintegrable constraints, as derived based
on this equation are different from the corresponding equations of
Newtonian mechanics".

The applications of the most general differential  principle, i.e. the
d'Alembert--Lagrange and their equivalent Gauss and Appel
principle, is complicated due to the presence of the terms containing
the second order derivative. On the other hand the most general
variational integral principle of Hamilton is not valid for
nonholonomic constrained Lagrangian systems. The generalization of
the Hamiltonian principle for nonholonomic
 mechanical systems  was deduced  by Voronets and
 Suslov (see for instance \cite{Suslov,Voronets}).  As we can observe later on from this
principle follows the importance of the transpositional relations
to determine the correct equations of motion for nonholonomic
constrained Lagrangian systems.

\begin{proposition}
The d'Alembert--Lagrangian principle for the contrained Lagrangian
systems $\displaystyle\sum_{k=1}^N\delta{x}_kE_k\tilde{L}=0$
 is equivalent to the Hamilton--Suslov principle \eqref{Su11}
where we  assume that $\delta{x_\nu(t)},\quad \nu=1,\ldots, N,$
are arbitrary smooth functions defined in the interior of the
interval $[t_0,\,t_1]$ and  vanishing at its endpoints, i.e.,
$\delta{x_\nu}({t_0})=\delta{x_\nu}({t_1})=0.$
\end{proposition}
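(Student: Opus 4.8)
The plan is to show the equivalence by rewriting the Hamilton--Suslov integral \eqref{Su11} so that the transpositional-relation term becomes a boundary term plus the d'Alembert--Lagrange expression. First I would apply Proposition \ref{LL} to the Lagrangian $\tilde{L}$: this gives
\[
\int_{t_0}^{t_1}\delta\tilde{L}\,dt
=-\int_{t_0}^{t_1}\sum_{k=1}^N\left(E_k\tilde{L}\,\delta x_k
-\frac{\partial\tilde{L}}{\partial\dot{x}_k}
\left(\delta\frac{dx_k}{dt}-\frac{d}{dt}\delta x_k\right)\right)dt .
\]
Substituting this into the left-hand side of \eqref{Su11}, the two occurrences of the transpositional term
$\sum_k\frac{\partial\tilde{L}}{\partial\dot{x}_k}\bigl(\delta\frac{dx_k}{dt}-\frac{d}{dt}\delta x_k\bigr)$
cancel exactly, and one is left with
\[
\int_{t_0}^{t_1}\!\left(\delta\tilde{L}-\sum_{j=1}^N\frac{\partial\tilde{L}}{\partial\dot{x}_j}
\left(\delta\frac{dx_j}{dt}-\frac{d}{dt}\delta x_j\right)\right)dt
=-\int_{t_0}^{t_1}\sum_{k=1}^N E_k\tilde{L}\,\delta x_k\,dt .
\]
Thus the Hamilton--Suslov principle \eqref{Su11} holds \emph{identically in the choice of $\delta x_\nu$ and of the transpositional relations} if and only if $\int_{t_0}^{t_1}\sum_k E_k\tilde{L}\,\delta x_k\,dt=0$, which is precisely the integral form of the d'Alembert--Lagrange principle $\sum_k\delta x_k E_k\tilde{L}=0$.

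For the remaining direction I would argue as follows. If $\sum_k\delta x_k E_k\tilde{L}=0$ pointwise for every admissible virtual displacement (the d'Alembert--Lagrange principle), then the right-hand integral above vanishes, hence \eqref{Su11} holds. Conversely, if \eqref{Su11} holds for all admissible $\delta x_\nu(t)$ vanishing at the endpoints, then $\int_{t_0}^{t_1}\sum_k E_k\tilde{L}\,\delta x_k\,dt=0$ for all such $\delta x_\nu$; applying the Lagrange Lemma (Lemma \ref{Lagrange}) componentwise along the subspace of admissible variations — i.e.\ those satisfying the Chetaev conditions \eqref{103} — yields $\sum_k E_k\tilde{L}\,\delta x_k=0$ for all $t$, which is the d'Alembert--Lagrange principle.

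The computation itself is short; the one point that needs care is the \emph{admissibility} of the variations. The cancellation of the transpositional terms is purely algebraic and does not use any hypothesis on $A$ or on $\delta\frac{dx_k}{dt}-\frac{d}{dt}\delta x_k$, so \eqref{Su11} as written is in fact independent of the adopted point of view on the transpositional relations — this is the observation the authors want to highlight. The subtlety is that in the converse direction one cannot invoke the Lagrange Lemma for arbitrary $\delta x_\nu$, since the $\delta x_\nu$ are constrained by the Chetaev conditions; the correct statement is that $\sum_k E_k\tilde{L}\,\delta x_k$ vanishes on the constraint distribution, which is exactly the content of the d'Alembert--Lagrange principle. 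I expect this bookkeeping about which variations are allowed to be the only genuine obstacle; once it is handled the equivalence follows immediately from Proposition \ref{LL} and Lemma \ref{Lagrange}.
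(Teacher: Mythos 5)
Your argument is correct and is essentially the paper's own proof: the identity you obtain from Proposition \ref{LL} is exactly the pointwise integration-by-parts identity the authors write out directly (starting from $-\sum_k\delta x_k E_k\tilde{L}$ and extracting the total derivative $\frac{d}{dt}\bigl(\frac{\partial\tilde{L}}{\partial\dot{x}_k}\delta x_k\bigr)$ before integrating), so the two proofs differ only in packaging. Your added remarks on the converse direction and on restricting the Lagrange lemma to Chetaev-admissible variations are sound and in fact slightly more careful than the paper, which leaves that direction implicit.
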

\begin{proof}
{F}rom the
d'Alembert--Lagrangian principle we obtain the identity
\[\begin{array}{rl}
0=&-\displaystyle\sum_{k=1}^N\delta{x}_kE_k\tilde{L}=
\displaystyle\sum_{k=1}^N\delta{x}_k\dfrac{\partial
\tilde{L}}{\partial
x_k}-\displaystyle\sum_{k=1}^N\delta{x}_k\dfrac{d}{dt}\dfrac{\partial
\tilde{L}}{\partial\dot{x}_k}\vspace{0.2cm}\\
=&\displaystyle\sum_{k=1}^N\left(\delta{x}_k\dfrac{\partial
\tilde{L}}{\partial x_k}+\delta{\dot{x}}_k\dfrac{\partial
\tilde{L}}{\partial\dot{x}_k}\right)-
\displaystyle\sum_{k=1}^N\left(\left(\delta\dfrac{dx_k}{dt}-
\dfrac{d}{dt}\delta{x_k}\right)\dfrac{\partial \tilde{L}}{\partial
\dot{x}_k}-\dfrac{d}{dt}\left(\dfrac{\partial \tilde{L}}{\partial
\dot{x}_k}\delta{x_k}\right)\right)\vspace{0.2cm}\\
=&\delta\tilde{L}-\displaystyle\sum_{k=1}^N\left(\left(\delta\dfrac{dx_k}{dt}-
\dfrac{d}{dt}\delta{x_k}\right)\dfrac{\partial \tilde{L}}{\partial
\dot{x}_k}-\dfrac{d}{dt}\left(\dfrac{\partial \tilde{L}}{\partial
\dot{x}_k}\delta{x_k}\right)\right),
\end{array}
\]
where $\delta \tilde{L}$ is a variation of the Lagrangian $\tilde{L}$.
 After the integration and
assuming that $\delta x_k(t_0)= 0,\,\delta x_k(t_1)=0$ we easily
obtain \eqref{Su11}, which represent the most general formulation of the Hamiltonian
principle ({\it{Hamilton--Suslov principle}}) suitable for
constrained and unconstrained Lagrangian systems.
\end{proof}
 Suslov determine the
transpositional relations only for the case when the constraints are
of Voronets type, i.e. given by the formula \eqref{Vor}.
  Assume that
\[\delta\dfrac{dy_k}{dt}-
\dfrac{d}{dt}\delta{y_k}=0,\quad\mbox{for}\quad k=M+1,\ldots,N,\]
Voronets and Suslov  deduced that
\[\delta\dfrac{dx_k}{dt}-
\dfrac{d}{dt}\delta{x_k}=\displaystyle\sum_{k=1}^NB_{kr}\delta{y_r}-\delta{a_k}\]
for convenient functions
$B_{kr}=B_{kr}\left(t,\textbf{x},\textbf{y},\dot{\textbf{x}},\dot{\textbf{y}}\right),$
for $r=M+1,\ldots,N$ and $k=1,\ldots,M.$

Thus we obtain
\[
\displaystyle\int_{t_0}^{t_1}\left(\delta\,\tilde{L}-
\displaystyle\sum_{k=1}^N\dfrac{\partial{\tilde{L}}}{\partial\dot{x}_j}
\left( \displaystyle\sum_{k=1}^NB_{kr}\delta{y_r}-\delta{a_k}
\right)\right)dt=0,
\]
This is the Hamiltonian principle for nonholonomic systems in the
Suslov form  (see for instance \cite{Suslov}). We observe that the
same result was deduced by Voronets in \cite{Voronets}.

It is important to observe that Suslov and Voronets require a priori
that the independent virtual variations produce the zero
transpositional relations. At the sometimes these authors consider
only linear  constraints with respect to the velocity of the type
\eqref{Vor}.

\subsection{Modification of the vakonomic mechanics (MVM)} As we
observe in the introduction, the main objective of this paper is
to construct the variational equations of motion  describing the
behavior of the constrained Lagrangian systems in which the
equalities \eqref{Nat1} take place in the most general possible
way.  We shall show that the d'Alembert--Lagrange
 principle is not the only way to deduce the equations of
 motion for the constrained Lagrangian systems. Instead of it we can
 apply  the generalization of the Hamiltonian  principle, whereby
 the motions of such systems are extremals of the {\it variational
 Lagrange problem} (see for instance \cite{Gelfand}), i.e. the problem of determining the
 critical points of the action in the class of curves with fixed
 endpoints and satisfying the constraints. The solution of this
 problem as we shall see will give the differential equations of second order
 which coincide with the well--known classical equations of the
 mechanics except perhaps in a zero Lebesgue measure set.

\smallskip

From the previous section we deduce that in order to generalize
the Hamiltonian principle to nonholonomic systems we must take
into account the following relations
\[
\begin{array}{rl}
&\mbox{(A)}\qquad\delta
L_\alpha=\displaystyle\sum_{j=1}^N\left(\dfrac{\partial
L_\alpha}{\partial x_j}\delta x_j+\dfrac{\partial
L_\alpha}{\partial \dot{x}_j}\delta
\dot{x}_j\right)=0\quad\mbox{for}\quad
\alpha=1,\ldots,M,\vspace{0.2cm}\\
&\mbox{(B)}\qquad\displaystyle\sum_{j=1}^N\dfrac{\partial
L_\alpha}{\partial \dot{x}_j}\delta{x_j}=0\quad\mbox{for}\quad
\alpha=1,\ldots,M,\vspace{0.2cm}\\
&\mbox{(C)}\qquad\delta\dfrac{dx_j}{dt}-
\dfrac{d}{dt}\delta{x_j}=0\quad\mbox{for}\quad j=1,\ldots,N,
\end{array}
\]
where $L_\alpha=0$ for $\alpha=1,\ldots,M$ are the constraints.

\smallskip

A lot of authors consider that (C)
 is always fulfilled (see for instance
\cite{Lurie,Pars}), together with the conditions (A) and (B).
However these conditions are incompatible in the case of the
nonintegrable constrains.
 We observe that these authors deduced that the Hamiltonian principle is not applicable to the
nonholonomic systems.

 To obtain a generalization of the Hamiltonian principle for the
 nonholonomic mechanical systems, some of these three conditions must be excluded.

  In particular for the H{\"o}lder principle conditions (A) is excluded and keep (B) and (C) (see \cite{Holder}).
   For the Hamiltonian--Suslov principle condition (A)
and (B) {hold, and}  (C) only holds  for the independent variations.

\smallskip

In this paper we extend  the Hamiltonian principle by supposing that
conditions (A) and (B) hold and (C) does not hold . Instead of (C)
we consider that \eqref{23} holds where elements of matrix $A$
satisfy the relations \eqref{ValF}.

\section{Solution of the inverse problem of the constrained Lagrangian systems}

We shall determine  the equations of motion of the  constrained
Lagrangian systems using the Hamiltonian principle with non zero
transpositional relations, whereby the motions of the systems are
{\it extremals} of the {\it variational Lagrange's problem} (see for
instance \cite{Gelfand}), i.e. are the critical points of the action
functional
\[\displaystyle\int_{t_0}^{t_1}L_0\left(t,\textbf{x},\dot{\textbf{x}}\right)\,dt,\]
in the class of path with fixed endpoints satisfying the independent
constraints \[
L_j\left(t,\textbf{x},\dot{\textbf{x}}\right)=0,\quad\mbox{for}\quad
j=1,\ldots,M.
\]
In the classical solution of the Lagrange problem usually we apply
the {\it Lagrange multipliers method} which consists in the
following. We introduce the additional coordinates
$\Lambda=\left(\lambda_1,\ldots,\lambda_M\right),$ and Lagrangian
 $
\widehat{L}:\mathbb{R}\times{T\textsc{Q}}\times\mathbb{R}^M
\longrightarrow\mathbb{R}
 $
 given by
\[\widehat{ L}\left(t,\textbf{x},\dot{\textbf{x}},\Lambda
\right)=L_0\left(t,\textbf{x},\dot{\textbf{x}}\right)-
\displaystyle\sum_{j=1}^M\lambda_j\,L_j\left(t,\textbf{x},
\dot{\textbf{x}}\right),
\]
Under this choice we reduce the Lagrange problem to a variational
problem without constraints, i.e. we must determine the extremal
of the action functional
$
\displaystyle\int_{t_0}^{t_1}\widehat{L}\,dt.
$
We shall study a  slight {\it modification of the Lagrangian
multipliers  method}.  We introduce the additional coordinates
$\Lambda=\left(\lambda_1,\ldots,\lambda_M\right),$ and the
Lagrangian on $\mathbb{R}\times{T\textsc{Q}}\times\mathbb{R}^M$
given by the formula \eqref{21},
 where we assume that
$\lambda^0_j$ are arbitrary constants,  and
 $L_{j}$  are arbitrary functions for $j=M+1,\ldots,N.$

\smallskip

 Now we determine the
critical points of the action functional $
\displaystyle\int_{t_0}^{t_1}L\left(t,\textbf{x},\dot{\textbf{x}},\Lambda\right)dt,
$ i.e. we determine the path $\gamma(t)$  such that
$\displaystyle\int_{t_0}^{t_1}\delta\left(L\left(t,\textbf{x},\dot{\textbf{x}},\Lambda\right)\right)dt=0$
 under the additional condition that the
transpositional  relations are  given by the formula \eqref{23}.

\smallskip

{The solution} {of the inverse problem stated in section 2 is the
following}. Differential equations obtained from \eqref{023} are
given by the formula \eqref{24} (see Theorem \ref{A}). We choose
the arbitrary functions $L_j$  in such a away that the matrix
$W_1$ and $W_2$ given in Theorems \ref{A1} and \ref{A2} are
nonsingular, except perhaps in a zero Lebesgue measure set. The constants
$\lambda^0_j$ for $j=M+1,\ldots,N$ are arbitrary  in Theorem
\ref{A1}, and $\lambda^0_j$ for $j=1,\ldots,N-1$ are arbitrary and
$\lambda^0_N=0$ in Theorem \ref{A2}. The matrix $A$ is determined
from the equalities \eqref{a1} and \eqref{a2} of Theorems \ref{A1}
and \ref{A2} respectively.

\smallskip
\begin{remark}
 It is interesting to observe that from the solutions of the inverse problem,
 the constants  $\lambda^0_j$
for $j=M+1,\ldots,N$ are arbitrary except in Theorem \ref{A2} in
which $\lambda^0_N=0.$
  Clearly, if
$L_j\left(t,\textbf{x},
\dot{\textbf{x}}\right)=\dfrac{d}{dt}f_j(t,\textbf{x})$ for
$j=M+1,\ldots,N,$ then the $L\simeq \widehat{L}.$ Using the
arbitrariness of the constants $\lambda^0_j$ we can always take
that $\lambda^0_k=0$ if $L_k\left(t,\textbf{x},
\dot{\textbf{x}}\right)\ne\dfrac{d}{dt}f_k(t,\textbf{x}).$
Consequently we can always suppose that  $L\simeq \widehat{L}.$
Thus the only difference between the classical and the modified
Lagrangian multipliers method consists only on the transpositional
relations: for the classical method the virtual variations produce
zero transpositional relations (i.e. the matrix $A$ is the zero
matrix) and for the modified method in general it is determined by
the formulae \eqref{23} and \eqref{ff1}.

\smallskip

A very important subscase is obtained when the constraints are
given in the form (Voronets-Chapliguin constraints type)
$
\dot{x}_\alpha-\Phi_\alpha\left(t,\textbf{x},\dot{x}_{M+1},\ldots,\dot{x}_{N}\right)=0,
$
for $\alpha=1,\ldots,M.$
 As we shall show  under these
assumptions the arbitrary functions are determined as follows:
$L_j=\dot{x}_j$ for $j=M+1,\ldots,N.$ Consequently the action of
the modified Lagrangian multipliers method and the action of the
classical Lagrangian multipliers  method are equivalently. In view
of \eqref{qpr} this equivalence always locally holds for any
constrained Lagrangian systems.
\end{remark}

\section{Proof of Theorems \ref{A}, \ref{A1} and \ref{A2}}

\begin{proof}[Proof of Theorem  \ref{A}]

In view of the equalities

\[\begin{array}{rl}
 \displaystyle \int_{t_0}^{t_1}\delta{L}\,dt =&\displaystyle
\int_{t_0}^{t_1}\displaystyle\sum_{k=1}^M\left(\dfrac{\partial
L}{\partial\lambda_k}\delta\lambda_k\right)dt+\displaystyle
\int_{t_0}^{t_1} \displaystyle\sum_{j=1}^N \left(\dfrac{\partial
L}{\partial x_j}\delta x_j+\dfrac{\partial L}{\partial
\dot{x}_j}\delta \dfrac{d{x}_j}{dt} \right)dt
\vspace{0.20cm}\\
=&\displaystyle
\int_{t_0}^{t_1}\displaystyle\sum_{k=1}^M\left(-L_k\delta\lambda_k\right)dt+\displaystyle
\int_{t_0}^{t_1}\displaystyle\sum_{j=1}^N\left(\dfrac{\partial
L}{\partial x_j}\delta x_j+\dfrac{\partial L}{\partial
\dot{x}_j}\dfrac{d}{dt}\delta{x_j} +\dfrac{\partial
L}{\partial\dot{x}_j}\left(\delta\dfrac{dx_j}{dt}-
\dfrac{d}{dt}\delta{x_j}\right)\right)dt\vspace{0.20cm}\\
=&\displaystyle
\int_{t_0}^{t_1}\displaystyle\sum_{k=1}^M\left(-L_k\delta\lambda_k\right)dt+\displaystyle
\int_{t_0}^{t_1}\displaystyle\sum_{j=1}^N\dfrac{d}{dt}\left(\dfrac{\partial
T}{\partial\dot{x}_j}\delta
x_j\right)dt\vspace{0.20cm}\\
&-\displaystyle
\int_{t_0}^{t_1}\displaystyle\sum_{j=1}^N\left(\left(-\dfrac{\partial
L}{\partial x_j}+\dfrac{d}{dt}\left(\dfrac{\partial L}{\partial
\dot{x}_j}\right)\right)\delta{x_j} +\dfrac{\partial
L}{\partial\dot{x}_j}\left(\delta\dfrac{dx_j}{dt}-
\dfrac{d}{dt}\delta{x_j}\right)\right)dt.
\end{array}
\]
Consequently

\[\begin{array}{rl}
\left.\displaystyle
\int_{t_0}^{t_1}\delta{L}\,dt\right|_{L_\nu=0}=&\displaystyle
\int_{t_0}^{t_1}\displaystyle\sum_{j=1}^N\left(\dfrac{d}{dt}\left(\dfrac{\partial
T}{\partial\dot{x}_j}\delta
x_j\right)-\left(E_jL-\displaystyle\sum_{k=1}^NA_{jk}\dfrac{\partial
L}{\partial\dot{x}_k}\right)\delta x_j\right)dt\vspace{0.20cm}\\
=&\displaystyle\sum_{j=1}^N\left.\dfrac{\partial
T}{\partial\dot{x}_j}\delta
x_j\right|_{t=t_0}^{t=t_1}-\displaystyle
\int_{t_0}^{t_1}\displaystyle\sum_{j=1}^N\left(E_jL-
\displaystyle\sum_{k=1}^NA_{jk}\dfrac{\partial
L}{\partial\dot{x}_k}\right)\delta x_jdt\vspace{0.20cm}\\
=&-\displaystyle
\int_{t_0}^{t_1}\displaystyle\sum_{j=1}^N\left(E_jL-
\displaystyle\sum_{k=1}^NA_{jk}\dfrac{\partial
L}{\partial\dot{x}_k}\right)\delta x_jdt=0,
\end{array}
\]where $\nu=1,\ldots,M.$
Here we use the equalities
$\delta{\textbf{x}}(t_0)=\delta{\textbf{x}}(t_1)=0.$
 Hence if
\eqref{24} holds then \eqref{023} is satisfied. The reciprocal
result is proved by choosing \[\delta x_k(t)=
  \begin{cases}
    \zeta (t)& \text{if}\,\, k=1, \\
     0& \text{otherwise},
  \end{cases}\]
  where $ \zeta (t)$ is a  positive function in the interval $(t^*_0,t^*_1),$
  and it is equal  to zero in the intervals
  $[t_0,\,t^*_0]$ and $[t^*_1,\,t_1],$ and applying Corollary \ref{Lagrange}.

{From} the definition \eqref{24} we have that
\[
D_\nu(fg)=D_\nu f\,g+f\,D_\nu\,g+\dfrac{\partial
f}{\partial\dot{x}_\nu}\dfrac{d g}{dt}+\dfrac{d
f}{dt}\dfrac{\partial g}{\partial\dot{x}_\nu},\quad D_\nu a=0,
\]where $a$ is a constant.

 Now we shall write \eqref{24} in a more convenient way
\[
\begin{array}{rl}
 0=D_\nu{L}=&D_\nu\left(L_0-
\displaystyle\sum_{j=1}^M\lambda_jL_j-\displaystyle\sum_{j=M+1}^N
\lambda^0_j
L_j\right)\vspace{0.20cm}\\
=&D_\nu\,L_0-\displaystyle\sum_{j=1}^MD_\nu\left(\lambda_jL_j\right)
-\displaystyle\sum_{j=M+1}^N
\lambda^0_j
D_\nu\,L_j\vspace{0.20cm}\\
=&D_\nu\,L_0-\displaystyle\sum_{j=M+1}^N \lambda^0_j
D_\nu\,L_j-\vspace{0.20cm}\\
&-\displaystyle\sum_{j=1}^M\left(D_\nu\,\lambda_j\,\,L_j+\lambda_jD_\nu\,L_j+\dfrac{d
\lambda_j}{dt}\dfrac{\partial L_j}{\partial\dot{x}_\nu}+\dfrac{d
L_j}{dt}\dfrac{\partial \lambda_j}{\partial\dot{x}_\nu}\right).
\end{array}
\]
{F}rom these relations and since the constraints $L_j=0$ for
$j=1,\ldots,M,$ we easily obtain equations \eqref{D24} or
equivalently \begin{equation} \label{25}
E_{\nu}L_0=\displaystyle\sum_{k=1}^NA_{jk}\dfrac{\partial
L_0}{\partial\dot{x}_k}+\sum_{j=1}^M\left(\lambda_{j}D_{\nu}L_{j}+
\dfrac{d\lambda_j}{dt}\dfrac{\partial{L_j}}{\partial{\dot{x}_\nu}}\right)
+\displaystyle\sum_{j=M+1}^N \lambda^0_j D_\nu\,L_j.
\end{equation}
Thus the theorem is proved.
\end{proof}

 Now we show that the differential
equations \eqref{25} for convenient functions $L_j$ constants
$\lambda^0_j$ for $j=M+1,\ldots,N$ and for convenient matrix $A$
describe the motion of the constrained Lagrangian systems.

 \begin{proof}[Proof of Theorem  \ref{A1}]
The matrix equation \eqref{a1} can be rewritten in components as
follows
\begin{equation}
\label{219}
\displaystyle\sum_{j=1}^NA_{kj}\dfrac{\partial
L_\alpha}{\partial\dot{x}_j}=E_kL_\alpha\Longleftrightarrow D_kL_\alpha=0,
 \end{equation}
 for $\alpha,\,k=1,\ldots,N.$
 Consequently the differential equations \eqref{25} become
\begin{equation}\label{TT1}
E_{\nu}L_0=\displaystyle\sum_{k=1}^N\left(A_{\nu
k}\dfrac{\partial L_0}{\partial\dot{x}_k}+
\dfrac{d\lambda_k}{dt}\dfrac{\partial{L_k}}{\partial{\dot{x}_\nu}}\right)
\Longleftrightarrow D_\nu
L_0=\displaystyle\sum_{j=1}^M\dfrac{d\lambda_j}{dt}\dfrac{\partial{L_j}}{\partial{\dot{x}_\nu}},
\end{equation}
which coincide with the first systems \eqref{210}.

In view of the condition $|W_1|\ne 0$ we can solve equation
\eqref{a1} with respect to $A$ and obtain $A=W^{-1}_1\Omega_1.$
Hence, by considering \eqref{219} we obtain the second systems
from \eqref{210} and the transpositional relation \eqref{211}.
\end{proof}

\begin{proof}[Proof of Theorem  \ref{A2}]
The matrix equation \eqref{a2} is equivalent to the systems
\[
\begin{array}{rl}
\displaystyle\sum_{j=1}^NA_{kj}\dfrac{\partial
L_\alpha}{\partial\dot{x}_j}=&E_kL_\alpha\Longleftrightarrow D_kL_\alpha=0,\vspace{0.2cm}\\
\displaystyle\sum_{j=1}^N{A_{kj}
\dfrac{\partial{L_0}}{\partial\dot{x}_j}}=&0,
\end{array}
\]for  $k=1,\ldots,N,$ and $\alpha=1,\ldots,N-1.$
 Thus, by considering that $\lambda^0_N=0$ we
deduce that systems \eqref{25} takes the form
\[
E_{\nu}L_0=\displaystyle\sum_{j=1}^M\dfrac{d\tilde{\lambda}_j}{dt}\dfrac{\partial{L_j}}{\partial{\dot{x}_\nu}}
.
\]
Hence we obtain systems \eqref{0210}. On the other hand from
\eqref{a2} we have that $A=W^{-1}_2\Omega_2.$ Hence we deduce that
the transpositional relation \eqref{23} can be rewritten in the
form \eqref{0211}.
\end{proof}
The mechanics basic on the Hamiltonian principle with non--zero
transpositional relations given by formula \eqref{23}, Lagrangian
\eqref{21} and equations of motion \eqref{24} are called here the
{\it modification of the vakonomic mechanics} and we shortly write
{\it MVM}.
\smallskip

From the proofs of Theorems \ref{A1} and \ref{A2} follows that the
relations \eqref{ff1} holds identically in MVM.
\begin{corollary}
Differential equations \eqref{210} are invariant under the change
\[L_0\longrightarrow L_0-\displaystyle\sum_{j=1}^Na_jL_j,\] where
the $a_j$'s  are constants for $j=1,\ldots,N.$
\end{corollary}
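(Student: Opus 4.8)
The plan is to reduce everything to the single identity $D_kL_\alpha=0$ for $k,\alpha=1,\ldots,N$, which already appeared as \eqref{219} in the proof of Theorem \ref{A1} and which expresses precisely that the matrix $A=W_1^{-1}\Omega_1$ was chosen so that $W_1A=\Omega_1$. The substitution $L_0\mapsto\widehat L_0:=L_0-\sum_{j=1}^N a_jL_j$ with constant $a_j$ does not touch the constraints $L_1,\ldots,L_N$ nor the constants $\lambda^0_j$, so the matrices $W_1$, $\Omega_1$ and hence $A=W_1^{-1}\Omega_1$ are unchanged; consequently the operators $E_\nu$ and $D_\nu=E_\nu-\sum_jA_{\nu j}\partial/\partial\dot x_j$ that enter \eqref{210} are literally the same before and after the substitution.

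First I would record that $D_\nu$ is $\R$-linear and annihilates constants, $D_\nu a=0$ (this is exactly the fact established in the proof of Theorem \ref{A}); therefore, for constant $a_j$,
\[
D_\nu\widehat L_0=D_\nu L_0-\sum_{j=1}^N a_j\,D_\nu L_j .
\]
By \eqref{219} every term $D_\nu L_j$ on the right vanishes, so $D_\nu\widehat L_0=D_\nu L_0$ for all $\nu=1,\ldots,N$.

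Next I would observe that the right--hand side of the first form of \eqref{210}, namely $\sum_{\alpha=1}^M\dot\lambda_\alpha\,\partial L_\alpha/\partial\dot x_\nu$, does not involve $L_0$ at all and so is literally unchanged by the substitution. Combining this with the previous paragraph shows that $D_\nu\widehat L_0=\sum_{\alpha=1}^M\dot\lambda_\alpha\,\partial L_\alpha/\partial\dot x_\nu$ is the very same system as $D_\nu L_0=\sum_{\alpha=1}^M\dot\lambda_\alpha\,\partial L_\alpha/\partial\dot x_\nu$, which is the asserted invariance. To obtain the equivalent Euler--Lagrange form of \eqref{210} one rewrites $E_\nu=D_\nu+\sum_kA_{\nu k}\partial/\partial\dot x_k$ and uses $\partial\widehat L_0/\partial\dot x_k=\partial L_0/\partial\dot x_k-\sum_j a_j\,\partial L_j/\partial\dot x_k$; the extra terms reassemble into the same matrix expression $(W_1^{-1}\Omega_1)^T\,\partial\widehat L_0/\partial\dot{\textbf{x}}+W_1^T\,d\lambda/dt$, so the matrix form \eqref{210} is invariant as well, while the transpositional relation \eqref{211} is manifestly unaffected since it does not depend on $L_0$.

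There is no genuine obstacle here: the entire content is the vanishing $D_\nu L_j=0$, which is built into the construction of $A$ in Theorem \ref{A1}. The only point requiring a word of care is to note that the substitution modifies $L_0$ alone and hence leaves the data $\{L_1,\ldots,L_N,\lambda^0_j\}$ defining $W_1,\Omega_1,A$ intact; this is what makes the comparison of the two systems legitimate term by term.
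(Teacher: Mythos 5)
Your proposal is correct and follows essentially the same route as the paper: both arguments rest on the product rule for $D_\nu$ together with $D_\nu a_j=0$ for constants and the identity $D_\nu L_j=0$ from \eqref{219}, which immediately gives $D_\nu\bigl(L_0-\sum_j a_jL_j\bigr)=D_\nu L_0$. Your additional remarks that $W_1,\Omega_1,A$ and the right--hand side of \eqref{210} are untouched by the substitution are implicit in the paper's one-line computation but are worth making explicit.
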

\begin{proof}
Indeed, from \eqref{TT1} and \eqref{219} it follows that
\[D_\nu
\left(L_0-\displaystyle\sum_{j=1}^Na_jL_j\right)=D_\nu\,
L_0-\displaystyle\sum_{j=1}^Na_jD_\nu\,L_j=
D_\nu\,L_0=\displaystyle\sum_{j=1}^M\dfrac{d{\lambda}_j}{dt}\dfrac{\partial{L_j}}{\partial{\dot{x}_\nu}}.\]
\end{proof}

\begin{remark} \label{R1}
The following interesting facts follow from Theorems \ref{A1} and
\ref{A2}.

\begin{itemize}
\item[(1)] The equations of motion obtained from Theorem \ref{A1} are
more general than the equations obtained from Theorem \ref{A2}.
Indeed in \eqref{210} there are $N-M$ arbitrary functions while in
\eqref{0210} are $N-M-1$ arbitrary functions.

\smallskip

\item[(2)]

{If the constraints are} linear in the velocity then between the
Lagrangian multipliers $\mu,\,\,\dfrac{d\lambda}{dt}$ and
$\dfrac{d\tilde{\lambda}}{dt}$ there is the following relation
\[
\mu=\dfrac{d\tilde{\lambda}}{dt}=\left(W^{-1}_2\right)^T\left(W^T_1
\dfrac{d\lambda}{dt}+W^{-1}_2\Omega^T_1W^{-T}_1\dfrac{\partial
L_0}{\partial\dot{\textbf{x}}}\right),
\]
where $W_1$ and $W_2$ are the  matrixes defined in Theorems \ref{A1}
and \ref{A2}.
\item[(3)]
{If the constraints are} linear in the velocity then one of the
important question which appear in MVM is related with the
arbitrariness functions $L_j$ for $j=M+1,\ldots,N.$ The following
question arise: Is it possible to  determine these functions in such
a way that $|W_1|$ or $|W_2|$  is non--zero everywhere in $\MM^*$?
If we have a positive answer to this question, then the equations of
motion of the MVM give a global behavior of the constrained
Lagrangian systems, i.e. the obtained motions completely coincide
with the motions obtained from the classical mathematical models.
Thus if  $|W_1|\ne 0$ and  $|W_2|\ne 0$  everywhere in $\MM^*$ then
we have the equivalence
\begin{equation}\label{HH}
D_{\nu}L_0={\displaystyle\sum_{j=1}^M
\dfrac{d\lambda_j}{dt}\dfrac{\partial{L_j}}{\partial{\dot{x}_\nu}}}
\Longleftrightarrow E_{\nu}L_0={\displaystyle\sum_{j=1}^M
\dfrac{d\tilde{\lambda}_j}{dt}\dfrac{\partial{L_j}}{\partial{\dot{x}_\nu}}}\Longleftrightarrow
E_{\nu}L_0=\displaystyle\sum_{j=1}^M
\mu_j\dfrac{\partial{L_j}}{\partial{\dot{x}_\nu}}
\end{equation}
\end{itemize}
\smallskip
{If the constraints are} nonlinear in the velocity and $|W_2|\ne 0$
everywhere in $\MM^*$ then we have the   equivalence
\begin{equation}\label{HHH}
E_{\nu}L_0={\displaystyle\sum_{j=1}^M
\dfrac{d\tilde{\lambda}_j}{dt}\dfrac{\partial{L_j}}{\partial{\dot{x}_\nu}}}\Longleftrightarrow
E_{\nu}L_0=\displaystyle\sum_{j=1}^M
\mu_j\dfrac{\partial{L_j}}{\partial{\dot{x}_\nu}}
\end{equation}
The equivalence with respect to the equations
$D_{\nu}L_0={\displaystyle\sum_{j=1}^M
\dfrac{d\lambda_j}{dt}\dfrac{\partial{L_j}}{\partial{\dot{x}_\nu}}}$
in general is not valid in this case because the term
$\Omega^T_1W^{-T}_1\dfrac{\partial L_0}{\partial\dot{\textbf{x}}}$
depend on $\ddot{\textbf{x}}.$

\end{remark}
\subsection{Application of Theorems \ref{A1} and \ref{A2} to
the Appell--Hamel mechanical systems}

\smallskip

 As a general rule the constraints studied in
classical mechanics are linear with respect to the velocities, i.e.
$L_j$ can be written as \eqref{1010}. However Appell and Hamel (see
\cite{Appell,Hamel}) in 1911, considered an artificial example of
nonlinear nonholonomic constrains.
 A big number of investigations
have been devoted to the derivation of the equations of motion of
mechanical systems with nonlinear nonholonomic constraints  see for
instance \cite{Chetaev,Hamel,NF,Novoselov}. The works of these
authors do not contain examples of systems with nonlinear
nonholonomic constraints differing essentially from the example
given by Appell and Hamel.

\smallskip
\begin{corollary}
 The equivalence \eqref{HH} also holds for the Appell -- Hamel system i.e. for the constrained Lagrangian systems
\[
\left(\mathbb{R}^3,\quad
\tilde{L}=\dfrac{1}{2}(\dot{x}^2+\dot{y}^2+\dot{z}^2)-gz,\quad
\{\dot{z}-a\sqrt{\dot{x}^2+\dot{y}^2}=0\}\right),
\]
where $a$ and $g$ are positive constants.
\end{corollary}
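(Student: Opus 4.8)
The plan is to place the Appell--Hamel system inside Theorems \ref{A1} and \ref{A2} by an explicit choice of the free functions $L_j$, and then to check that for this system both resulting sets of equations of motion collapse onto the classical Lagrangian equations with multipliers \eqref{35}; the three--fold equivalence \eqref{HH} follows at once. Throughout, $N=3$, $(x_1,x_2,x_3)=(x,y,z)$, $M=1$, $L_0=\tilde L=\tfrac12(\dot x^2+\dot y^2+\dot z^2)-gz$ and $L_1=\dot z-a\sqrt{\dot x^2+\dot y^2}$. Write $r=\sqrt{\dot x^2+\dot y^2}$ and $p=\partial L_1/\partial\dot x=-a\dot x/r$, $q=\partial L_1/\partial\dot y=-a\dot y/r$, so $\partial L_1/\partial\dot{\textbf{x}}=(p,q,1)^{T}$; since $L_1$ is position--free, $E_\nu L_1=\tfrac{d}{dt}(\partial L_1/\partial\dot x_\nu)$, that is $(E_1L_1,E_2L_1,E_3L_1)=(\dot p,\dot q,0)$, while $\partial L_0/\partial\dot{\textbf{x}}=(\dot x,\dot y,\dot z)^{T}$. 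All exceptional sets below are Lebesgue--null, so equivalences are understood modulo such sets, as elsewhere in the paper.

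For Theorem \ref{A1} I would take the $N-M=2$ free functions $L_2=\dot x$, $L_3=\dot y$ (the constants $\lambda^0_2,\lambda^0_3$ are irrelevant since $L_2,L_3$ are total time derivatives, so $L\simeq\widehat L$). Then $W_1=\partial(L_1,\dot x,\dot y)/\partial(\dot x,\dot y,\dot z)$ has $\det W_1=1$, hence $W_1$ is invertible off $\{r=0\}=\{\dot x=\dot y=0\}$. By \eqref{An1}, $E_\nu L_2=E_\nu L_3=0$, so $\Omega_1$ has only its first row nonzero: $\Omega_1=e_1(\dot p,\dot q,0)$ with $e_1=(1,0,0)^{T}$. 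Therefore $A=W_1^{-1}\Omega_1=(W_1^{-1}e_1)(\dot p,\dot q,0)=(0,0,1)^{T}(\dot p,\dot q,0)$ is rank one, with only the third row $(\dot p,\dot q,0)$ nonzero, and the correction term in \eqref{24} is $\sum_j A_{\nu j}\,\partial L_0/\partial\dot x_j=(0,\,0,\,\dot p\,\dot x+\dot q\,\dot y)_\nu$.

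The decisive point — and the step I expect to carry all the content — is that $\dot p\,\dot x+\dot q\,\dot y=0$. Indeed $L_1$ is positively homogeneous of degree one in the velocities, so $p^{2}+q^{2}=a^{2}(\dot x^{2}+\dot y^{2})/r^{2}\equiv a^{2}$ is constant; differentiating, $p\dot p+q\dot q=0$, i.e. $\dot x\,\dot p+\dot y\,\dot q=0$ (equivalently: differentiate Euler's identity $\sum_k\dot x_k\,\partial L_1/\partial\dot x_k=L_1$ along a motion and use $L_1\equiv0$ and $\sum_k\ddot x_k\,\partial L_1/\partial\dot x_k=\tfrac{d}{dt}L_1=0$ there). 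Hence the correction term vanishes on $\MM^{*}$, so $D_\nu L_0=E_\nu L_0$ and the equations \eqref{210} read $E_\nu L_0=\dot\lambda_1\,\partial L_1/\partial\dot x_\nu$ together with $L_1=0$ — exactly \eqref{35} with $\mu_1=\dot\lambda_1$. This gives the left equivalence in \eqref{HH} and explains why here the $\ddot{\textbf{x}}$--dependent term that is generically dangerous in Remark \ref{R1} is harmless: the homogeneity of the constraint annihilates it.

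For Theorem \ref{A2} I would take the single free function $L_2=\dot x$, with $L_N=L_3=L_0$ and $\lambda^0_3=0$. Then $W_2=\partial(L_1,\dot x,L_0)/\partial(\dot x,\dot y,\dot z)$ has $\det W_2=\dot y\,(1+a\dot z/r)$, which on $\MM^{*}$, where $\dot z=ar$, equals $(1+a^{2})\dot y\neq0$ off $\{\dot y=0\}$ (to cover all of $\MM^{*}$ minus $\{\dot x=\dot y=0\}$ one may instead use $L_2=\dot y$ on $\{\dot x\neq0\}$). Since $\tilde\lambda=(\tilde\lambda_1,0,0)^{T}$, the right--hand side $W_2^{T}\,d\tilde\lambda/dt$ of \eqref{0210} equals $\dot{\tilde\lambda}_1\,(p,q,1)^{T}=\dot{\tilde\lambda}_1\,\partial L_1/\partial\dot{\textbf{x}}$, so \eqref{0210} is literally $E_\nu L_0=\dot{\tilde\lambda}_1\,\partial L_1/\partial\dot x_\nu$, i.e. \eqref{35} with $\mu_1=\dot{\tilde\lambda}_1$. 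Combining the three forms, with the classical equations \eqref{35} in the middle, yields \eqref{HH} for the Appell--Hamel system. Apart from the homogeneity identity, the remaining work is only the two determinant computations and pinning down the shape of $A$, which are routine.
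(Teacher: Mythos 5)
Your overall architecture (choose the free functions so that $W_1$, $W_2$ are invertible, then check that \eqref{210} and \eqref{0210} collapse onto \eqref{35}) is the same as the paper's, and your Theorem \ref{A2} half is essentially sound: with $\tilde\lambda=(\tilde\lambda_1,0,0)^T$ the right--hand side of \eqref{0210} is $\dot{\tilde\lambda}_1\,\partial L_1/\partial\dot{\textbf{x}}$ regardless of the choice of $L_2$, so \eqref{0210} is literally \eqref{35}. (Minor point: your $|W_2|=(1+a^2)\dot y$ degenerates on $\{\dot y=0\}$; the paper takes $L_2=\arctan(\dot x/\dot y)$, which gives $|W_2|_{L_1=0}=1+a^2$ and avoids the chart--patching.)

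The Theorem \ref{A1} half, however, has a genuine gap: your ``decisive point'' kills the wrong term. You compute the correction as $\sum_j A_{\nu j}\,\partial L_0/\partial\dot x_j$, reading \eqref{24} literally, and conclude it equals $(0,0,\dot p\dot x+\dot q\dot y)$, which vanishes by homogeneity. But the index placement in \eqref{24} is a misprint: the variational computation in \eqref{023} gives $\sum_j \partial L/\partial\dot x_j\sum_m A_{jm}\delta x_m=\sum_m\bigl(\sum_j A_{jm}\,\partial L/\partial\dot x_j\bigr)\delta x_m$, so the coefficient of $\delta x_m$ involves the \emph{columns} of $A$; this is the convention used in \eqref{ff1}, it is the one under which $W_1A=\Omega_1$ in \eqref{a1} is equivalent to $D_mL_\alpha=0$, and it is why \eqref{210} is written in matrix form with $\left(W_1^{-1}\Omega_1\right)^T$. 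With $A=(0,0,1)^T(\dot p,\dot q,0)$ the correct correction is $A^T\,\partial L_0/\partial\dot{\textbf{x}}=(\dot p\,\dot z,\ \dot q\,\dot z,\ 0)^T$, which does \emph{not} vanish; the equations \eqref{210} are then the implicit system \eqref{315}, containing $\ddot x,\ddot y$ on the right. Your identity $\dot x\dot p+\dot y\dot q=0$ is true, but it is the (trivially zero) third component of the wrong contraction, not the terms actually present. The missing step — which is exactly what the paper does — is to solve \eqref{315} for the accelerations: taking the combination $\dot y\cdot(\mbox{first eq.})-\dot x\cdot(\mbox{second eq.})$ gives $(1+a^2)(\dot y\ddot x-\dot x\ddot y)=\dot\lambda(p\dot y-q\dot x)=0$, hence $\dot p=\dot q=0$ along motions and \eqref{315} reduces to \eqref{35} with $\mu=\dot\lambda$ (equivalently, the matrix $G$ of the implicit system has determinant $1+a^2\neq0$). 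With that step supplied, the equivalence \eqref{HH} follows as you claim.
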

\smallskip
\begin{proof}
{The classical equations} \eqref{35} for the Appell-Hamel system are
\begin{equation}\label{313}
\ddot{x}=-\dfrac{a\dot{x}}{\sqrt{\dot{x}^2+\dot{y}^2}}\mu,\qquad
\ddot{y}=-\dfrac{a\dot{y}}{\sqrt{\dot{x}^2+\dot{y}^2}}\mu,\qquad\ddot{z}=-g+\mu,
\end{equation}
where $\mu$ is the Lagrangian multiplier.

\smallskip

Now we  apply Theorem \ref{A2}. Hence, in order to obtain that
$|W_2|\ne 0$ everywhere we choose the  functions  $L_j$ for
$j=1,2,3$ as follows
\[
 L_1=\dot{z}-a\sqrt{\dot{x}^2+\dot{y}^2}=0,\quad
L_2=\arctan\dfrac{\dot{x}}{\dot{y}}, \quad L_3=L_0=\tilde{L}.
\]

\smallskip

In this case the matrices $W_2,\,\Omega_2$ and $A$ are
 \[\begin{array}{rl}
W_2=& \left(\begin{array}{ccc}
-\dfrac{a\dot{x}}{\sqrt{\dot{x}^2+\dot{y}^2}}&-\dfrac{a\dot{y}}{\sqrt{\dot{x}^2+\dot{y}^2}}&1\\
\dfrac{\dot{y}}{\dot{x}^2+\dot{y}^2}&-\dfrac{\dot{x}}{\dot{x}^2+\dot{y}^2}&0\\
\dot{x}&\dot{y}&\dot{z}
\end{array}
\right),\quad |W_2|_{L_1=0}=1+a^2,\vspace{0.3cm}\\
 \Omega_2=& \left(\begin{array}{ccc}
-\dot{y}q&\dot{x}q&0\vspace{0.2cm}\\
\dfrac{\ddot{y}\left(\dot{x}^2-\dot{y}^2\right)-2\dot{x}\dot{y}\ddot{x}}{\left(\dot{x}^2+\dot{y}^2\right)^2}
&\dfrac{\ddot{x}\left(\dot{x}^2-\dot{y}^2\right)+2\dot{x}\dot{y}\ddot{y}}{\left(\dot{x}^2+\dot{y}^2\right)^2}&0\\
0&0&0
\end{array} \right),
\end{array}
\]
and the matrix $\left.A\right|_{L_1=0}$ is
\[\left(\begin{array}{ccc}
-\dfrac{\dot{y}\left(a^2\dot{y}\dot{x}\ddot{x}+\left((a^2+1)\dot{y}^2+\dot{x}^2\right)\ddot{y}\right)}
{(1+a^2)\left(\dot{x}^2+\dot{y}^2\right)}&
\dfrac{\left(a^2\dot{x}^2+(a^2+1)\left(\dot{y}^2+
\dot{x}^2\right)^2\right)\dot{y}\ddot{x}-a^2\dot{x}^3\ddot{y}}{(1+a^2)\left(\dot{x}^2+\dot{y}^2\right)^2}&0\\
\\
\dfrac{
\left(a^2\dot{y}^2+(a^2+1)\left(\dot{y}^2+
\dot{x}^2\right)\right)\dot{x}\ddot{y}-a^2\dot{y}^3\ddot{x}}{(1+a^2)\left(\dot{x}^2+\dot{y}^2\right)}&
-\dfrac{\dot{x}\left(a^2\dot{x}\dot{y}\ddot{y}+\left((a^2+1)\dot{x}^2+\dot{y}^2\right)\ddot{x}\right)}
{(1+a^2)\left(\dot{x}^2+\dot{y}^2\right)^2}&0\\
\\
\dfrac{\dot{y}a\left(\dot{y}\ddot{x}-\dot{x}\ddot{y}\right)}{(1+a^2)\left(\dot{x}^2+\dot{y}^2\right)^{3/2}}
&-\dfrac{\dot{x}a\left(\dot{y}\ddot{x}-\dot{x}\ddot{y}\right)}{(1+a^2)\left(\dot{x}^2+\dot{y}^2\right)^{3/2}}&0
\end{array}
\right).
\]
By considering that $|W_2|_{L_1=0}=1+a^2,$ we obtain that the
equations \eqref{0210} in this case describe the global behavior of
the Appell--Hamel systems and take the form

\smallskip

\begin{equation}\label{31133}
\ddot{x}=-\dfrac{a\dot{x}}{\sqrt{\dot{x}^2+\dot{y}^2}}\dot{\tilde{\lambda}},\qquad
\ddot{y}=-\dfrac{a\dot{y}}{\sqrt{\dot{x}^2+\dot{y}^2}}\dot{\tilde{\lambda}},\qquad\ddot{z}=-g+\dot{\tilde{\lambda}}.
\end{equation}
Clearly that this system coincide with classical differential
equations \eqref{313} with $\dot{\tilde{\lambda}}=\mu$.

\smallskip

 After the derivation of the constraint
$\dot{z}-a\,\sqrt{\dot{x}^2+\dot{y}^2}=0$ along the solutions of
\eqref{31133}, we obtain
\[0=\ddot{z}-a\dfrac{\ddot{x}}{\sqrt{\dot{x}^2+\dot{y}^2}}
+a\dfrac{\ddot{y}}{\sqrt{\dot{x}^2+\dot{y}^2}}=-g+(1+a^2)\dot{\tilde{\lambda}}.\]
Therefore $\dot\tilde{\lambda}= \dfrac{g}{1+a^2}.$ Hence the
equations of motion \eqref{31133} become
\begin{equation}\label{314}
\ddot{x}=-\frac{ag}{1+a^2}\frac{\dot{x}}{\sqrt{\dot{x}^2
+\dot{y}^2}},
\qquad\ddot{y}=-\frac{ag}{1+a^2}\frac{\dot{y}}{\sqrt{\dot{x}^2
+\dot{y}^2}},\qquad \ddot{z}=-\frac{a^2g}{1+a^2}.
\end{equation}
In this case the Lagrangian \eqref{b2}  writes
$$
L=\dfrac{1}{2}(\dot{x}^2+\dot{y}^2+\dot{z}^2)-
gz-\dfrac{g\,(t+C)}{1+a^2}(\dot{z}-a\sqrt{\dot{x}^2+\dot{y}^2})-\lambda^0_2\arctan\dfrac{\dot{x}}{\dot{y}},
$$
where $C$ and $\lambda^0_2$ are an arbitrary constants.

\smallskip

Under the condition $L_1=0$ we obtain that the transpositional
relations  are
 \begin{equation}\label{kat}
 \begin{array}{rl}
\delta\dfrac{dx}{dt}-\dfrac{d}{dt}\delta{x}=&
\dfrac{\dot{y}\left((1+a^2)\left(\dot{x}^2+\dot{y}^2\right)\left(\ddot{x}\delta{y}-\ddot{y}\delta{x}\right)
+a^2\dot{x}\left(\dot{y}\ddot{x}-\dot{x}\ddot{y}\right)\left(\dot{x}\delta{y}-\dot{y}\delta{x}\right)\right)}
{(1+a^2)\left(\dot{x}^2+\dot{y}^2\right)^2}
,\vspace{0.20cm}\\
\delta\dfrac{dy}{dt}-\dfrac{d}{dt}\delta{y}=&
\dfrac{\dot{x}\left((1+a^2)\left(\dot{x}^2+
\dot{y}^2\right)\left(\ddot{y}\delta{x}-\ddot{x}\delta{y}\right)
+a^2\dot{y}\left(\dot{y}\ddot{x}-\dot{x}\ddot{y}\right)\left(\dot{x}\delta{y}-
\dot{y}\delta{x}\right)\right)}{(1+a^2)\left(\dot{x}^2+\dot{y}^2\right)^2}
,\vspace{0.20cm}\\
\delta\dfrac{dz}{dt}-\dfrac{d}{dt}\delta{z}=&\dfrac{a\left(\dot{y}\ddot{x}-
\dot{x}\ddot{y}\right)\left(\dot{x}\delta{y}-
\dot{x}\delta{y}\right)}{(1+a^2)\left(\dot{x}^2+\dot{y}^2\right)^{3/2}}.
\end{array}
\end{equation}
{From} this example we obtain  that the independent virtual
variations $\delta x$ and $\delta y$  produce non--zero
transpositional relations. This result is not in accordance with
with the Suslov point on view on the transpositional relations.

\smallskip

  Now we apply Theorem \ref{A1}.  The
functions $L_0,\,L_1,\,L_2$ and $L_3$ are determined as  follows
\[
L_0=\tilde{L},\quad L_1=\dot{z}-a\,\sqrt{\dot{x}^2+\dot{y}^2},\quad
L_2=\dot{y},\quad L_3=\dot{x}.
\]
 Thus the matrix $W_1$ and $ \Omega_1$ are
\[
W_1=\left(\begin{array}{ccc}
-\dfrac{a\dot{x}}{\sqrt{\dot{x}^2+\dot{y}^2}}&-\dfrac{a\dot{y}}{\sqrt{\dot{x}^2+\dot{y}^2}}&1\\
0&1&0\\
1&0&0
\end{array}
\right),\quad \Omega_1=\left(\begin{array}{ccc}
\dot{y}q&-\dot{x}q&0\\
0&0&0\\
0&0&0\end{array} \right),
\]
where
$q=\dfrac{a(\ddot{x}\dot{y}-\ddot{x}\dot{y})}{\sqrt{\dot{x}^2+\dot{y}^2}^3}.$
Therefore $|W_1|=-1.$

 Hence, after some computations from \eqref{a1} we have that
\[
A=\left(\begin{array}{ccc}
0&0&0\\
0&0&0\\
\dot{y}q&-\dot{x}q&0\end{array} \right).
\]
 The equations of motion \eqref{210} becomes
\begin{equation}
\label{315}
\begin{array}{rl}
\ddot{x}=&-\dfrac{a^2\dot{y}}{{\dot{x}^2+\dot{y}^2}}(\dot{y}\ddot{x}
-\dot{x}\ddot{y})-\dfrac{a\dot{{\lambda}}}{\sqrt{\dot{x}^2+\dot{y}^2}}\dot{x},\vspace{0.2cm}\\
\ddot{y}=&-\dfrac{a^2\dot{x}}{{\dot{x}^2+\dot{y}^2}}(\dot{x}\ddot{y}-\dot{y}\ddot{x})
-\dfrac{a\dot{{\lambda}}}{\sqrt{\dot{x}^2+\dot{y}^2}}\dot{y},\vspace{0.2cm}\\
\ddot{z}=&-g+\dot{{\lambda}}.
\end{array}
\end{equation}
By solving these equations with respect to $\ddot{x},\,\ddot{y}$ and
$ \ddot{z}$ we obtain  the equations
\[
\ddot{x}=-\dfrac{a\dot{x}}{\sqrt{\dot{x}^2+\dot{y}^2}}\dot{{\lambda}},\qquad
\ddot{y}=-\dfrac{a\dot{y}}{\sqrt{\dot{x}^2+\dot{y}^2}}\dot{{\lambda}},\qquad\ddot{z}=-g+\dot{{\lambda}},
\]
 We observe in this case that $|W_1|=-1,$ consequently these
equations, obtained from Theorem \ref{A1}, give a global behavior of
the Appell--Hamel systems, i.e. coincide with the classical
equations \eqref{313} with
$\dot{{\lambda}}=\dot{{\tilde{\lambda}}}=\mu=\dfrac{g}{1+a^2}.$

\smallskip

The transpositional relations \eqref{211} can be written as
\begin{equation}\label{415}
\delta\dfrac{dx}{dt}-\dfrac{d}{dt}\delta\,x=0,
\quad\delta\dfrac{dy}{dt}-\dfrac{d}{dt}\delta\,y=0,\quad
\delta\dfrac{dz}{dt}-\dfrac{d}{dt}\delta\,
z=q\left(\dot{y}\delta\,x-\dot{x}\delta\,y\right).
\end{equation}
\end{proof}
{From} this corollary we observe that the independent virtual
variations $\delta x$ and $\delta y$  produce non--zero
transpositional relations \eqref{kat}  and zero transpositional
relations \eqref{415}.

\smallskip

The Lagrangian \eqref{b1} in this case takes the form
\[\begin{array}{rl}
L=&\dfrac{1}{2}(\dot{x}^2+\dot{y}^2+\dot{z}^2)-gz-
\dfrac{g\,(t+C)}{1+a^2}(\dot{z}-a\sqrt{\dot{x}^2+\dot{y}^2})-
\lambda^0_2\dot{y}-\lambda^0_3\dot{x}\vspace{0.2cm}\\
\simeq &\dfrac{1}{2}(\dot{x}^2+\dot{y}^2+\dot{z}^2)-gz-
\dfrac{g\,(t+C)}{1+a^2}(\dot{z}-a\sqrt{\dot{x}^2+\dot{y}^2}).
\end{array}
\]
 From  \eqref{Nat1} it
follows that
\[\delta\dfrac{dz}{dt}-\dfrac{d}{dt}\delta\,z=
q\left(\dot{y}\delta\,x-\dot{x}\delta\,y\right)+
\dfrac{a\dot{x}}{\sqrt{\dot{x}^2+\dot{y}^2}}\left(\delta\dfrac{dx}{dt}-
\dfrac{d}{dt}\delta\,x\right)+
\dfrac{a\dot{y}}{\sqrt{\dot{x}^2+\dot{y}^2}}\left(\delta\dfrac{dy}{dt}-
\dfrac{d}{dt}\delta\,y\right).
\]
Therefore this relation holds identically for \eqref{kat} and
\eqref{415}.

\smallskip

 In the next  sections we show the importance
of the equations of motion \eqref{210} and \eqref{0210} contrasting
them with the classical differential equations} of nonholonomic
mechanics.

\section{Modificated  vakonomic mechanics versus vakonomic
mechanics}
 Now we show that the  equations
of the {\it vakonomic mechanics} \eqref{K1} can be obtained  from
equations \eqref{D24}. More precisely, if in \eqref{23} we require
that all the virtual variations of the coordinates produce the zero
transpositional relations, i.e. the matrix $A$ is the zero matrix
and we require that $\lambda^0_j=0$ for $j=M+1,\ldots,N$, then from
\eqref{D24} by considering that $D_kL=E_kL,$ we obtain the vakonomic
equations \eqref{K1}, i.e.
\[
\begin{array}{rl} D_{\nu}
L_0=&\displaystyle\sum_{j=1}^M\left(\lambda_{j}D_{\nu} L_{j}+
\dfrac{d\lambda_j}{dt}\dfrac{\partial{L_j}}{\partial{\dot{x}_\nu}}\right)+\displaystyle\sum_{j=M+1}^N
\lambda^0_j D_\nu\,L_j{\Longrightarrow}\vspace{0.2cm}\\
  E_\nu\,L_0=&\displaystyle\sum_{j=1}^M\left(\lambda_jE_\nu\,L_j+
\dfrac{d\lambda_j}{dt}\dfrac{\partial{L_j}}{\partial
\dot{x}_\nu}\right),\quad
 {\nu=1,\ldots,N}
\end{array}\]

 In the following example in order to contrast Theorems \ref{A1}
  with the  vakonomic model we study the {\it skate or
knife edge on an inclined plane}.

 \smallskip

\textbf{Example 1. } To set up the problem, consider a plane $\Xi$
with cartesian
  coordinates $x$ and $y,$ slanted at an angle
$\alpha$. We assume that the $y$--axis is horizontal,
  while the $x$--axis is directed downward from the
  horizontal and let $(x,y)$ be the
  coordinates of the point of contact of the skate with the plane. The
angle $\varphi$ represents the orientation of the skate measured
from the $x$--axis. The skate is moving under the influence of the
gravity. Here the the acceleration due to gravity is denoted by
$g$. It also has mass $m,$ and the moment inertia of the skate
about a vertical axis through its contact point is denoted by $J,$
(see page 108 of \cite{NF} for a picture). The equation of
nonintegrable constraint is
\begin{equation}\label{skate}
L_1=\dot{x}\sin\varphi-\dot{y}\cos\varphi=0.
\end{equation}
With these notations the Lagrangian function of the skate is
\[
\hat{L}=\dfrac{m}{2}\left(\dot{x}^2+\dot{y}^2\right)+
\dfrac{J}{2}\dot\varphi^2+mg\,x\,\sin\alpha.
\]
 Thus we have the constrained mechanical systems
 \[
 \left( \mathbb{R}^2\times\mathbb{S}^1,\quad\hat{L}=\dfrac{m}{2}\left(\dot{x}^2+\dot{y}^2\right)+
\dfrac{J}{2}\dot\varphi^2+mg\,x\,\sin\alpha,\quad\{\dot{x}\sin\varphi-\dot{y}\cos\varphi=0\}\right).
\]
  For appropriate choice of mass, length and time units, we reduces the Lagrangian $\hat{L}$ to
  \[L_0=\dfrac{1}{2}\left(\dot{x}^2+\dot{y}^2+\dot{\varphi}^2\right)+x\,g\sin\alpha ,\]
  here for simplicity we leave the same notations for the all
  variables.
The question is, {\it what is the motion of the point of contact?}
To answer this question we shall use the vakonomic equations
\eqref{K1} and the equations \eqref{210}
 proposed in Theorem \ref{A1}.

\subsection{The study of the skate applying Theorem \ref{A1}}
  We determine the motion of the point of contact of the skate using Theorem \ref{A1}.
   We choose the arbitrary functions
  $L_2$ and $L_3$ as follows
  \[
    L_2=\dot{x}\cos\varphi+\dot{y}\sin\varphi,\quad
  L_3=\dot{\varphi},
  \]
   in order that the determinant
  $|W_1|\ne 0$ everywhere in the configuration space.

\smallskip

  The Lagrangian \eqref{b1}  becomes
  \[\begin{array}{rl}
  L(
  x,y,\varphi,\dot{x},\dot{y},\dot\varphi,\Lambda)=&\dfrac{1}{2}\left(\dot{x}^2
  +\dot{y}^2+\dot{\varphi}^2\right)+g\sin\alpha x
  -\lambda(\dot{x}\sin\varphi-\dot{y}\cos\varphi)-\lambda^0_3\dot{\varphi}\vspace{0.20cm}\\
  &\simeq \dfrac{1}{2}\left(\dot{x}^2+\dot{y}^2+\dot{\varphi}^2\right)+g\sin\alpha x
  -\lambda(\dot{x}\sin\varphi-\dot{y}\cos\varphi),
  \end{array}
  \]
  where $\lambda:=\lambda_1.$

  \smallskip

  The matrix $W_1$ and $\Omega_1$ are
\[\begin{array}{rl}
W_1=&\left(\begin{array}{cccc}
\sin\varphi&-\cos{\varphi}&0\\
\cos{\varphi}&\sin\varphi&0\\
0&0&1\\
\end{array}\right),\quad |W_1|=1,\vspace{0.2cm}\\
\Omega_1=&\left(\begin{array}{cccc}
\dot\varphi\cos{\varphi}&\dot\varphi\sin{\varphi}&-L_2\\
-\dot\varphi\sin{\varphi}&\dot\varphi\cos{\varphi}&-L_1\\
0&0&0\\
\end{array}\right).
\end{array}
\]
The matrix $A=W^{-1}_1\Omega_1$ becomes
\[A=\left.\left(\begin{array}{cccc}
0&\dot{\varphi}&-\sin\varphi L_2-\cos\varphi L_1\\
-\dot{\varphi}&0&\cos\varphi L_2-\sin\varphi L_1\\
0&0&0\\
\end{array}\right)\right|_{L_1=0}=\left(\begin{array}{cccc}
0&\dot{\varphi}&-\dot{y}\\
-\dot{\varphi}&0&\dot{x}\\
0&0&0\\
\end{array}\right).
\]
Hence the equation \eqref{210} and transpositional relations
\eqref{211} take the form
\begin{equation}\label{vvvv}
\ddot{x}+\dot{\varphi}\dot{y}=g\sin\alpha+\dot{\lambda}\sin\varphi,
\quad\ddot{y}-\dot{\varphi}\dot{x}=-\dot{\lambda}\cos\varphi,\quad
\ddot{\varphi}=0,
\end{equation}
and
\begin{equation}\label{T1}
\begin{array}{rl}
\delta\dfrac{dx}{dt}-\dfrac{d\delta x}{dt}=&\dot{y}\delta
\varphi-\dot{\varphi}\delta
y,\\
\delta\dfrac{dy}{dt}-\dfrac{d\delta
y}{dt}=&\dot{\varphi}\delta x-\dot{x}\delta{\varphi},\\
\delta\dfrac{d\varphi}{dt}-\dfrac{d\delta
\varphi}{dt}=&-L_2\left(\delta x\sin\varphi-\delta
y\cos\varphi\right)=0,
\end{array}
\end{equation}
respectively, here we have applied the Lagrange--Chetaev's
condition $ \sin\varphi\,\delta x-\cos\varphi\,\delta y=0. $

\smallskip

The initial conditions
\[
x_0=\left.x\right|_{t=0},\quad y_0=\left.y\right|_{t=0},\quad
\varphi_0=\left.\varphi\right|_{t=0},\quad
\dot{x}_0=\left.\dot{x}\right|_{t=0},\quad
\dot{y}_0=\left.\dot{y}\right|_{t=0},\quad
\dot{\varphi}_0=\left.\dot{\varphi}\right|_{t=0},
\]
satisfy the constraint, i.e.
\begin{equation}\label{Ic}
\sin\varphi_0\dot{x}_0-\cos\varphi_0\dot{y}_0=0.
\end{equation}
 After the derivation of the constraint along the solutions of the equation of motion \eqref{vvvv}, and using \eqref{skate} we  obtain
\[
\begin{array}{rl}0=&\sin\varphi\ddot x-\cos\varphi\ddot y+
\dot \varphi\left(\cos\varphi\dot
x+\sin\varphi\dot y\right)\vspace{0.2cm}\\
=&\sin\varphi\left(
g\sin\alpha+\dot{\lambda}\sin\varphi-\dot{\varphi}\dot{y}\right)-
\cos\varphi\left(-\dot{\lambda}\cos\varphi+\dot{\varphi}\dot{x}\right)+\dot
\varphi\left(\cos\varphi\dot x+\sin\varphi\dot y\right).
\end{array}
\]
Hence $\dot{\lambda}=-g\sin\alpha\sin\varphi.$ Therefore the
differential equations \eqref{vvvv} can be written as
\begin{equation}\label{LL000}
\ddot{x}+\dot{\varphi}\dot{y}=g\sin\alpha\cos^2\varphi,
\quad\ddot{x}-\dot{\varphi}\dot{x}=g\sin\alpha\sin\varphi\cos\varphi,\quad
\ddot{\varphi}=0. \end{equation}
 We study the motion of the skate in the following three cases:
\begin{itemize}
\item[(i)]$\left.\dot{\varphi}\right|_{t=0}=\omega=0.$
\item[(ii)]$\left.\dot{\varphi}\right|_{t=0}=\omega\ne 0.$
\item[(iii)]$\alpha=0.$
\end{itemize}
For the first case ($\omega=0$),  after the change of
variables
\[X=\cos\varphi_0\,x-\sin\varphi_0\,y,\quad
Y=\cos\varphi_0\,x+\sin\varphi_0\,y,\] the differential equations
\eqref{LL} and the constraint become
\[
\ddot{X}=0,\quad \ddot{Y}=g\sin\alpha\cos\varphi_0,\quad
\varphi=\varphi_0,\quad \dot{X}=0,
\]
respectively. Consequently
\[X=X_0,\quad
Y=g\sin\alpha\cos\varphi_0\dfrac{t^2}{2}+\dot{Y}_0t+Y_0,\quad
\varphi=\varphi_0,
\]
thus the trajectories are straight lines.

\smallskip

For the second  case ($\omega\ne 0$), we take
$\varphi_0=\dot{y}_0=\dot{x}_0=x_0=y_0=0$  in order to simplify
the computations. In view of the equality
$\dot{\varphi}=\left.\dot{\varphi}\right|_{t=0}=\omega$ and
denoting by $'$ the derivation with respect $\varphi$ we get that
\eqref{LL000} become
\begin{equation}\label{LL1}
x''+y'=\dfrac{g\sin\alpha}{\omega^2}\cos^2\varphi, \quad
x''-{x}'=\dfrac{g\sin\alpha}{\omega^2}\sin\varphi\cos\varphi,\quad
\varphi'=1.
\end{equation}
Which are easy to integrate and we obtain
\[
x=-\dfrac{g\sin\alpha}{4\omega^2}\cos{(2\varphi)},\quad
y=-\dfrac{g\sin\alpha}{4\omega^2}\sin{(2\varphi)}+\dfrac{g}{2\omega^2}\varphi,\quad
\varphi=\omega t,
\]
 which correspond to the equation of the cycloid. Hence the
point of contact of the skate follows a cycloid  along the plane,
but do not slide down the plane.

For the third case ($ \alpha=0$), if $\varphi_0=0,\,\omega\ne 0$
we obtain that the solutions of the given differential systems
\eqref{LL000}  are
\[x=\dot{y}_0\cos\varphi+\dot{x}_0\sin\varphi+a,\quad
y=\dot{y}_0\sin\varphi+\dot{y}_0\cos\varphi+b,\quad
\varphi=\varphi_0+\omega t,\] where
$a=x_0-\dfrac{\dot{y}_0}{\omega},\,b=y_0+\dfrac{\dot{x}_0}{\omega},$
which correspond to the equation of the circle with center at
$(a,b)$ and radius $\dfrac{\dot{x}^2_0+\dot{y}^2_0}{\omega^2}.$

If $ \alpha=0$ and $\varphi_0=0,\,\omega= 0$ then we obtain that
the solutions are
\[x=\dot{x}_0t+x_0,\quad y=\dot{y}_0t+y_0.\]
All these solutions coincide with the solutions obtained from the
Lagrangian equations \eqref{35} with multipliers  (see
\cite{Arnold})\[ \ddot{x}=g\sin\alpha+\mu\sin\varphi,
\quad\ddot{y}-\dot{\varphi}\dot{x}=-\mu\cos\varphi,\quad
\ddot{\varphi}=0,
\]
with $\mu=\dot{\lambda}=-g\sin\alpha\sin\varphi.$

\subsection{The study of the skate applying vakonomic model}
Now we consider instead of Theorem \ref{A1}  the vakomic model for
studying the motion of the skate.

We consider the Lagrangian
\[L(
  x,y,\varphi,\dot{x},\dot{y},\dot\varphi,\Lambda)=\dfrac{1}{2}\left(\dot{x}^2+\dot{y}^2+\dot{\varphi}^2\right)+g\,x\,\sin\alpha
  - \lambda(\dot{x}\sin\varphi-\dot{y}\cos\varphi).
  \]

The equations of motion \eqref{K1} for the skate are
\[\dfrac{d}{dt}\left(\dot{x}-\lambda\sin\varphi\right)=0,
\quad\dfrac{d}{dt}\left(\dot{y}+\lambda\cos\varphi\right)=0,\quad
\ddot{\varphi}=-\lambda\left(\dot{x}\cos\varphi+\dot{y}\sin\varphi\right).\]
We shall study only the case when $\alpha=0.$ After integration we
obtain the differential systems
\begin{equation}\label{mm}\begin{array}{rl}
\dot{x}=&\lambda\sin\varphi+a=\cos\varphi\left(a\cos\varphi+b\sin\varphi\right),
\\
\dot{y}=&-\lambda\cos\varphi+b=\sin\varphi\left(a\cos\varphi+b\sin\varphi\right),\\
\ddot{\varphi}=&\left(b\cos\varphi-
a\sin\varphi\right)\left(a\cos\varphi+b\sin\varphi\right)=(b^2_1+a^2_2)\sin(\varphi+\alpha)\cos(\varphi+\alpha),\\
\lambda=&b\cos\varphi-a\sin\varphi,
\end{array}
\end{equation}
where $a=\dot{x}_0-\lambda_0\sin\varphi_0,$
$b=\dot{y}_0+\lambda_0\cos\varphi_0$ and
$\lambda_0=\lambda|_{t=0}$ is an arbitrary parameter. After the
integration of the third equation we obtain that
\begin{equation}\label{1mm}
\displaystyle\int_{0}^{\varphi}\dfrac{d\varphi}{
\sqrt{1-\kappa^2\sin^2\varphi}}=t\sqrt{\dfrac{h+a^2+b^2}{2}},
\end{equation}
where $h$ is an arbitrary constant which we choose in such a way
that $\kappa^2=\dfrac{2(a^2+b^2)}{h+a^2+b^2}<1.$

{F}rom \eqref{1mm} we get
$\sin\varphi=sn\left(t\sqrt{\dfrac{h+a^2+b^2}{2}}\right),\quad
\cos\varphi=cn\left(t\sqrt{\dfrac{h+a^2+b^2}{2}}\right),$ where
$sn$ and $cn$ are the Jacobi elliptic functions . Hence, if we
take $\dot{x}_0=1,\,\dot{y}_0=\varphi_0=0,$ then the solutions of
the differential equations \eqref{mm} are
\begin{equation}\label{KK1}
\begin{array}{rl}
x=&x_0+\displaystyle\int_{t_0}^t\left(
cn\left(t\sqrt{\dfrac{h+1+\lambda^2_0}{2}}\right)sn\left(t\sqrt{\dfrac{h+1+\lambda^2_0}{2}}\right)+
\lambda_0sn\left(t\sqrt{\dfrac{h+1+\lambda^2_0}{2}}\right)\right)dt,\vspace{0.2cm}\\
y=&y_0+\displaystyle\int_{t_0}^t
sn\left(t\sqrt{\dfrac{h+1+\lambda^2_0}{2}}\right)\,\lambda_0\,sn\left(t\sqrt{\dfrac{h+1+\lambda^2_0}{2}}\right)dt,\vspace{0.2cm}\\
\varphi=&am\left(t\sqrt{\dfrac{h+1+\lambda^2_0}{2}}\right).
\end{array}
\end{equation}
It is interesting to compare this amazing  motions  with the
motions that we obtained above. For the same initial conditions
the skate moves sideways along the circles.
 By considering that the solutions \eqref{KK1} depend on the
 arbitrary parameter $\lambda_0$ we obtain that for the given
 initial conditions do not exist a unique solution of the
 differential equations in the vakonomic model. Consequently the
 principle of determinacy is not valid for vakonomic mechanics
 with nonintegrable constraints
 (see the Corollary of page 36 in \cite{Arnold}).

\section{Modificated  vakonomic mechanics versus Lagrangian
 and constrained Lagrangian mechanics}

\subsection{MVM versus Lagrangian mechanics}

The Lagrangian equations which describe the motion of the
Lagrangian systems can be obtained from Theorem \ref{A1}
 by supposing that $M=0,$ i.e. there is  no constraints We
choose the arbitrary functions $L_\alpha$ for $\alpha=1,\ldots,N$ as
follows
$$
\quad L_\alpha=\dfrac{d x_\alpha }{dt},\quad \alpha=1,\ldots,N.
$$
 Hence the Lagrangian  \eqref{b1}  takes the form
$$
L=L_0-\sum_{j=1}^N\lambda^0_j\dfrac{d x_j}{dt}\simeq L_0.
$$
In this case we have that  $|W_1|=1.$

By considering the property of the Lagrangian derivative (see
\eqref{An1})
 we obtain that  $\Omega_1$ is a zero matrix . Hence the matrices  $A_1$ is the zero
 matrix.    As a consequence  the equations
\eqref{210} become
 \[D_{\nu}L=E_\nu L= E_\nu\left(
L_0-\displaystyle\sum_{j=1}^N\lambda^0_j\dot{x}_j\right)= E\nu
L_0=0\]
 because $L\simeq L_0.$
The transpositional relation \eqref{211} in this case are
$\delta\dfrac{d\textbf{x}}{dt}-\dfrac{d
\delta{\textbf{x}}}{dt}=0,$ which are the well known relations in
the Lagrangian  mechanics  (see formula \eqref{1313}).
\subsection{MVM versus constrained Lagrangian systems}

\smallskip

{From} the equivalences \eqref{HH} we have that in the case when the
constraints are linear in the velocity the equations of motions of
the MVM coincide with the Lagrangian equations with multipliers
\eqref{35}
 except perhaps in a
zero Lebesgue measure set   $|W_2|=0$ or $|W_1|=0.$ When the
constraints are nonlinear in the velocity, we have the equivalence
\eqref{HHH}. Consequently equations of motions of the MVM coincide
with the Lagrangian equations with multipliers \eqref{35}
 except perhaps in a zero Lebesgue measure set   $|W_2|=0.$

We illustrate this result in the following example.

 \textbf{Example 2.} Let
\[\left(\mathbb{R}^2,\quad L_0=\dfrac{1}{2}\left(\dot{x}^2+\dot{y}^2\right)-U(x,y),\quad
\{2\left({x}\dot{{x}}+{y}\dot{{y}}\right)=0\}\right),\]
 be the
constrained Lagrangian systems.

 In order to apply  Theorem
\ref{A1} we choose the arbitrary function $L_1$ and $L_2$ as follow
\begin{itemize}
\item[(a)]
\[ L_1=2\left({x}\dot{{x}}+{y}\dot{{y}}\right),\quad L_2=-y\dot{x}+x\dot{y}.\] Thus the matrices $W_1$ and $\Omega_1$
are
\[W_1=\left(
\begin{array}{cc}
2x&2y\\
-y&x\\
\end{array}\right),\quad |W_1|=2x^2+2y^2=2,\quad\Omega_1=\left(
\begin{array}{cc}
0&0\\
-2\dot{y}&2\dot{x}\\
\end{array}\right).\]
Consequently equations \eqref{210} describe the motion everywhere
for the constrained Lagrangian systems.

Equations \eqref{210}  become
\[\begin{array}{rl}
\ddot{x}=\left.-\dfrac{\partial U}{\partial
x}+2\dot{y}\left(y\dot{x}-x\dot{y}\right)+2x\dot\lambda\right|_{L_1=0}=-\dfrac{\partial
U}{\partial
x}+x\left(\dot\lambda-2(\dot{x}^2+\dot{y}^2)\right),\vspace{0.2cm}\\
\ddot{y}=\left.-\dfrac{\partial U}{\partial
y}-2\dot{x}\left(y\dot{x}-x\dot{y}\right)+2y\dot\lambda\right|_{L_1=0}=-\dfrac{\partial
U}{\partial y}+y\left(\dot\lambda-2(\dot{x}^2+\dot{y}^2)\right),
\end{array}
\]

 Transpositional relations take the form
\begin{equation}\label{m2}
\delta\dfrac{d{x}}{dt}-\dfrac{d\delta{x}}{dt}=2y\left(\dot{y}\delta
x-\dot{x}\delta y\right),\quad
\delta\dfrac{d{y}}{dt}-\dfrac{d\delta{y}}{dt}=-2x\left(\dot{y}\delta
x-\dot{x}\delta y\right).
\end{equation}
\item[(b)]
If we choose
$L_2=\dfrac{y\dot{x}}{x^2+y^2}-\dfrac{x\dot{y}}{x^2+y^2}=\dfrac{d}{dt}\arctan{\dfrac{x}{y}},$
then
\[W_1=\left(
\begin{array}{cc}
2x&2y\\
\dfrac{y}{x^2+y^2}&-\dfrac{x}{x^2+y^2}\\
\end{array}\right),\quad |W_1|=-2,\quad\Omega_1=\left(
\begin{array}{cc}
0&0\\
0&0\\
\end{array}\right).\]
Equations \eqref{210}  and transpositional relations become
\[
\ddot{x}=-\dfrac{\partial
U}{\partial\,x}+2x\dot\lambda,\quad\ddot{y}=-\dfrac{\partial
U}{\partial y}+2y\dot\lambda,
\]
\begin{equation}\label{m22}
\delta\dfrac{d{x}}{dt}-\dfrac{d\delta{x}}{dt}=0,\quad
\delta\dfrac{d{y}}{dt}-\dfrac{d\delta{y}}{dt}=0.
\end{equation}
respectively.
\end{itemize}
{F}rom this example we obtain that for the holonomic constrained
Lagrangian systems the transpositional relations can be non--zero
(see \eqref{m2}), or can be zero (see \eqref{m22}). We observe that
from condition \eqref{Nat1} it follows the relation
\[x\left(\delta\dfrac{d{x}}{dt}-\dfrac{d\delta{x}}{dt}\right)+
y\left(\delta\dfrac{d{y}}{dt}-\dfrac{d\delta{y}}{dt}\right)=0.\]
This equality holds identically if \eqref{m22} and \eqref{m2} takes
place.

The equations of motions \eqref{35} in this case are
\[
\ddot{x}=-\dfrac{\partial
U}{\partial\,x}+2x\,\mu,\quad\ddot{y}=-\dfrac{\partial U}{\partial
y}+2y\,\mu,
\]
with $\mu=\dot\lambda-2(\dot{x}^2+\dot{y}^2).$

\smallskip

\textbf{Example 3.}  To contrast the MVM with the classical model we
apply Theorems \ref{A1}  to the  {\it Gantmacher's systems}
{\rm(}see for more details \cite{Gant,Sad}{\rm )}.

\smallskip

 Two material points $m_1$ and $m_2$ with equal masses are linked by a
metal rod with fixed length $l$  and small mass. The systems can
move only in the vertical plane and so the speed of the midpoint
of the rod is directed along the rod. It is necessary to determine
the trajectories of the material points $m_1$ and $m_2.$

 \smallskip

 Let $(q_1,\,r_1)$ and $(q_2,\,r_2)$ be the coordinates of the
points  $m_1$ and $m_2,$ respectively. Clearly
$(q_1-q_2)^2+(r_1-r_2)^2=l^2.$ Thus we have a  constrained Lagrangian system in the
configuration space $ \mathbb{R}^4$ with the Lagrangian function
$\textsc{L}=\dfrac{1}{2}\left(\dot{q}^2_1+
\dot{q}^2_2+\dot{r}^2_1+\dot{r}^2_2\right)-g/2{(r_1+r_2)},$ and
with the linear constraints
\[(q_2-q_1)(\dot{q}_2-\dot{q}_1)+(r_2-r_1)(\dot{r}_2-\dot{r}_1)=0,\quad
(q_2-q_1)(\dot{r}_2+\dot{r}_1)-(r_2-r_1)(\dot{q}_2+\dot{q}_1)=0.
\]
 Introducing the following
change of coordinates:
\[
 x_1=\dfrac{q_2-q_1}{2},\quad
x_2=\dfrac{r_1-r_2}{2}, \quad x_3=\dfrac{r_2+r_1}{2},\quad
x_4=\dfrac{q_1+q_2}{2},
\]
 we obtain
$x^2_1+x^2_2=\dfrac{1}{4}\left((q_1-q_2)^2+(r_1-r_2)^2\right)=\dfrac{l^2}{4}.$
Hence we have the  constrained Lagrangian mechanical systems
\[
\left(\mathbb{R}^4,\quad
\tilde{L}=\displaystyle\frac{1}{2}\sum_{j=1}^4\dot{x}^2_j-gx_3,\quad
\{x_1\dot{x}_1+x_2\dot{x}_2=0,\quad
x_1\dot{x}_3-x_2\dot{x}_4=0\}\right).
\]
  The equations of motion \eqref{35}
obtained from the d'Alembert--Lagrange principle are
\begin{equation}\label{G010}
\ddot{x}_1=\mu_1x_1,\quad\ddot{x}_2=\mu_1x_2,\quad\ddot{x}_3=-g+\mu_2x_1,\quad\ddot{x}_4=-\mu_2x_2,
\end{equation}where $\mu_1,\,\mu_2$ are the Lagrangian multipliers
such that
\begin{equation}\label{Gg31}
\mu_1=-\dfrac{\dot{x}^2_1+\dot{x}^2_2}{x^2_1+x^2_2},\quad
\mu_2=\dfrac{
\dot{x}_2\dot{x}_4-\dot{x}_1\dot{x}_3+gx_1}{x^2_1+x^2_2}.
\end{equation}
 For applying Theorem \ref{A1} we have  the constraints
\[
L_1=x_1\dot{x}_1+x_2\dot{x}_2=0,\quad
L_2=x_1\dot{x}_3-x_2\dot{x}_4=0,\] and we choose the arbitrary
functions $L_3$ and $L_4$ as follows
\[
L_3=-x_1\dot{x}_2+x_2\dot{x}_1,\quad
L_4=x_2\dot{x}_3+x_1\dot{x}_4.
  \]
 For the given functions we
 obtain that
\[
W_1=\left(\begin{array}{cccc}
x_1&x_2&0&0\\
0&0&x_1&-x_2\\
x_2&-x_1&0&0\\
0&0&x_2&x_1
\end{array}
\right),\quad \Omega_1=\left(\begin{array}{cccc}
0&0&0&0\\
-\dot{x}_3&\dot{x}_4&\dot{x}_1&-\dot{x}_2\\
-2\dot{x}_2 &2\dot{x}_1&0&0\\
-\dot{x}_4&-\dot{x}_3&\dot{x}_2&\dot{x}_1
\end{array} \right).
\]
Therefore $|W_1|=(x^2_1+x^2_2)^2=\dfrac{l^2}{4}\ne 0.$ The matrix
$A$ in this case is
\[
\left(\begin{array}{cccc}
\dfrac{2x_2\dot{x}_2}{x^2_1+x^2_2}&-\dfrac{2x_2\dot{x}_1}{x^2_1+x^2_2}
&0&0\vspace{0.2cm}\\
-\dfrac{2x_1\dot{x}_2}{x^2_1+x^2_2}&\dfrac{2x_1\dot{x}_1}{x^2_1+x^2_2}
&0&0\vspace{0.2cm}\\
-\dfrac{x_1\dot{x}_3+x_2\dot{x}_4}{x^2_1+x^2_2}&\dfrac{x_1\dot{x}_4-
x_2\dot{x}_3}{x^2_1+x^2_2}
&\dfrac{x_1\dot{x}_1+x_2\dot{x}_2}{x^2_1+x^2_2}&\dfrac{x_2\dot{x}_1-
x_1\dot{x}_2}{x^2_1+x^2_2}\vspace{0.2cm}\\
\dfrac{x_1\dot{x}_4-x_2\dot{x}_3}{x^2_1+x^2_2}&\dfrac{x_1\dot{x}_3-x_2\dot{x}_4}{x^2_1
+x^2_2}&\dfrac{x_2\dot{x}_1-x_1\dot{x}_2}{x^2_1+x^2_2}
&\dfrac{x_1\dot{x}_1+x_2\dot{x}_2}{x^2_1+x^2_2}
\end{array}
\right).
\]

Consequently differential equations \eqref{210} take the form
\begin{equation}\label{lll}
\begin{array}{rl}
\ddot{x}_1=&\left.\left(\dfrac{2x_2\dot{x}_1\dot{x}_2-2x_1\dot{x}^2_2-x_1\dot{x}^2_3-
x_1\dot{x}^2_4}{x^2_1+x^2_2}+x_1\dot{\lambda}_1\right)\right|_{L_1=L_2=0}\vspace{0.2cm}\\
=&x_1\left(\dot{\lambda}_1-
\dfrac{2\dot{x}^2_1+2\dot{x}^2_2+\dot{x}^2_3+\dot{x}^2_4}{x^2_1+
x^2_2}\right),\vspace{0.2cm}\\
\ddot{x}_2=&-\left.\left(\dfrac{-2x_1\dot{x}_1\dot{x}_2+2x_2\dot{x}^2_2+x_2\dot{x}^2_3+
x_2\dot{x}^2_4}{x^2_1+x^2_2}+x_2\dot{\lambda}_1\right)\right|_{L_1=L_2=0}\vspace{0.2cm}\\
=&x_2\left(\dot{\lambda}_1-
\dfrac{2\dot{x}^2_1+2\dot{x}^2_2+\dot{x}^2_3+
\dot{x}^2_4}{x^2_1+x^2_2}\right),\vspace{0.2cm}\\
\ddot{x}_3=&\left.\left(\dfrac{\dot{x}_3\left(x_1\dot{x}_1+x_2\dot{x}_2\right)-
\dot{x}_4\left(x_2\dot{x}_1-x_1\dot{x}_2\right)}{x^2_1+x^2_2}
+x_1\dot{\lambda}_2-g\right)\right|_{L_1=L_2=0}\vspace{0.2cm}\\
=&\dfrac{\dot{x}_4\left(x_2\dot{x}_1
-x_1\dot{x}_2\right)}{x^2_1+x^2_2}
+x_1\dot{\lambda}_2-g,\vspace{0.2cm}\\
\ddot{x}_4=&\left.\left(\dfrac{\dot{x}_4\left(x_1\dot{x}_1+x_2\dot{x}_2\right)
-\dot{x}_3\left(x_2\dot{x}_1-x_1\dot{x}_2\right)}{x^2_1+x^2_2}
-x_2\dot{\lambda}_2\right)\right|_{L_1=L_2=0}\vspace{0.2cm}\\
=&-\dfrac{\dot{x}_3\left(x_2\dot{x}_1
-x_1\dot{x}_2\right)}{x^2_1+x^2_2} -x_2\dot{\lambda}_2.
\end{array}
 \end{equation}
 Derivating the constraints we obtain  that the multipliers $\dot{\lambda}_1$ and
 $\dot{\lambda}_2$ are
 \[
 \dot{\lambda}_1=\dfrac{\dot{x}^2_1+\dot{x}^2_2+\dot{x}^2_3+\dot{x}^2_4}{x^2_1
 +x^2_2}=\mu_1+\dfrac{\dot{x}^2_3+\dot{x}^2_4}{x^2_1
 +x^2_2},\quad\dot{\lambda}_2=\dfrac{gx_1}{x^2_1+x^2_2}=\mu_2+\dfrac{
\dot{x}_1\dot{x}_3-\dot{x}_2\dot{x}_4}{x^2_1+x^2_2}.
 \]
 Inserting these values into \eqref{lll} we deduce
\[
\begin{array}{ll}
\ddot{x}_1=&-\dfrac{x_1\left(\dot{x}^2_1+\dot{x}^2_2\right)}{x^2_1+x^2_2},\qquad
\ddot{x}_2=\dfrac{x_2\left(\dot{x}^2_1+\dot{x}^2_2\right)}{x^2_1+x^2_2},\vspace{0.2cm}\\
\ddot{x}_3=&-g+\dfrac{x_1\left(
\dot{x}_2\dot{x}_4-\dot{x}_1\dot{x}_3+gx_1\right)}{x^2_1+x^2_2},\quad
\ddot{x}_4=-\dfrac{x_2\left(
\dot{x}_2\dot{x}_4-\dot{x}_1\dot{x}_3+gx_1\right)}{x^2_1+x^2_2}.
\end{array}
\]
These equations coincide with equations \eqref{G010} everywhere
because $|W_1|=\dfrac{l^2}{4},$ where $l$ is the length of the
rod.

\smallskip

 The transpositional relations in this case are

\begin{equation}\label{Llll}
\begin{array}{ll}
\delta\dfrac{d{x}_1}{dt}-\dfrac{d\delta{x}_1}{dt}=&-\dfrac{2x_2}{x^2_1+x^2_2}\left(\dot{x}_1\delta
x_2- \dot{x}_2\delta x_1\right),\vspace{0.2cm}\\
\delta\dfrac{d{x}_2}{dt}-\dfrac{d\delta{x}_2}{dt}=&\dfrac{2x_1}{x^2_1+x^2_2}\left(\dot{x}_1\delta
x_2- \dot{x}_2\delta x_1\right),\vspace{0.2cm}\\
\delta\dfrac{d{x}_3}{dt}-\dfrac{d\delta{x}_3}{dt}=&\dfrac{x_1}{x^2_1+x^2_2}\left(\dot{x}_1\delta
x_3- \dot{x}_3\delta x_1 +\dot{x}_4\delta x_2- \dot{x}_2\delta
x_4\right),\vspace{0.2cm}\\
&+\dfrac{x_2}{x^2_1+x^2_2}\left(\dot{x}_1\delta x_4-
\dot{x}_4\delta x_1+\dot{x}_2\delta x_3- \dot{x}_3\delta
x_2\right),\\
\delta\dfrac{d{x}_4}{dt}-\dfrac{d\delta{x}_4}{dt}=&-\dfrac{x_2}{x^2_1+x^2_2}\left(\dot{x}_1\delta
x_3- \dot{x}_3\delta x_1+\dot{x}_4\delta x_2- \dot{x}_2\delta
x_4\right)\vspace{0.3cm}\\
&+\dfrac{x_1}{x^2_1+x^2_2}\left(\dot{x}_1\delta x_4-
\dot{x}_4\delta x_1+\dot{x}_2\delta x_3- \dot{x}_3\delta
x_2\right).
\end{array}
  \end{equation}

From this example we again get that  the  virtual variations produce
the non--zero transpositional relations.

\begin{remark}
From the previous example we observe that  the virtual
variations produce zero or non--zero transpositional relations,
depending on the arbitrary functions which appear in the
construction of the proposed mathematical model. Thus, the
following question arises: Can  be choosen the arbitrary functions
$L_j$ for $j=M+1,\ldots,N$ in such a way that for the nonholonomic
systems only the independent virtual variations would generate
zero transpositional relations?
\end{remark}
The positive answer to this question {is obtained locally
for any constrained Lagrangian systems and globally for}  the {\it
Chaplygin-Voronets mechanical systems,} and for the generalization
of these systems studied in the next section.

\section{ MVM and nonholonomic generalized Voronets--Chaplygin systems.
 Proofs of  Theorem \ref{A0} and Proposition \ref{aa1} and \ref{aa2}.}

It was pointed out by Chaplygin \cite{Chaplygin} that in many
conservative nonholonomic systems the generalized coordinates
$$\left(\textbf{x},\textbf{y}\right):=
\left(x_1,\ldots,x_{s_1},y_1,\ldots,y_{s_2}\right),\quad
s_1+s_2=N,$$ can be chosen in such a way that the Lagrangian
function and the constraints take the simplest form. In particular
Voronets in \cite{Voronets} studied the constrained Lagrangian
systems with Lagrangian
$\tilde{L}=\tilde{L}\left(\textbf{x},\textbf{y},\dot{\textbf{x}},\dot{\textbf{y}}\right)$
and constraints \eqref{Vor}. This systems is called the {\it
Voronets mechanical systems}.

 We shall apply equations \eqref{210} to study
the generalization of the Voronets systems, which we define now.

\smallskip

The constrained Lagrangian mechanical systems
\begin{equation}\label{Ra1}
\left(\textsc{Q},\quad
\tilde{L}\left(t,\textbf{x},\textbf{y},\dot{\textbf{x}},\dot{\textbf{y}}\right)
,\quad \{\dot{x}_\alpha-\Phi_\alpha
\left(t,\textbf{x},\textbf{y},\dot{\textbf{y}}\right)=0,\quad
\alpha=1,\ldots,s_1\}\right),
\end{equation}
is called the {\it  generalized Voronets mechanical systems.}

\smallskip

An example of generalized Voronets systems is Appell-Hamel systems
analyzed in the previous subsection.

\smallskip

\begin{corollary} \label{BBB}
Every Nonholonomic  constrained Lagrangian mechanical systems
locally is a generalized Voronets mechanical systems.
\end{corollary}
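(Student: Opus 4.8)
The plan is to obtain the statement as an immediate consequence of the Implicit Function Theorem, exactly as announced just before \eqref{qpr}. Recall that a nonholonomic constrained Lagrangian system $(\textsc{Q},\tilde L,\MM^*)$ carries $M<N$ independent constraints $L_j(t,\mathbf{x},\dot{\mathbf{x}})=0$, $j=1,\ldots,M$, satisfying \eqref{TT}, so the $M\times N$ Jacobian $\partial(L_1,\ldots,L_M)/\partial(\dot{x}_1,\ldots,\dot{x}_N)$ has rank $M$ at every point of $\mathbb{R}\times T\textsc{Q}$ outside a set of zero Lebesgue measure. First I would fix a point $p_*=(t_*,\mathbf{x}_*,\dot{\mathbf{x}}_*)\in\MM^*$ lying in the open set where the rank equals $M$. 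Then some $M\times M$ minor of this Jacobian is nonzero at $p_*$, and after a permutation of the coordinate labels (which is part of what ``locally'' is allowed to mean) we may assume it is $\det\bigl(\partial(L_1,\ldots,L_M)/\partial(\dot{x}_1,\ldots,\dot{x}_M)\bigr)\neq 0$ in a neighbourhood of $p_*$.

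Next I would solve the constraint equations for the first $M$ velocities: by the Implicit Function Theorem applied to the map $(\dot{x}_1,\ldots,\dot{x}_M)\mapsto(L_1,\ldots,L_M)$ near $p_*$, there are smooth functions $\Phi_\alpha$, $\alpha=1,\ldots,M$, for which, on a neighbourhood of $p_*$, the system $L_j=0$ ($j=1,\ldots,M$) is equivalent to the resolved form \eqref{qpr}. Then I would relabel the coordinates to recognise the shape of \eqref{Ra1}: set $s_1=M$, $s_2=N-M$, write $\mathbf{x}=(x_1,\ldots,x_{s_1})$ for the velocities that have been expressed and $\mathbf{y}=(y_1,\ldots,y_{s_2})=(x_{M+1},\ldots,x_N)$ for the remaining ones, and keep the original Lagrangian, now read as $\tilde L(t,\mathbf{x},\mathbf{y},\dot{\mathbf{x}},\dot{\mathbf{y}})$. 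With this identification the constrained Lagrangian system coincides, on the chosen neighbourhood, with
\[
\left(\textsc{Q},\ \tilde L(t,\mathbf{x},\mathbf{y},\dot{\mathbf{x}},\dot{\mathbf{y}}),\ \{\dot{x}_\alpha-\Phi_\alpha(t,\mathbf{x},\mathbf{y},\dot{\mathbf{y}})=0,\ \alpha=1,\ldots,s_1\}\right),
\]
which is precisely the generalized Voronets mechanical system \eqref{Ra1}.

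The only delicate point — and essentially the sole obstacle — is the exceptional zero Lebesgue measure set where \eqref{TT} fails: there the minor used in the Implicit Function Theorem may vanish and no such local reduction need exist, which is exactly why the assertion can only be stated locally, namely on the open (dense) set where the rank of the constraint Jacobian is $M$. Once the point $p_*$ is chosen in that set, the argument above is complete, and no genuine computation is required; the content of the corollary is entirely the observation that maximal rank of the velocity Jacobian is precisely the hypothesis needed to put the constraints in Voronets resolved form.
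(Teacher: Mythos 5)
Your proposal is correct and follows exactly the same route as the paper's own proof: use the rank condition \eqref{TT} with the Implicit Function Theorem to put the constraints in the resolved form \eqref{qpr}, then relabel the coordinates as $(\textbf{x},\textbf{y})$ to match the generalized Voronets form \eqref{Ra1}. Your additional care about fixing a point where the rank is maximal and noting the exceptional zero-measure set is a welcome precision, but the argument is the same.
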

\begin{proof}
Indeed, the independent constraints can be locally represented in
the form \eqref{qpr}. Thus by introducing the coordinates
\[ x_j=x_j,\quad \mbox{for}\quad j=1,\ldots,M,\quad x_{M+k}=y_k,\quad \mbox{for}\quad k=1,\ldots, N-M,\]
then we have that any constrained Lagrangian mechanical systems is
 locally a generalized Voronets mechanical systems.
\end{proof}

\begin{proof}[Proof of Theorem  \ref{A0}] For simplicity we shall study only  scleronomic  generalized
Voronets systems.

\smallskip

 To determine equations \eqref{210} we suppose that
\begin{equation}\label{Ch22}
L_\alpha=\dot{x}_\alpha-\Phi_\alpha \left(
\textbf{x},\textbf{y},\dot{\textbf{y}}\right)=0,\quad
\alpha=1,\ldots,s_1.
\end{equation}
 It is evident from
the form of the constraint equations that the virtual variations
$\delta{\textbf{y}},$ are independent by definition. The
remaining variations $\delta{\textbf{x}},$ can be expressed in
terms of them by the relations (Chetaev's conditions)
\begin{equation}\label{44}
\delta{x}_\alpha-
\sum_{j=1}^{s_2}\frac{\partial{L_\alpha}}{\partial{\dot{y_j}}}\delta{y_j}=0,\quad
\alpha=1,\ldots,s_1.
\end{equation}
We shall apply Theorem \ref{A1}.  To construct the matrix $W_1.$ We
first determine $L_{{s_1}+1},\ldots,L_{s_1+s_2}=L_N$ as follow:
$$L_{s_1+j}=\dot{y}_j,\quad
j=1,\ldots,s_2.$$Hence, the  Lagrangian \eqref{21} becomes
\begin{equation}\label{Ra2}
L=L_0-\sum_{j=1}^{s_1}\lambda_j\left(\dot{x}_\alpha-\Phi_\alpha
(x,y,\dot{y})\right)-\sum_{j=s_1+1}^{N}\lambda^0_j\dot{y}_j\simeq
L_0-\sum_{j=1}^{s_1}\lambda_j\left(\dot{x}_\alpha-\Phi_\alpha
(x,y,\dot{y})\right).
\end{equation}
 The matrices $W_1$ and $W^{-1}_1$ are
\begin{equation}\label{45}
\left(
\begin{array}{ccccccc}
1&\hdots&0&0&a_{11}&\hdots&a_{{s_2}1}\\
0&\hdots&0&0&a_{12}&\hdots&a_{{s_2}2}\\
\vdots&\hdots&\vdots&\vdots&\vdots&\hdots&\vdots\\
0&\hdots&\vdots&1&a_{1{s_1}}&\hdots&a_{{s_2}{s_1}}\\
0&\hdots&0&0&1&\hdots&0\\
\vdots&\hdots&\vdots&\vdots&\vdots&\hdots&\vdots\\
0&\hdots&0&0&0&\hdots&1\end{array} \right),\quad\left(
\begin{array}{ccccccc}
1&\hdots&0&0&-a_{11}&\hdots&-a_{{s_2}1}\\
0&\hdots&0&0&-a_{12}&\hdots&-a_{{s_2}2}\\
\vdots&\hdots&\vdots&\vdots&\vdots&\hdots&\vdots\\
0&\hdots&\vdots&1&a_{1{s_1}}&\hdots&-a_{{s_2}{s_1}}\\
0&\hdots&0&0&1&\hdots&0\\
\vdots&\hdots&\vdots&\vdots&\vdots&\hdots&\vdots\\
0&\hdots&0&0&0&\hdots&1\end{array} \right),
\end{equation}
respectively, where $
a_{\alpha\,j}=\dfrac{\partial{L_\alpha}}{\partial\dot{y}_j},$ and the
matrices $\Omega_1$  and $A$ are
\begin{equation}\label{46}
 A=\Omega_1:=
 \left(\begin{array}{ccccccc}
E_1(L_1)&\hdots&E_{s_1}(L_{1})&E_{s_1+1}(L_{1})&\hdots&E_N(L_1)\\
\vdots&\hdots&\vdots&\hdots&\hdots&\vdots\\
E_1(L_{s_1})&\hdots&E_{s_1}(L_{s_1})&E_{s_1+1}(L_{s_1})&\hdots&E_N(L_{s_1})\\
0&\hdots&0&\hdots&0&0\\
\vdots&\hdots&\vdots&\hdots&\hdots&\vdots\\
0&\hdots&0&\hdots&0&0\end{array} \right),
\end{equation}
respectively. Consequently the differential equations \eqref{210} take the form \eqref{48}.

\smallskip

The transpositional relations \eqref{211} in view of \eqref{44} take the form \eqref{49}.
 As we can observe from \eqref{49} the independent virtual
variations $\delta{\textbf{y}}$ for the systems with the
constraints \eqref{Ch22} produce the zero transpositional
relations. The fact that the transpositional relations are zero follows
automatically and it is not necessary to assume it a priori, and it
is valid in general for the constraints which are nonlinear in
the velocity variables.

We observe that the relations \eqref{Nat1} in this case take the
form
\[\delta\dfrac{dx_\alpha}{dt}-\dfrac{d}{dt}\delta\,x_\alpha+\displaystyle\sum_{m=1}^{s_2}\dfrac{\partial
L_\alpha}{\partial \dot{y}_m}
\left(\delta\dfrac{dy_m}{dt}-\dfrac{d}{dt}\delta\,y_m\right)=\displaystyle\sum_{k=1}^{s_1}
E_k(L_\alpha )\delta x_k+\displaystyle\sum_{k=1}^{s_2}
E_k(L_\alpha )\delta y_k.
\]
for $\alpha=1,\ldots,s_1.$  Clearly from \eqref{49}  these relations
hold identically.

{F}rom differential equations \eqref{48}, eliminating the Lagrangian
multipliers we obtain equations \eqref{Pa3}. After some
computations we obtain
\begin{equation}\label{488}
\begin{array}{rl}
\dfrac{d}{dt}\left(\dfrac{\partial\,L_0}{\partial\dot{y}_k}-
\displaystyle\sum_{\alpha=1}^{s_1}\dfrac{\partial\,L_\alpha}{\partial\dot{y}_k}
\dfrac{\partial\,L_0}{\partial\dot{x}_\alpha}\right)
-&\left(\dfrac{\partial\,L_0}{\partial{y}_k}-\displaystyle\sum_{\alpha=1}^{s_1}\dfrac{\partial\,L_\alpha}{\partial{y}_k}
\dfrac{\partial\,L_0}{\partial{\dot{x}}_\alpha}\right)+\vspace{0.2cm}\\
&\displaystyle\sum_{\alpha=1}^{s_1}\left(\dfrac{\partial\,L_0}{\partial{x}_\alpha}-
\displaystyle\sum_{\beta=1}^{s_1}\dfrac{\partial\,L_\beta}{\partial{x}_\alpha}
\dfrac{\partial\,L_0}{\partial\dot{x}_\beta}\right)\dfrac{\partial\,L_\alpha}{\partial\dot{y}_k}=0,
\end{array}
\end{equation}
for $ k=1,\ldots,s_2.$

By introducing the function
$\Theta=\left.L_0\right|_{L_1=\ldots=L_{s_1}=0},$ equations
\eqref{488} can be written as
\begin{equation}\label{4488}
\dfrac{d}{dt}\left(\dfrac{\partial\,\Theta}{\partial\dot{y}_k}\right)
-\left(\dfrac{\partial\,\Theta}{\partial{y}_k}\right)+
\displaystyle\sum_{\alpha=1}^{s_1}\left(\dfrac{\partial\,\Theta}{\partial{x}_\alpha}\right)\dfrac{\partial\,L_\alpha}{\partial\dot{y}_k}=0,
\end{equation}
for $k=1,\ldots,s_2.$ Here we consider that
$\dfrac{d}{dt}\left(\dfrac{\partial L_\beta}{\partial
\dot{x}_\alpha}\right)=0,$ for $\alpha,\,\beta=1,\ldots,s_1.$

\smallskip

 We shall study the case
when  equations \eqref{4488} hold identically, i.e. $\Theta=0.$ We choose
\begin{equation}\label{4888}
L_0=\tilde{L}\left(\textbf{x},\textbf{y},\dot{\textbf{x}},\dot{\textbf{y}}\right)-
 \tilde{L}\left(\textbf{x},\textbf{y},\Phi,\dot{\textbf{y}}\right)=\tilde{L}-L^*,
 \end{equation}
 being $\tilde{L}$ the Lagrangian of \eqref{Ra1}.
 Now we establish the relations between equations \eqref{48}  and
 the classical Voronets differential equations with the Lagrangian function
 $L^*=\left.\tilde{L}\right|_{L_1=\ldots= L_{s_1}= 0}.$
The functions $\tilde{L}$ and $L^*$  are determined in such a
way that equations \eqref{Pa3} take place in view of the
equalities
\[ E_{k}\tilde{L}=\displaystyle\sum_{\alpha=1}^{s_1}E_\alpha\tilde{L}
\dfrac{\partial\,L_\alpha}{\partial\dot{y}_k},
\]
and
\[
E_{k}L^*=-\displaystyle\sum_{\alpha=1}^{s_1}\left(-E_k(L_\alpha\,)+
\displaystyle\sum_{\nu=1}^{s_1}E_\nu\,(L_\alpha)\dfrac{\partial\,L_\nu}{\partial\dot{y}_k}\right)
\dfrac{\partial\,\tilde{L}}{\partial\dot{x}_\alpha}-\displaystyle\sum_{\nu=1}^{s_1}E_\nu(L^*)
\dfrac{\partial L_\nu}{\partial\dot{y}_k},
\]
for $ k=1,\ldots,s_2,$ which in view of equalities
$\dfrac{d}{dt}\left(\dfrac{\partial L^*}{\partial
\dot{x}_\nu}\right)=0$ for $\nu=1,\ldots,s_1,$ take the form
\eqref{C48}.
\end{proof}
\begin{proof}[Proof of Proposition \ref{aa1}]
Equations \eqref{C48} describe the motion of the constrained
generalized Voronets systems with Lagrangian $L^*$ and constraints
\eqref{Ch22}. The classical Voronets equations for scleronomic
systems are easy to obtain from \eqref{C48} with
$\Phi_\alpha=\displaystyle\sum_{k=1}^{s_2}a_{\alpha\,k}(\textbf{x},\textbf{y})\dot{y}_k.$
\end{proof}

\smallskip

 Finally by considering Corollary \ref{BBB}  we get that differential equations \eqref{C48}
 describe locally  the motions of any constrained Lagragian systems.

\subsection{Generalized Chaplygin systems}

The   constrained Lagrangian mechanical systems with Lagrangian
$
\tilde{L}=\tilde{L}\left(\textbf{y},\dot{\textbf{x}},
\dot{\textbf{y}}\right),
 $
 and constraints \eqref{Ch2}
is called the {\it Chaplygin mechanical systems}.

\smallskip

The constrained Lagrangian systems
\[
\left(\textsc{Q},\quad\tilde{ L}\left(\textbf{y},\dot{\textbf{x}},
\dot{\textbf{y}}\right),\qquad
\{\dot{x}_\alpha-\Phi_\alpha\left(\textbf{y},\,\dot{\textbf{y}}\right)=0,\quad
\alpha=1,\ldots,s_1\}\right)
\]
is called the {\it generalized Chaplygin systems}. Note that now the Lagrangian do
not depend on $\textbf{x}$ and the constraints  do not depend on $\textbf{x}$ and $\dot{\textbf{x}}.$
So, the  generalized Chaplygin systems are a particular  case of the generalized Voronets system.

\begin{proof}[Proof of Proposition \ref{aa2}]

\smallskip

 To determine the differential equations which describe the behavior of the generalized
Chaplygin systems we apply  Theorem \ref{A1}, with
\[L_0={L}_0\left(\textbf{y},\dot{\textbf{x}},
\dot{\textbf{y}}\right),\quad L_\alpha=\dot{x}_\alpha-\Phi_\alpha\left(\textbf{y} \dot{\textbf{y}}\right),\quad
L_{\beta}=\dot{y}_\beta,
\]
{for} $\alpha=1,\ldots,s_1$ and $\beta=s_1+1,\ldots,s_2$
 and consequently the matrix $W_1$ is given by the formula
 \eqref{45} and
\begin{equation}\label{5000}
\begin{array}{rl}
 A=\Omega_1:=&\left(\begin{array}{ccccccc}
E_1(L_1)&\hdots&E_{s_1}(L_{1})&E_{s_1+1}(L_{1})&\hdots&E_N(L_1)\\
\vdots&\hdots&\vdots&\hdots&\hdots&\vdots\\
E_1(L_{s_1})&\hdots&E_{s_1}(L_{s_1})&E_{s_1+1}(L_{s_1})&\hdots&E_N(L_{s_1})\\
0&\hdots&0&\hdots&0&0\\
\vdots&\hdots&\vdots&\hdots&\hdots&\vdots\\
0&\hdots&0&\hdots&0&0\end{array}
\right)\vspace{0.30cm}\\
=&\left(\begin{array}{ccccccc}
0&\hdots&0&E_{s_1+1}(L_{1})&\hdots&E_N(L_1)\\
\vdots&\hdots&\vdots&\hdots&\hdots&\vdots\\
0&\hdots&0&E_{s_1+1}(L_{s_1})&\hdots&E_N(L_{s_1})\\
0&\hdots&0&\hdots&0&0\\
\vdots&\hdots&\vdots&\hdots&\hdots&\vdots\\
0&\hdots&0&\hdots&0&0\end{array} \right),
\end{array}
\end{equation}
Therefore the differential equations \eqref{210} take the form
\begin{equation}\label{4000}
\begin{array}{rl}
 E_jL_0=&\dfrac{d}{dt}\left(\dfrac{\partial
L_0}{\partial\dot{x}_\alpha}\right)=\dot{\lambda}_j\quad
j=1,\ldots,s_1,\\
 E_{k}L_0=&\displaystyle\sum_{\alpha=1}^{s_1}\left(E_k L_\alpha\, \dfrac{\partial
L_{0}}{\partial\dot{x}_\alpha}+\dot{\lambda}_\alpha
\dfrac{\partial L_\alpha}{\partial\dot{y}_k}\right)\quad
k=1,\ldots,s_2.
\end{array}
\end{equation}
The transpositional relations are
\begin{equation}\label{449}
\begin{array}{rl}
&\delta\dfrac{dx_\alpha}{dt}-\dfrac{d}{dt}\delta\,x_\alpha=\displaystyle\sum_{k=1}^{s_2}
E_k(L_\alpha )\delta y_k,\quad \alpha=1,\ldots,s_1,\\
&\delta\dfrac{dy_m}{dt}-\dfrac{d}{dt}\delta\,y_m=0,\quad
m=1,\ldots,s_2.
\end{array}
\end{equation}
By excluding the Lagrangian multipliers from \eqref{4000} we obtain the equations
\[ E_{k}L_0=\displaystyle\sum_{\alpha=1}^{s_1}\left(E_k(L_\alpha
)\dfrac{\partial L_0}{\partial\dot{x}_\alpha}+\dfrac{d}{dt}\left(\dfrac{\partial
L_0}{\partial\dot{x}_\alpha}\right)\dfrac{\partial L_\alpha}{\partial\dot{y}_k}\right),\] for $
k=1,\ldots,s_2.$

\smallskip

In this case equations \eqref{4888} take the form
\begin{equation}\label{44488}
\dfrac{d}{dt}\left(\dfrac{\partial\,\Theta}{\partial\dot{y}_k}\right)
-\left(\dfrac{\partial\,\Theta}{\partial{y}_k}\right)=0,
\end{equation}
Analogously to the Voronets case we study the subcase when
$\Theta=0.$ We choose
$L_0=\tilde{L}\left(\textbf{y},\dot{\textbf{x}},
\dot{\textbf{y}}\right)-\tilde{L}\left(\textbf{y},\Phi,
\dot{\textbf{y}}\right):=\tilde{L}-L^*.$ We assume that the functions
$\tilde{L}$ and $L^*$ are such that
\begin{equation}\label{001}
E_{k}L^*=-\displaystyle\sum_{\alpha=1}^{s_1}E_k(L_\alpha
)\dfrac{\partial
\tilde{L}}{\partial\dot{x}_\alpha}\Psi_\alpha,\end{equation} where
$\Psi_\alpha=\left.\dfrac{\partial
\tilde{L}}{\partial\dot{x}_\alpha}\right|_{L_1=\ldots=L_{s_1}=0}$
and
\[
E_k(\tilde{L})=\displaystyle\sum_{\alpha=1}^{s_1}\dfrac{d}{dt}\left(\dfrac{\partial
\tilde{L}}{\partial\dot{x}_\alpha}\right)\dfrac{\partial
L_\alpha}{\partial\dot{y}_k},
\]
for $ k=1,\ldots,s_2.$

\smallskip

By inserting $\dot{x}_j=\displaystyle\sum_{k=1}^{s_2}a_{j\,k}(\textbf{y})\dot{y}_k,\quad
j=1,\ldots,s_1,$ into equations \eqref{001} we obtain system \eqref{Chh2}.
Consequently system \eqref{001} is an extension of the
classical Chaplygin equations when the constraints
are nonlinear.
\end{proof}

\smallskip

For the generalized Chaplygin systems the Lagrangian $L$ takes the
form
\begin{equation}\label{bbb}
L=\tilde{L}(\textbf{y},\dot{\textbf{x}},\dot{\textbf{y}})-
\tilde{L}(\textbf{y},\Phi,\dot{\textbf{y}})-\displaystyle\sum_{j=1}^{s_1}\left(\dfrac{\partial
L^{*}}{\partial\dot{x}_j}+C_j\right)\left(\dot{x}_j-\Phi_j(\textbf{y},\dot{\textbf{y}})\right)-
\displaystyle\sum_{j=}^{s_2}\lambda^0_j\dot{y}_j,
\end{equation}
for $j=1,\ldots,s_1$
where the constants $C_j$ for
$j=1,\ldots,s_1$ are arbitrary. Indeed, from \eqref{4000} follows that
\[\lambda_j=\dfrac{\partial
L_0}{\partial\dot{x}_j}+C_j=\dfrac{\partial
L^{*}}{\partial\dot{x}_j}+C_j.\] By inserting in \eqref{21} $L_0=\tilde{L}-L^*$ and $\lambda_j$ for $j=1,\ldots,s_1$
we obtain function $L$ of \eqref{bbb}.

\smallskip

We note that Vorones and Chaplygin equations with nonlinear constraints in the velocity
was also obtained by Rumiansev and Sumbatov $($see \cite{Rumiansev,Sumbatov}$)$.

\smallskip

\textbf{Example 4.} We shall illustrate the above results in the
following example.

In the Appel's and Hamel's investigations the following mechanical
system was analyzed. A weight of mass $m$ hangs on a thread which
passes around the  pulleys and is wound round the drum of radius
$a$. The drum is  fixed to a wheel of radius $b$ which rolls
without sliding on a  horizontal plane, touching it at the point
$B$ with the coordinates $(x_B,\,y_B)$. The legs of the
 frame that support the pulleys and keep the plane of the wheel
vertical slide on the horizontal plane without  friction. Let
$\theta$ be the angle between the plane of the wheel and the $Ox$
axis; $\varphi$ the angle of the rotation of the wheel in its own
plane; and $(x,y,z)$ the coordinates of the mass $m.$ Clearly,
$$\dot{z}=b\dot\varphi,\quad b>0 .$$
The coordinates of the point $B$ and the coordinates of the mass
are related as follows (see page 223 of \cite{NF} for a picture)
$$x=x_B+\rho\cos\theta,\quad y=y_B+\rho\sin\theta .$$
The condition of rolling without sliding leads to the equations of
nonholonomic constraints:
$$\dot{x}_B=a\cos\theta\dot{\varphi},\quad
\dot{y}_B=a\sin\theta\dot{\varphi}\quad b>0 .
$$
We observe that the constraints $\dot{z}=b\dot{\varphi}$ admits
the representation
\[\dot{z}=\dfrac{b}{a}\sqrt{\dot{x}^2+\dot{y}^2-\rho^2\dot\theta^2}.\]
 Denoting by $m_1,\,A$ and $C$ the
mass and the  moments of inertia of the wheel and neglecting the
mass of the frame, we obtain the following expression for the
Lagrangian function
\[
\tilde{L}
=\dfrac{m+m_1}{2}\left(\dot{x}^2+\dot{y}^2\right)+\dfrac{m}{2}\dot{z}^2+m_1\rho\dot{\theta}\left(\sin
\theta\dot{x}-\cos\theta\dot{y}\right)+\dfrac{A+m_1\rho^2}{2}\dot{\theta}^2+\dfrac{C}{2}\dot{\varphi}^2
-mgz.
\]
The equations of the constraints are
\[
\dot{x}-a\cos\theta\dot\varphi+\rho\sin\theta\dot\theta=0
,\quad\dot{y}-a\sin\theta\dot\varphi-\rho\cos\theta\dot\theta=0
,\quad\dot{z}-b\dot\varphi =0,
\]
Now we shall study the motion of this constrained Lagrangian
 in the coordinates
\[
x_1=x,\,x_2=y,\,x_3=\dot{\varphi}, y_1=\theta,\,y_2=z.\] i.e., we
shall study the nonholonomic system with Lagrangian
\[\begin{array}{rl}
\tilde{L}=&\tilde{L}\left(y_1,\,y_2,\,\dot{x}_1,\,\dot{x}_2,\,\dot{x}_3,\,\dot{y}_1,\,\dot{y}_2\right)\vspace{0.20cm}\\
=&\dfrac{m+m_1}{2}\left(\dot{x}^2_1+\dot{x}^2_2\right)+\dfrac{C}{2}\dot{x}^2_3+\dfrac{J}{2}\dot{y}^2_1+
\dfrac{m}{2}\dot{y}^2_2 +m_1\rho\dot{y_1}\left(\sin\,
y_1\dot{x}_1-\cos\,y_1\dot{x}_2\right)-\dfrac{mg}{b}y_2,
\end{array}
\]
and with the constraints
\[\begin{array}{rl}
l_1=&\dot{x}_1-\dfrac{a}{b}\,\dot{y}_2\cos
y_1-\rho\dot{y}_1\sin\,y_1=0,\vspace{0.2cm}\\
 l_2=&\dot{x}_2-\dfrac{a}{b}\dot{y}_2\sin
y_1+\rho\dot{y}_1\cos\,y_1=0 ,\vspace{0.2cm}\\
l_3=&\dot{x}_3-\dfrac{1}{b}\dot{y}_2 =0.
\end{array}\]
Thus we have a classical Chaplygin system.  To determine
differential equations \eqref{001} and the
transpositional relations \eqref{449}  we define the functions:
\[\begin{array}{rl}
L^*=-&\tilde{L}|_{l_1=l_2=l_3=0}=
\dfrac{m(a^2+b^2)m+a^2m_1+C}{2b^2}\dot{y}^2_2+\dfrac{m
\rho^2+J}{2}\dot{y}^2_1-\dfrac{mg}{b}y_2,\vspace{0.20cm}\\
L_1=&l_1,\quad L_2=l_2,\quad L_3=l_3,\quad L_4=\dot{y}_1,\quad
L_5=\dot{y}_2.
\end{array}
\]
After some computations we obtain that the matrix $A$ (see formulae
\eqref{5000}) in this case
 becomes
\[
A=\left(
\begin{array}{cccccc}
0&0&0&-\dfrac{a}{b}\dot{y}_2\sin\,y_1&\dfrac{a}{b}\dot{y}_1\sin\,y_1\vspace{0.20cm}\\
0&0&0&\dfrac{a}{b}\dot{y}_2\cos\,y_1&-\dfrac{a}{b}\dot{y}_1\cos\,y_1\vspace{0.20cm}\\
0&0&0&0&0\\
0&0&0&0&0\\
0&0&0&0&0
\end{array} \right),
\]
thus  differential equations \eqref{001}  take the form
\[\begin{array}{rl}
&\left(m\rho^2+J\right)\ddot{y}_1+\dfrac{a\rho m}{b}\dot{y}_1\dot{y}_2=0,\vspace{0.20cm}\\
&\left((m+m_1)a^2+mb^2\right)\ddot{y}_2-{mab\rho}\dot{y}^2_1=-mgb.\end{array}\]
Assuming that $(m+2m_1)\rho^2+J\ne 0$ and by considering the
existence of the first integrals
\[\begin{array}{rl}
C_2=&\dot{y}_1\exp{\left(-\dfrac{a\varrho
m y_2}{b\left(m\rho^2+J\right)}\right)},\vspace{0.2cm}\\
h=&\dfrac{\left((m+m_1)a^2+mb^2\right)}{2}\dot{y}^2_2+\dfrac{b^2\left(m\rho^2+J\right)}{2}\dot{y}^2_1+mgby_2,\\
\end{array}
\]
after the integration of these first integrals we obtain
\[\begin{array}{rl}
&\displaystyle\int\dfrac{\sqrt{(m+m_1)a^2+mb^2}dy_2}{\sqrt{2h-2mgby_2-
{b^2\left(m\rho^2+J\right)}C_3\exp{\left(\dfrac{a\rho\,my_2}{b{m\rho^2+J}}\right)}}}=t+C_1,\vspace{0.30cm}\\
&y_1(t)=C_3+C_2\displaystyle\int\exp{\left(2\dfrac{a\rho\,my_2(t)}{b{m\rho^2+J}}\right)}dt.
\end{array}
\]
Consequently, if $\rho=0$ then
\[y_1=C_3+C_2t,\qquad \displaystyle\int\dfrac{\sqrt{(m+m_1)a^2+mb^2}dy_2}{\sqrt{2h-2mgby_2-
{J}C_3}}=t+C_1.
\]
 Hamel in \cite{Hamel} neglect the mass of the wheel
($ m_1=J=C=0$). Under these conditions the previous equations
become
\[\begin{array}{rl}
&\rho^2\ddot{y}_1+\dfrac{a\rho }{b}\dot{y}_1\dot{y}_2=0,\vspace{0.20cm}\\
&(a^2+b^2)\ddot{y}_2-ab\rho\dot{y}^2_1=-gb\end{array}\] Appell and
Hamel obtained the example of nonholonomic system with nonlinear
constraints by means of the passage to the limit $\rho\to 0.$
However, as a result of this limiting process, the order of the
system of differential equations is reduced, i.e., they become
degenerate. In \cite{NF} the authors study the motion of the
nondegenerate system for $\rho>0$ and $\rho<0.$  From these studies
it follows that the motion of the nondegenerate system ($\rho\ne 0$)
and degenerate system ($\rho\to 0$) differ essentially. Thus the
Appell-Hamel example with nonlinear constraints is incorrect.

 The transpositional relations \eqref{449}
become
\[\begin{array}{rl}
\delta\dfrac{dx_1}{dt}-\dfrac{d\delta\,x_1}{dt}=&\dfrac{a}{b}\sin\,y_1
\left(\dfrac{dy_1}{dt}\delta{y_2}-\dfrac{dy_2}{dt}\delta{y_1}\right),\vspace{0.30cm}\\
\delta\dfrac{dx_2}{dt}-\dfrac{d\delta\,x_2}{dt}=&\dfrac{a}{b}\cos\,y_1
\left((\dfrac{dy_1}{dt}\delta{y_2}-\dfrac{dy_2}{dt}\delta{y_1}\right),\vspace{0.30cm}\\
\delta\dfrac{dx_3}{dt}-\dfrac{d\delta\,x_3}{dt}=&0,\quad
\delta\dfrac{dy_1}{dt}-\dfrac{d\delta\,y_1}{dt}=0,\quad
\delta\dfrac{dy_2}{dt}-\dfrac{d\delta\,y_2}{dt}=0.
\end{array}
\]
Clearly these relations are independent of $\varrho,\,A,\,C$ and
$m_1.$

\section{Consequences of Theorems \ref{A1} and \ref{A2} and  the proof of Corollary \ref{CC}.}

We observe the following important aspects from  Theorems
\ref{A1} and \ref{A2}.

\smallskip

(I) Conjecture \ref{CC11} is supported by the following facts. (a)
As a general rule the constraints studied in classical mechanics are
linear in the velocities. However Appell and Hamel  in 1911,
considered an artificial example with a constraint nonlinear in the
velocity . As it follows from \cite{NF} (see example 4) this
constraint does not exist in the Newtonian mechanics.

\smallskip

(b) The idea developed for some authors (see for instance
\cite{Birkhoff}) to construct a theory in Newtonian mechanics, by
allowing that the field of force depends on the acceleration, i.e.
function of $\ddot{\textbf{x}}$ as well as of the position
$\textbf{x},$ velocity $\dot{\textbf{x}},$ and the time $t$ is
inconsistent with one of the fundamental postulates of the Newtonian
mechanics: when two forces act simultaneously on a particle the
effect is the same as that of a single force equal to the resultant
of both forces (for more details see \cite{Pars} pages 11--12).
Consequently the forces depending on the acceleration are not
admissible in Newtonian dynamics. This does not preclude their
appearance in electrodynamics, where this postulate does not hold.

\smallskip
(c) Let $T$ be the kinetic energy of the constrained Lagrangian
systems. We consider the generalization of the Newton law: {\it the
acceleration $($see \cite{Singe,Oliva}$)$
\[\dfrac{d}{dt}\dfrac{\partial T
}{\partial\dot{\textbf{x}}}-\dfrac{\partial T
}{\partial{\textbf{x}}}\] is equal to the force} $\textbf{F}.$ Then in the
differential equations \eqref{210} with $L_0=T$  we obtain that
the field of force $\textbf{F}$ generated by the constraints is
\[\textbf{F}=\left(W^{-1}_1\Omega_1\right)^T
\dfrac{\partial{T}}{\partial{\dot{\textbf{x}}}}+
W^T_1\dfrac{d}{dt}\lambda:=\textbf{F}_1+\textbf{F}_2.
\]
The field of force
$\textbf{F}_2=W^T_1\dfrac{d}{dt}\lambda=\left(F_{21},\ldots,F_{2N}\right)$
is called the {\it reaction force of the constraints}.
  What is the meaning of the force
\begin{equation}\label{F1}
\textbf{F}_1=\left(W^{-1}_1\Omega_1\right)^T
\dfrac{\partial{T}}{\partial{\dot{\textbf{x}}}}\,?
\end{equation}

\smallskip

 If the constraints are nonlinear in the
velocity, then $\textbf{F}_1$ depends on $\ddot{\textbf{x}}.$
Consequently in Newtonian mechanics does not exist a such field of
force. Therefore, the existence of nonlinear constraints in the velocity
 and the meaning of force $\textbf{F}_1$
must be sought outside of the Newtonian model.

For example, for the Appel-Hamel constrained Lagrangian  systems
studied in the previous subsection we have that
\[
\textbf{F}_1=\left(-\dfrac{a^2\dot{x}}{\dot{x}^2+
\dot{y}^2}(\dot{x}\ddot{y}-\dot{y}\ddot{x}),\,
\dfrac{a^2\dot{y}}{\dot{x}^2+\dot{y}^2}(\dot{x}\ddot{y}-\dot{y}\ddot{x}),\,0\right).
\]
For the generalized Voronets systems and locally for any
nonholonomic constrained Lagrangian systems
from the equations \eqref{48} we obtain that the field of force
$\textbf{F}_1$ has the following components
\begin{equation}\label{Kk01}
\begin{array}{rl}
F_{k\,1}=&\displaystyle\sum_{\alpha=1}^{s_1}E_k L_\alpha\,
\dfrac{\partial
L_{0}}{\partial\dot{x}_\alpha}\\
=&\displaystyle\sum_{j=1}^N
\sum_{\alpha=1}^{s_1}\left(\dfrac{\partial^2L_\alpha}{\partial\dot{x}_k\dot{x}_j}\dfrac{\partial
L_0}{\partial\dot{x}_\alpha}\ddot{x}_j+\dfrac{\partial^2L_\alpha}{\partial\dot{x}_k\partial{x}_j}\dfrac{\partial
L_0}{\partial\dot{x}_\alpha}\dot{x}_j\right)+
\displaystyle\sum_{\alpha=1}^{s_1}\dfrac{\partial^2L_\alpha}{\partial\dot{x}_k\partial
t}\dfrac{\partial L_0 }{\partial\dot{x}_\alpha},\quad\mbox{for}\quad
k=1\ldots,N,\quad s_1=M.
\end{array}
\end{equation}
consequently such field of force does not exist in Newtonian
mechanics if the constraints are nonlinear in the velocity.

\smallskip

 (II)  Equations \eqref{210} can be rewritten in the form
\begin{equation}\label{K01}
G\ddot{\textbf{x}}+\textbf{f}(t,\textbf{x},\dot{\textbf{x}})=0,
\end{equation}
where $G=G(t,\textbf{x},\dot{\textbf{x}})$ is the matrix $\left(
G_{j,k}\right)$ given by
\[
G_{jk}=\dfrac{\partial^2L_0}{\partial\dot{x}_j\partial\dot{x}_k}-
\displaystyle\sum_{n=1}^N\dfrac{\partial
A_{nk}}{\partial\ddot{x}_j}\dfrac{\partial
L_0}{\partial\dot{x}_n},\quad j,k=1,\ldots,N,
\] and $\textbf{f}(t,\textbf{x},\dot{\textbf{x}})$ is a convenient
vector function. If $\det G\ne 0$ then equation \eqref{K01} can be
solved with respect to $\ddot{\textbf{x}}.$ This implies, in
particular that the motion of the mechanical system at time
$\overline{t}\in[t_0,\,t_1]$ is uniquely determined, i.e. the
{\it principle of determinacy} (see for instance \cite{Arnold}) holds for the mechanical
systems with equation of motion given in \eqref{210}.

\smallskip

 In particular for the Appel-Hamel constrained Lagrangian
 systems we have (see formula \eqref{315}) that
 \[\begin{array}{rl}
\textbf{x}=&\left(x,\,y,\,z\right)^T,\quad
\textbf{f}=\left(\dfrac{a\dot{x}}{\sqrt{\dot{x}^2+\dot{y}^2}}\dot{\lambda},
\,\dfrac{a\dot{y}}{\sqrt{\dot{x}^2+\dot{y}^2}}\dot{\lambda},\,g-\dot{\lambda}\right)^T\vspace{0.20cm}\\
G=&\left(
\begin{array}{cccc}
1+\dfrac{a^2\dot{y}^2}{\dot{x}^2+\dot{y}^2}&-\dfrac{a^2\dot{x}\dot{y}}{\dot{x}^2+\dot{y}^2}&0\\
-\dfrac{a^2\dot{x}\dot{y}}{\dot{x}^2+\dot{y}^2}&1+\dfrac{a^2\dot{x}^2}{\dot{x}^2+\dot{y}^2}&0\\
0&0&1\end{array} \right) ,\quad |G|=1+a^2.
\end{array}
\]
So in the Appel--Hamel system the principle of determinacy  holds.

\smallskip

(III)\begin{proof}[Proof of Corollary  \ref{CC}] {From} Theorems
\ref{A1} and \ref{A2} (see formulas \eqref{211} and \eqref{0211})
and from  all examples which we
 gave in the previous sections we see that are examples with zero transpositional relations
 and examples where all they are not zero. By contrasting the MVM with the Lagrangian mechanics
 we obtain that for the unconstrained Lagrangian systems the
 transpositional relations are always zero. Thus we have the proof of the
corollary.
\end{proof}

\section*{Acknowledgements}

The first author is partially supported by a MINECO/FEDER grant
number MTM2009-03437, an AGAUR grant number 2009SGR-410, ICREA
Academia and FPZ--PEOPLE--2012--IRSES--316338 and 318999. The second
author was partly supported by the Spanish Ministry of Education
through projects TSI2007-65406-C03-01 ``E-AEGIS" and Consolider
CSD2007-00004 ``ARES".

\end{document}